\documentclass[a4paper,12pt]{article}
\usepackage[margin=1.2in,bottom=1in,top=1in]{geometry}
\usepackage{todonotes}
\setuptodonotes{inline}
\usepackage{pdflscape}
\usepackage{arydshln}
\usepackage{graphicx} 
\usepackage{ragged2e} 
\usepackage{comment} 
\usepackage{enumitem}
\usepackage{caption}
\captionsetup[table]{font={stretch=1}} 
\usepackage{hyperref}
\hypersetup{
colorlinks=true,
urlcolor= blue,
citecolor=blue,
linkcolor= black
}
\usepackage{braket}
\usepackage{framed}
\usepackage{soul}
\definecolor{blue}{RGB}{33, 118, 199}

\definecolor{green}{RGB}{0, 128, 0}

\definecolor{red}{RGB}{230, 0, 20}

\usepackage{tikz-cd}
\usepackage{amsmath}
\usepackage{amssymb}
\usepackage{amsthm}
\newtheorem{definition}{Definition}
\newtheorem{prop}{Proposition}
\newtheorem{theorem}{Theorem}
\newtheorem{corollary}{Corollary}
\newtheorem{lemma}{Lemma}
\usepackage{comment}
\usepackage{dsfont}
\usepackage{cancel}
\usepackage{longtable}
\usepackage{empheq}
\usepackage{multirow}
\newcommand{\N}{\mathcal{N}}

\newcommand{\R}{\mathbb{R}}
\newcommand{\Z}{\mathbb{Z}}

\usepackage[T1]{fontenc}
\usepackage{fancyhdr}

\pagestyle{fancy}

\rhead{}

\begin{document}

\begin{titlepage}
\vspace*{15mm}
\begin{center}
\begin{FlushLeft}
{ \Large \sffamily  \textbf{The Quasicrystalline String Landscape} }

\rule{\linewidth}{1pt}
\vspace*{5mm}
{
Zihni Kaan Baykara$^1$, Houri-Christina Tarazi{$^{2,3}$}, Cumrun Vafa$^1$\\}
\vspace{.6cm}
{ \itshape $^1$ Department of Physics, Harvard University, Cambridge, Massachusetts, USA}\par
\vspace{.2cm}
{\itshape $^2$ Enrico Fermi Institute \& Kadanoff Center for Theoretical Physics, University of Chicago, Chicago, IL 60637, USA}\par
\vspace{.2cm}
{\itshape $^3$ Kavli Institute for Cosmological Physics, University of Chicago, Chicago, IL 60637, USA}\par
\vspace{.2cm}\par
\vspace{-.3cm}

\vspace{1cm}
\begin{abstract}
In this work we investigate a largely unexplored non-geometric corner of the string landscape: the quasicrystalline orbifolds.  These exist at special points of the Narain moduli leading to frozen moduli and large quantum symmetries. 
Here we complete the classification and construction of quasicrystalline Narain lattices and use this to explore supersymmetric compactifications in $4\leq D\leq 6$ and with $4\leq Q\leq 16$ supercharges, leading to novel theories including theories with large quantum symmetries at all points in the moduli space. 
We anticipate these constructions will have many applications and in subsequent papers we apply these techniques to construct non-geometric F-theory models as well as new non-supersymmetric tachyon free models.  Similarly these constructions can lead to constructing exotic matter representations in the string landscape.
\end{abstract}
\end{FlushLeft}
\end{center}
\end{titlepage}

\tableofcontents

\newpage

\section{Introduction}

Investigation of the string theory landscape has proven to be a fruitful endeavour for both understanding the possible string theory vacua and hence connecting to phenomenology but also to understand fundamental principles of quantum gravity as characterized by the Swampland program \cite{Vafa:2005ui}. Since string theory models seem to carry much of the workload, having a clear picture of the possible landscape seems crucial.  A special focus on exotic constructions seems natural in order to find the `boundaries' of the string landscape. For example, the largest class of supersymmetric compactifications we know  are provided by Calabi-Yau manifolds but such theories usually come with a large number of massless moduli and a "universal hypermultiplet" as a consequence of their geometric nature.  However, non-geometric constructions, and in particular
asymmetric orbifold models \cite{Narain:1986qm} have proven to be valuable allies in the investigation of finding counter-examples to such naive expectations based on geometric string landscape.

In particular, in \cite{Dabholkar:1998kv} various string islands were engineered with 16 supercharges and in \cite{Baykara:2023plc} it was demonstrated that new non-geometric theories can be constructed using these methods which showed the expectation of geometric string constructions do not hold:
The `universal hypermultiplet' ended up being not so universal after all and does not exist in some of these constructions. Another example is that the celebrated Kodaira condition, familiar in the context of F-theory constructions, were shown not to be valid in some classes of asymmetric orbifolds.

Moreover, an interesting observation of \cite{Kachru:1995wm} was that non-geometric models may be connected to geometric models by utilizing transitions that are beyond the supergravity regime of the geometric model. For example, in \cite{Baykara:2023plc} it was demonstrated that there could be small volume transition in F-theory that freeze the base moduli.

In this work we continue these investigations by considering the most extreme case of non-geometric asymmetric orbifolds called quasicrystalline orbifolds, first introduced in \cite{Harvey:1987da}. Such orbifolds are characterized by symmetries of the Narain lattice that do not descend from discrete symmetries of some bulk torus in the traditional way but correspond only to symmetries of the momentum/winding Narain lattice, where the left and right momentum lattices are quasicrystals.  Necessary features for the existence of Narain lattices with such symmetries were pointed out in \cite{Harvey:1987da} and used to construct some new orbifold models.  In this paper we also show the sufficiency of these conditions by actually constructing the quasicrystalline Narain lattice.

Quasicrystalline symmetries are well studied in the context of crystallography and have various applications in phases of matter \cite{PhysRevLett.53.2477}. In string theory, lattices and crystallography show up in the context of compactification characterizing the physical charges and spectrum. As usual they can be thought of as compactifications on some geometric torus with some Kalb-Ramond B field turned on. In this work we employ their useful orbifold symmetries to construct new supersymmetric models with a small number of moduli and exotic representations of matter fields.

Interestingly, in theories with 16 supercharges we find that all quasicrystalline compactifications in 6d are part of the geometric $K_3$ moduli space. The $K_3$ automorphisms were classified in \cite{Gaberdiel:2011fg} which we use to identify with some of these quasicrystalline points.
Additionally, some of these orbifolds will provide evidence for the existence of more 6d string islands \cite{Dabholkar:1998kv} as predicted in \cite{Fraiman:2022aik} which will be constructed in \cite{Baykara:tobe}.
Similarly, we consider compactifications to dimensions $3\leq d\leq 6$ and with $4\leq \mathcal{Q} \leq 16$ supercharges and we identify the interesting features and connections to geometric models when applicable. One particularly interesting observation in quasicrystalline compactifications is that we have large discrete symmetries which we identify as the quantum symmetries of the orbifold.

The organization of the paper is as follows. In \autoref{sec2:lattices} we  review basic properties of Narain compactifications and lattices with a focus on identifying all the possible abelian automorphisms corresponding to irreducible \textit{crystalline} and \textit{quasicrystalline} symmetries. We also provide a proof that for any given choice of such symmetry it will be possible to construct an even unimodular lattice and hence provide a consistent string theory.
In \autoref{sec3:Orbifolds} we review basic properties of orbifolds that will be used in all the constructions.
Finally, in \autoref{sec4:susy} we study particular examples of \textit{quasicrystalline orbifolds} and hence produce new lower dimensional theories with ${\cal Q }=4,8,16$ supercharges in dimensions $d=4,5,6$.

\section{Lattices and Symmetries} \label{sec2:lattices}
In this section we review string compactifications on tori $T^d$ and their symmetries, with a focus on the exotic class of quasicrystalline compactifications. Torus compactifications are characterized by an even unimodular lattice $\Gamma$ of dimension $2d$ for type II and $2d+16$ for heterotic, whose symmetries lift to symmetries of the compactification. Quasicrystalline compactifications use the fact that the lattice $\Gamma$ can have a higher dimensional symmetry that can not be accommodated in $d$ dimensions. In other words, these symmetries can not be realized crystallographically on $T^d$.

This section is an overview of the more mathematical material in \autoref{app:Narain}, where proofs for the claims made in this section can be found. More generally, readers who would like to gain familiarity with lattice theory methods will find the appendix useful.

In \autoref{sub:Narain-lattices}, we review compactifications of string theory on tori $T^d$ and their equivalent characterizations by Narain lattices $\Gamma$. In \autoref{symmetries}, we explain the correspondence between the symmetries of the lattice $\Gamma$, T-dualities and symmetries of string theory. Lastly, in \autoref{quasicomp}, we review and extend the quasicrystalline compactifications \cite{Harvey:1987da}.

\subsection{Narain lattices}\label{sub:Narain-lattices}
String theory compactified on a $d$-dimensional torus $T^d$ has ground state momenta taking values in the compact dimensions in an embedded even self-dual lattice $\mathrm{\Gamma}^{d+x;d}\subset \mathbb R^{d+x;d}$, where $x=16$ for heterotic strings and $x=0$. The embedded lattice $\Gamma^{d+x;d}$ is called the \textit{Narain lattice} \cite{Narain:1985jj, Narain:1986am}. Torus compactifications are completely characterized by the choice of the Narain lattice.

We emphasize that the Narain lattice is not only a lattice, but a choice of polarization, i.e. how it splits to left and right momenta. There is a unique even unimodular lattice $\mathrm{II}^{d+x;d}$ up to isomorphism (Prop. \ref{prop:unique-lattice}). Therefore, all Narain lattices are lattice isomorphic. What actually determines the physics of the compactification is the \textit{choice of embedding} $\mathrm{II}^{d+x;d}\hookrightarrow \Gamma^{d+x;d}\subset \mathbb{R}^{d+x;d}$.

We denote the ground state left and right moving momenta on the torus as $(p_L;p_R)\in\Gamma^{d+x;d}\subset \mathbb R^{d+x;d}$. The Narain lattice determines the mass spectrum and the choice of background fields on the torus $T^d$ for type II and heterotic strings as follows.

\textbf{Type II:}

The mass spectrum is given by
\begin{align}
    M^2_L&=N_L+{p_L^2\over 2}-\frac 1 2,\\
    M^2_R&=N_R+{p_R^2\over 2}-\frac 1 2,
\end{align}
where $N_L,N_R$ are left and right moving oscillator numbers and $(p_L;p_R)\in \Gamma^{d;d}$ are the left and right ground state momenta.

The correspondence between the Narain lattice and the background fields is \cite{Blumenhagen:2013fgp}
\begin{align}
    (p_I)_{L,R}=\sqrt{\frac{\alpha'}{2}} \left(\pi_I \pm \frac 1 {\alpha'}(G_{IJ}\mp B_{IJ}) L^J\right).
\end{align}
Here, $G_{IJ},B_{IJ}$ are the background fields on $T^d$, $\pi_I$ is the center of mass momentum along the compact dimensions, and $L^I$ is the winding length
\begin{align}
    X^I(\sigma+2\pi ,\tau) = X^I(\sigma,\tau) + 2\pi L^I,
\end{align}
with $I=1,\dots,d$.

\textbf{Heterotic:}

The mass spectrum is given by
\begin{align}
   & M^2_L=N_L+{p_L^2\over 2}-1 \\
    & M^2_R=N_R+{p_R^2\over 2}-\frac 1 2,
\end{align}
where $N_L,N_R$ are left and right moving oscillator numbers and $(p_L;p_R)\in \Gamma^{d+16;d}$ are the left and right ground state momenta.

The correspondence between the Narain lattice and the background fields is \cite{Blumenhagen:2013fgp}
\begin{align}
    (p_I)_{L,R} &=\sqrt{\frac{\alpha '}{2}} \left( \tilde{\pi}_I - \frac{1}{\alpha'}(B_{IJ}\mp G_{IJ}) L^J - \pi_A A^A_I-\frac{1}{2} A_{IA} A^A_J L^J\right)\\
    (p_A)_L &= \pi_A + A_{IA}L^I.
\end{align}
Here, $I=1,\dots, d$ ranges over the $T^d$ coordinates and $A=d+1,\dots,d+16$ ranges over the 16 internal left-moving bosonic coordinates. In addition, $\tilde{\pi}_I$ is the center of mass momentum in the $T^d$ directions, $L^J$ is the winding length, and $G_{IJ},B_{IJ},A_{IA}$ are the background metric, antisymmetric, and gauge fields. Lastly, $\pi_A$ is the momentum in the 16 internal left-moving bosonic coordinates.

\subsection{Lattice automorphisms}\label{symmetries}

In this section, we describe the correspondence between lattice automorphisms of the Narain lattice $\Gamma^{d+x;d}$ and dualities and symmetries of string theory.

The automorphism group of the unique even unimodular lattice $\mathrm{Aut}(\mathrm{II}^{d+x;d})$ (which is the same as that of the Narain lattice $\mathrm{Aut}(\Gamma^{d+x;d})$) is called the \textit{T-duality group}. The T-duality group includes the familiar T-dualities sending the radii to their inverses
\begin{align}\label{eq:T-duality}
    R \mapsto \frac{\alpha '}{R},
\end{align}
as well as more complicated T-duality actions together with discrete isometries of $T^d$. Note that the T-duality group is independent of the specific choice of the Narain lattice.

Some T-duality group elements can also be symmetries of the theory depending on the choice of Narain lattice, i.e. the embedding $\mathrm{II}^{d+x;d}\hookrightarrow \Gamma^{d+x;d}\subset \mathbb R^{d+x;d}$. In particular, if an automorphism $\theta \in \mathrm{Aut}(\Gamma^{d+x,d})\subset \mathrm{O}(d+x,d,\mathbb R)$ decomposes (by virtue of the embedding) into left and right rotations as $\theta=(\theta_L;\theta_R)$, then it acts as a symmetry on the worldsheet CFT. We call the group of such automorphisms the \textit{Narain symmetry group}
\begin{align}
    \mathrm{Sym}(\Gamma^{d+x;d}):=\mathrm{Aut}(\Gamma^{d+x;d}) \cap \left(\mathrm{O}(d+x,\mathbb R)\times \mathrm{O}(d,\mathbb R)\right).
\end{align}
These are the T-dualities that act as symmetries on the worldsheet. An example is T-duality at the self-dual radius.

Note that $\theta \in \mathrm{Aut}(\Gamma^{d+x;d})$ must be similar to an integer matrix
\begin{align}\label{eq:integer-matrix-cond}
    S\theta S^{-1}\in \mathrm{O}(d+x,d,\mathbb Z)
\end{align}
with $S$ a real matrix. This is because the symmetry $\theta$ is an automorphism of a lattice, therefore its action in the basis given by the generators of the lattice must be an integer matrix. Integrality ensures that lattice elements map to lattice elements. The necessary and sufficient condition for a symmetry $\theta$ to be similar to an integer matrix is given in \autoref{thm:decomposition}. It is equivalently stated as the following. Suppose $N$ is the smallest integer such that $\theta^N=1$. If we fix an integer $p$ that divides $N$, and consider all integers $r<p$ that are coprime with $p$ as $\gcd (r,p)=1$,\footnote{Such integers $r$ are called totatives of $p$.} then all phases $e^{2\pi i r/p}$ must appear with the same multiplicity as eigenvalues of $\theta$. 

If the symmetry $\theta$ acts on the left and right movers on the worldsheet in the same way as $\theta_L=\theta_R$, then it is called a \textit{symmetric action}, and corresponds to a geometric rotation of the target space torus $T^d$. If the actions on the left and right are unequal $\theta_L\neq \theta_R$, then it is called an \textit{asymmetric action} and there is no corresponding action on the target space coordinates.

We give an example for each. For a symmetric action example, consider a string compactification on $T^2$ with complex modulus at $\tau=e^{2\pi i/3}$. There is a geometric $\mathbb Z_3$ symmetry of the target space torus that lifts to a $\mathbb Z_3$ action on the worldsheet with $\theta_L=\theta_R=R(2\pi /3)$ in an appropriate basis, where $R$ is a rotation matrix. As an asymmetric action example, T-duality \eqref{eq:T-duality} at the self-dual radius becomes a symmetry of the worldsheet. It is an asymmetric action, for which $\theta_R=1,\theta_L=-1$ with no corresponding action on the target space coordinates.

\subsection{Quasicrystalline compactifications}\label{quasicomp}

Another useful classification for Narain symmetries $\theta$ that we now describe is whether they are crystallographic or quasicrystallographic. 

A symmetry $\theta$ is a \textit{crystallographic symmetry} if its action can be written as an integer matrix on the left and right separately up to conjugation as
\begin{align}
    \theta=(\theta_L;\theta_R), \qquad Q\theta_LQ^{-1} \in \mathrm{O}(d+x,\mathbb Z),P\theta_RP^{-1} \in \mathrm{O}(d,\mathbb Z),
\end{align}
where $Q,P$ are real matrices. A crystallographic symmetry can either be a symmetric or asymmetric action.

A symmetry $\theta$ is a \textit{quasicrystallographic symmetry} if it is similar to an integer matrix but not separately on the left and right
\begin{align}
    S\theta S^{-1}\in \mathrm{O}(d+x,d,\mathbb Z), \qquad Q\theta_L Q^{-1}\not\in \mathrm{O}(d+x,\mathbb Z),P\theta_R P^{-1}\not\in \mathrm{O}(d,\mathbb Z)
\end{align}
for any $Q,P$ real matrices. This reflects the fact that the combination of the left and right actions together are irreducible; it is not possible to consider one side without the other. A quasicrystallographic symmetry is always an asymmetric action. A string theory compactification with a quasicrystallographic symmetry acting on its Narain lattice is called a \textit{quasicrystalline compactification} and was first introduced in \cite{Harvey:1987da}.

We now give an example for each case. For a crystallographic symmetry example, we can construct a Narain lattice using the weight $\Lambda_W(\mathfrak{g})$ and root $\Lambda_R(\mathfrak{g})$ lattices of a simply laced Lie algebra $\mathfrak g$ as
\begin{equation}
    \Gamma^{d;d}(\mathfrak g)=\{p_L,p_R\in \Lambda_W(\mathfrak g)|p_L-p_R\in \Lambda_R(\mathfrak g)\}.
\end{equation}
This is the usual construction for heterotic compactifications with $\mathfrak g$ gauge enhancement. For concreteness, we can choose $\mathfrak g = A_2$ and construct $\Gamma^{2;2}$ explicitly using the gluing construction of Appendix \ref{app:glue}
\begin{align}\label{eq:A2-Narain}
    \Gamma^{2;2}(A_2) = (A_2;A_2) \cup (A_2;A_2)+\left(\frac 1 3, \frac 2 3;\frac 1 3, \frac 2 3\right) \cup (A_2;A_2)+2\left(\frac 1 3, \frac 2 3;\frac 1 3, \frac 2 3\right).
\end{align}
There are both symmetric and asymmetric crystallographic actions on the lattice. For example, the asymmetric action $(R(2\pi /3);I)$, which acts with a $\mathbb Z_3$ on the left $A_2$ and identity on the right $A_2$, is a symmetry of the Narain lattice \eqref{eq:A2-Narain}. It is asymmetric and crystallographic. As another example, $(R(2\pi/3);R(2\pi/3))$ is a symmetric and crystallographic action.

For a quasicrystalline symmetry example, we define a Narain lattice $\Gamma^{2;2}$ basis as embedded in $\mathbb R^{2,2}$ as
\begin{equation}
\begin{aligned}\label{eq:Z12-quasi}
    v_1 &= \frac{1}{\sqrt[4]{3}}\left(1,0;1,0\right)\\
    v_2 &= \frac{1}{\sqrt[4]{3}}\left(\frac{\sqrt{3}}{2},\frac 1 2;-\frac{\sqrt{3}}{2},\frac 1 2\right)\\
    v_3 &= \frac{1}{\sqrt[4]{3}}\left(\frac 1 2,\frac{\sqrt{3}}{2};\frac 1 2,-\frac{\sqrt{3}}{2}\right)\\
    v_4 &= \frac{1}{\sqrt[4]{3}}\left(0,1;0,1\right).
\end{aligned}
\end{equation}
It can be checked that the lattice generated by $v_1,\dots, v_4$ is unimodular. This lattice is constructed by choosing $\theta = (R(2\pi /12); R(2\pi 5/12))$ and taking $v_n:=\theta^{n-1}\cdot v_1$. The Narain lattice has a quasicrystallographic symmetry $\theta$ by construction. Note that $\theta$ satisfies the integer matrix condition: the totatives of $12$ are $1,5,7,11$, with all of them appearing with the same multiplicity of one as eigenvalues of $\theta$. In particular $1,11$ appear once and correspond to eigenvalues $e^{2\pi i/12},e^{2\pi i 11/12}$ of $\theta$ on the left and $5,7$ also appear once and correspond to eigenvalues $e^{2\pi i 5/12},e^{2\pi i 7/12}$ on the right.

An action $\theta$ is \textit{irreducible} if it satisfies the integer matrix condition \eqref{eq:integer-matrix-cond} minimally: the only eigenvalues of $\theta$ are $e^{2\pi i r/N}$ each with multiplicity one where $r<N$ are totatives of $N$, $\gcd (r,N)=1$. An \textit{irreducible quasicrystal} is a Narain lattice with a quasicrystalline action that is irreducible. We give a list of all possible irreducible unimodular quasicrystals in \autoref{tab:unimodular-quasi-body}.

To construct irreducible unimodular quasicrystals, we fix the order $m$ of the quasicrystalline symmetry and count the number of its totatives given by the Euler totient function $\phi(m)$ (which counts the number of integers less than $m$ that are prime to m), which gives the dimension of the Narain lattice $\Gamma^{r;s}_m$ to be constructed as $\phi(m)=r+s$. For unimodularity and existence of $\Gamma^{r;s}_m$, Corollary \autoref{cor:prime-pow} provides three necessary and sufficient conditions to check: \begin{enumerate}
    \item $r\equiv s\pmod{8}$, 
    \item $r\equiv s\equiv 0\pmod{2}$, 
    \item  $m$ is not a prime power $m\neq p^a$ or two times a prime power $m\neq 2 p^a$ for some integer $a$.
\end{enumerate} 

Then one can construct $\Gamma^{r;s}_m$ by choosing a starting vector $v\in \mathbb R^{r;s}$ and taking the span of $v_n:=\theta^{n-1}\cdot v$ for $0\leq n<\phi(n)$. This construction is along the same lines as \eqref{eq:Z12-quasi} and also described in great detail in Appendix \ref{app:construction}. 

In addition, one can construct unimodular quasicrystals that are not irreducible by gluing nonunimodular qusicrystals together, as explained in Appendix \ref{app:construction}. For example, the $\mathbb Z_5$ quasicrystal $\Gamma^{2;2}_5$ in 2d is not unimodular, but gluing two copies gives a 4d unimodular quasicrystal $\Gamma^{2;2}_5\Gamma^{2;2}_5[11]$.

We now make contact with the usual notion of quasicrystals. Crystals are structures with translation invariance and possibly rotational symmetries. Quasicrystals do not have translation invariance but still have rotational symmetries. They can be constructed by projecting down a higher dimensional crystal to a subspace at an irrational angle \cite{Kramer-Neri}. In our context, the higher dimensional crystal is the Narain lattice $\Gamma^{d+x;d}$, and the quasicrystal is obtained by projecting to the left or right movers. In \autoref{fig:quasicrystal-visual}, we show the quasicrystal structure in the $(p_L^1,p_L^2)$ plane, which is obtained by projecting the Narain lattice $\Gamma^{2;2}_{12}$ to the subspace of left movers as shown in \autoref{fig:quasicrystal-cut}.

\begin{figure}
    \centering
    \includegraphics[width=\textwidth]{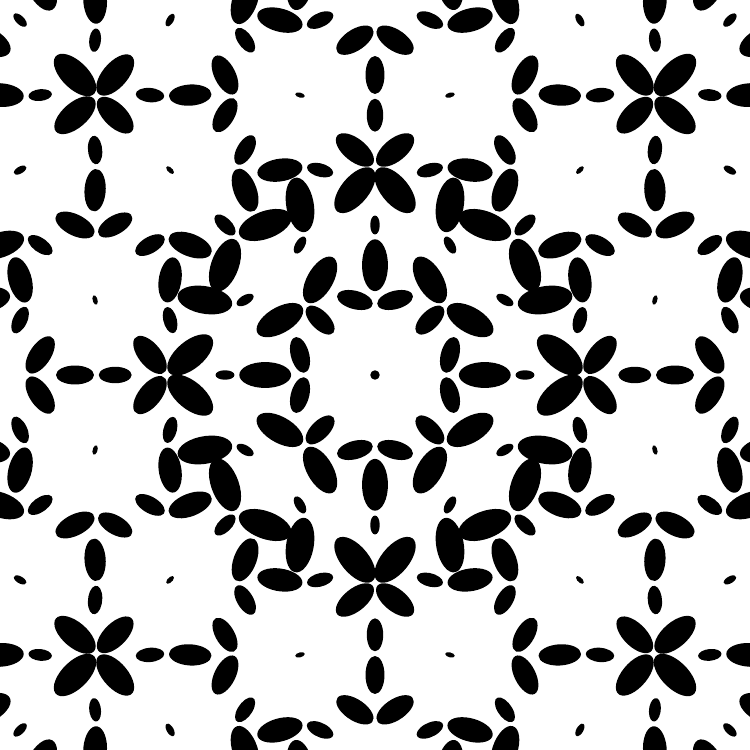}
    \caption{We give a presentation of the Narain lattice $(p_L^1,p_L^2;p_R^1,p_R^2)\in \Gamma^{2;2}_{12}$ with quasicrystalline symmetry $\mathbb Z_{12}$. The centers of the ellipses correspond to $(p_L^1,p_L^2)$, and the orientation and length of the ellipses correspond to $(p_R^1,p_R^2)$. Observe that there is no translational symmetry in the $(p_L^1,p_L^2)$ plane, but there is an overall rotational symmetry. To realize the rotational symmetry, a $\frac{2\pi}{12}$ rotation in the $(p_L^1,p_L^2)$ plane must be accompanied by a $\frac{10\pi}{12}$ rotation in the $(p_R^1,p_R^2)$ plane that acts on each ellipse around its center.}
    \label{fig:quasicrystal-visual}
\end{figure}
\begin{figure}
    \centering
    \includegraphics[width=0.5\textwidth]{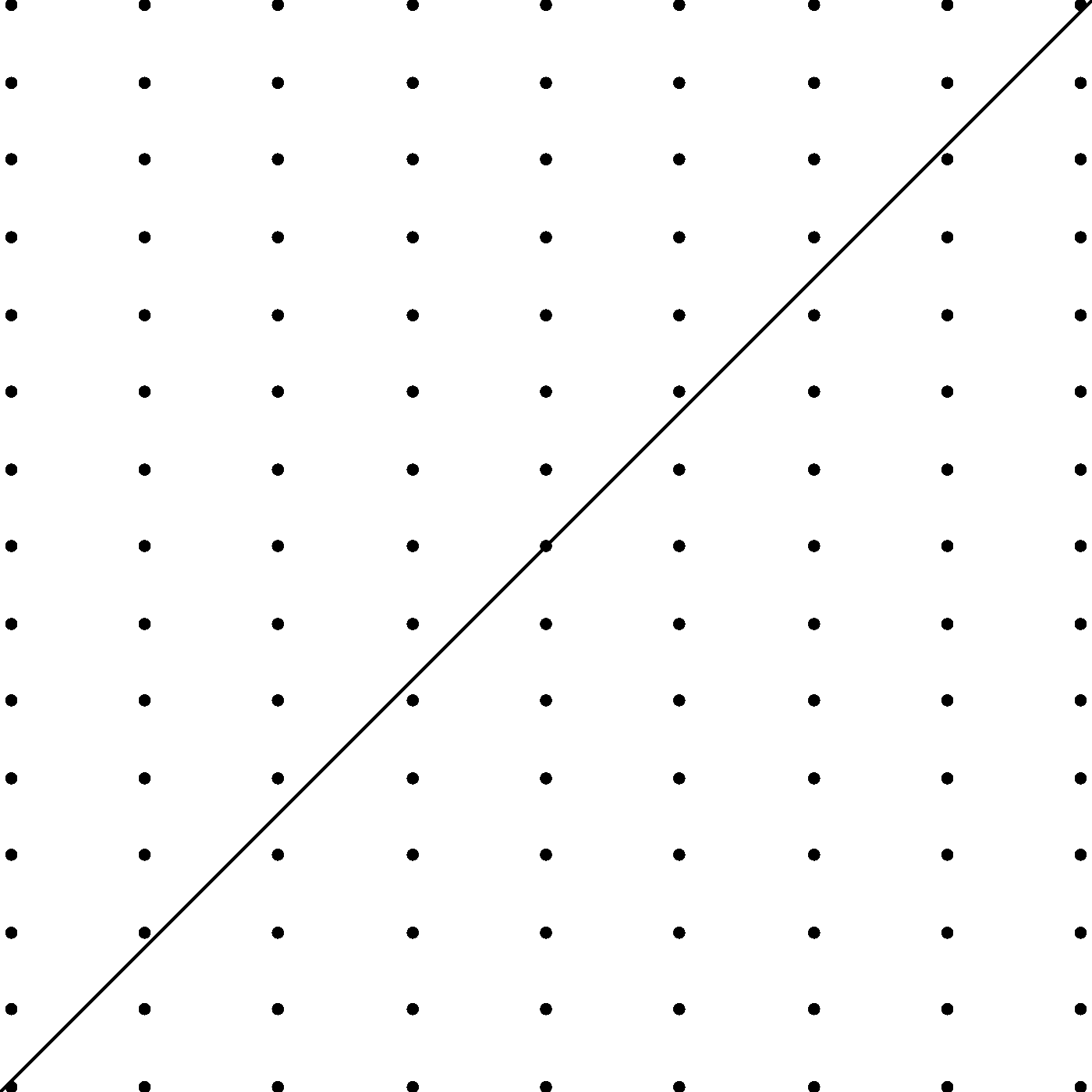}
    \caption{Quasicrystals are obtained by projecting a lattice to a subspace at special irrational angles. The figure shows the 4d unimodular lattice $\Gamma^{2;2}_{12}$ in the $(p_L^1;p_R^1)$ plane. The 2d $(p_L^1,p_L^2)$ subspace is represented here as a line that cuts the lattice at an irrational angle. Projection of the 4d lattice onto the 2d subspace produces \autoref{fig:quasicrystal-visual}.}
    \label{fig:quasicrystal-cut}
\end{figure}

\begin{table}
\begin{align*}
    \begin{array}{|c|c|c|}
    \hline
        \text{Lattice rank } \phi(m)=r+s & \text{Signature }(r;s) & \text{Symmetry order }m \\
        \hline
        4 & (2;2) & 12\\
        8 & (4;4) &15, 20, 24, 30\\
        12 & (6;6), (10;2) & 21, 28, 36, 42\\
        16 & (8;8), (12;4)&40, 48, 60\\
        20 & (14;6), (18;2) &33, 44, 66\\
        24 & (16;8), (20;4)&35, 39, 45, 52, 56, 70, 72, 78, 84, 90\\
        28 & (22;6)& \varnothing\\
        32 & (24; 8) & 51, 68, 80, 96, 102, 120\\
        \hline
    \end{array}
\end{align*}
\caption{Irreducible unimodular quasicrystals for each rank and signature. For each symmetry order $m$ and signature $(r;s)$ listed, there exists a unimodular quasicrystal $\Gamma^{r;s}_m$ with $\mathbb Z_m$ symmetry. The relevant ones are constructed explicitly in \autoref{app:construction}.}
\label{tab:unimodular-quasi-body}
\end{table}

Narain lattice data is equivalent to the background field data of the compactification. Therefore we can identify the the values of the background fields corresponding to the quasicrystalline compactifications. We use the technique explained in \autoref{app:covariant}. The $\mathbb Z_{12}$ quasicrystalline Narain lattice is spanned by \eqref{eq:Z12-quasi}. Therefore, the lattice corresponding to the spacetime torus
\begin{align}
    T^d = \mathbb R^d / 2\pi \Lambda_d
\end{align}
is spanned by
\begin{align}
    \Lambda_d &= \left\langle \sqrt{\frac{\alpha'}{2}}[(v_i)_L-(v_i)_R]\right\rangle\\
            &= \sqrt{\alpha'}\left\langle \left(\frac{\sqrt[4]{3}}{\sqrt{2}},0\right),\left(0,\frac{\sqrt[4]{3}}{\sqrt{2}}\right)\right\rangle=:\langle e_1,e_2\rangle.
\end{align}
The metric is then
\begin{align}\label{eq:fixmetric}
    [G_{ij}]=\alpha'\begin{pmatrix}
        \frac{\sqrt{3}}{2} & 0\\
        0 & \frac{\sqrt{3}}{2}
    \end{pmatrix}.
\end{align}
To find the B-field, we solve
\begin{align}
    \frac{1}{\alpha'}[B_{ij}] \mathbb Z^2 = \mathbb Z^2 \cup [G_{ij}] \Lambda_{kk}.
\end{align}
The KK momentum lattice is
\begin{align}
    \Lambda_{KK} &= \left\langle \sqrt{\frac{1}{2\alpha'}}[(v_i)_L+(v_i)_R]\right\rangle\\
            &= \frac{1}{\sqrt{\alpha'}}\left\langle \left(\frac{1}{\sqrt{2}\sqrt[4]{3}},0\right),\left(0,\frac{1}{\sqrt{2}\sqrt[4]{3}}\right)\right\rangle\\
            &= \frac{1}{\alpha' \sqrt{3}}\langle e_1,e_2\rangle.
\end{align}
Therefore in lattice basis
\begin{align}
    \Lambda_{kk} = \frac{1}{\alpha' \sqrt{3}}\left\langle (1,0),(0,1)\right\rangle
\end{align}
We see that
\begin{align}
    [G_{ij}]\Lambda_{kk}=\frac 1 2 \langle (1,0),(0,1)\rangle,
\end{align}
therefore, the B-field is given as
\begin{align}\label{eq:fixB}
    [B_{ij}] = \alpha'\begin{pmatrix}
        0 & \frac 1 2\\
        -\frac 1 2 & 0
    \end{pmatrix}.
\end{align}

\section{Orbifolds}\label{sec3:Orbifolds}
This section reviews the orbifolding methods in string theory. We emphasize asymmetric orbifolds, in which the left and right movers on the worldsheet are orbifolded by asymmetric actions. Their orbifolded spectrum usually have interesting non-geometric features. Asymmetric orbifolds of quasicrystalline compactifications provide an even more unconventional arena to search for such features. Before approaching this arena, we equip the reader with the details of orbifold techniques.

In \autoref{sub:action-symmetries}, we describe in detail how symmetries of the Narain lattice lift to actions on the string worldsheet. In \autoref{how-orbifold}, we describe how to compute the orbifold spectrum.
\subsection{Action of Narain symmetries on the worldsheet}\label{sub:action-symmetries}
The Narain lattice consists of left and right moving momenta $(p_L;p_R)$ of the string on $T^d$. By complexifying the torus coordinates according to the planes of rotation of $\theta_L, \theta_R$, we can assume that they are diagonal matrices. For $\theta_R$,
\begin{align}
    \theta_R = \mathrm{diag}(e^{2\pi i \phi^R_1},e^{-2\pi i \phi^R_1},\dots,e^{-2\pi i \phi^R_{\lfloor \frac d 2 \rfloor}})
\end{align}
in even $d$ and
\begin{align}
    \theta_R = \mathrm{diag}(e^{2\pi i \phi^R_1},e^{-2\pi i \phi^R_1},\dots,e^{-2\pi i \phi^R_{\lfloor \frac d 2 \rfloor}},\pm 1)
\end{align}
in odd $d$. Diagonal form of $\theta_L$ takes a similar form with $d$ replaced with $d+x$ and $\phi^L$ with $\phi^R$. The action of $\theta_R$ on the rotation planes is characterized by the \textit{twist vector}
\begin{align}
    \phi^R = (\phi^R_1,\dots, \phi^R_{\lfloor \frac d 2 \rfloor})
\end{align}
where entries take values between $0\leq \phi^R_i<2$. The mod $2$ value of the entry facilitates the spin uplift. Similarly, $\theta_L$ is characterized by $\phi^L$.

We also define the invariant sublattice under $g^m$ as
\begin{align}
    I(g^m):=\mathrm{Fix}_{g^m} (\Gamma^{d+x,d}) = \{p\in \Gamma^{d+x,d}\mid g^m \cdot p=p\}.
\end{align}
We let $I:=I(g)$.

One can also use shifts $v=(v_L,v_R)\in \mathbb Q\otimes \Gamma^{d+x;d}$ together with the rotations and get left-right asymmetric actions
\begin{align}
    g=(\theta_L,v_L;\theta_R,v_R)\in (\mathrm{O}(d+x) \times \mathrm{O}(d))\ltimes (\mathbb Q \otimes \Gamma^{d+x;d}).
\end{align}
We denote the projection of $v$ onto $I$ as $v^*$.

The action $g$ can be uplifted to the worldsheet CFT as $\hat{g}$, acting on the complexified oscillators as
\begin{align}
    \hat g \cdot \alpha_n^i = e^{2\pi i\phi^L_i} \alpha_n^i
\end{align}
where $i$ corresponds to a complexified torus coordinate (or possibly a real coordinate if $d$ is odd). The action on the right movers is similar. The action on the lattice modes are given as
\begin{align}
    \hat{g}\cdot \ket{p_L;p_R} = e^{2\pi i (-p_L\cdot v_L + p_R \cdot v_R)} \ket{\theta_L\cdot p_L;\theta_R\cdot p_R}.
\end{align}

The action of $\hat{g}$ on the Ramond ground state $\ket{s}=\ket{s_1,s_2,s_3,s_4},s_i=\pm \frac 1 2$ involves two points. First, an odd number of $-1$ eigenvalues are not allowed, since such an action would flip the chirality. This restricts the discussion to $\mathrm{SO}(d)$. Second, the Ramond ground state is a spacetime spinor, so rotations must be uplifted from $\mathrm{SO}$ to $\mathrm{Spin}$ groups. The uplift choice is encoded by the modulo 2 value of the twist vector entries as
\begin{align}
    \hat{g} \cdot \ket{s^{L,R}} = e^{2\pi i \phi^{L,R} \cdot s^{L,R}}\ket{s^{L,R}}.
\end{align}
It follows that the condition for supersymmetry to be preserved is
\begin{align}
    \pm \phi^{L,R}_1 \pm \phi^{L,R}_2\pm \phi^{L,R}_3\pm \phi^{L,R}_4 \equiv 0 \pmod{2}
\end{align}
for some choice of signs.\footnote{This is the condition to preserve gravitini in the untwisted sector. In general, it is possible to get gravitini from the twisted sectors of an orbifold as well. We give an example in \autoref{sec:Q=16-4d}.}
\subsection{Orbifolding procedure}\label{how-orbifold}
String theory on orbifold backgrounds was introduced in \cite{Dixon:1985jw}. Geometrically, \textit{orbifolds} can be constructed from a torus $T^d$ by quotienting by a cyclic group $\mathbb Z_N=\langle g \rangle$ generated by isometry $g$ as $T^d/\mathbb Z_n$. On the worldsheet, the orbifolding procedure amounts to relaxing the boundary conditions of the string
\begin{align}
    X^i(\tau,\sigma+2\pi) = g^n\cdot X^i(\tau,\sigma),
\end{align}
and then projecting to the invariant subspace of $\hat{g}$
\begin{align}
    \hat{g}\cdot \ket{\psi} = \ket{\psi}.
\end{align}
The states with $g^n$-twisted boundary conditions make up the \textit{$\hat{g}^n$-twisted sector}. The $n=0$ sector corresponds to the \textit{untwisted sector}.

For geometric orbifolds, the action of $g$ on the Narain lattice is left-right symmetric as $\theta_L=\theta_R$ and $v_L=v_R$. Since the left and right degrees of freedom of strings are decoupled, one can generalize the orbifolding procedure to left-right asymmetric actions $g$ with $\theta_L\neq \theta_R$ or $v_L\neq v_R$. The procedure is carried out on the worldsheet CFT in a similar fashion by relaxing the left and right boundary conditions and projecting to the invariant subspace of $\hat{g}$.
Such orbifolds have no target space interpretation and are called \textit{asymmetric orbifolds} \cite{Narain:1986qm,Narain:1990mw}.

Level matching is necessary and sufficient to ensure the consistency of the orbifolding procedure \cite{Vafa:1986wx}. In particular, for a $\mathbb Z_N$ orbifold, the energy levels on the left $E_L$ and right $E_R$ must only differ by an integer multiple of $\frac 1 N$
\begin{align}
    E_R-E_L\in \frac{\mathbb Z}{N}.
\end{align}

It is enough to check level matching for the $\hat g$-twisted sector ground state. The ground state energy in a sector twisted by twist vector $\phi$ is
\begin{align}
    (E_0)_L=\frac 1 2\sum_i \{\phi_i\} (1-\{\phi_i\})- 1
\end{align}
for bosonic,
\begin{align}
    (E_{0})_{L,R}=\frac 1 2\sum_i \{\phi_i\}-\frac 1 2
\end{align}
for supersymmetric strings, where $0\leq \{a\}<1$ is the fractional part. Additionally, if the twist is accompanied by a shift $(v_L;v_R)$, it contributes to the ground state energy on the left as $(v^*_L)^2/2$, and similarly on the right.\footnote{Strictly speaking, $(v^*)^2$ should be taken as the norm of the closest vector to the origin in $v^*+I$. In general, it can be difficult to compute, also known as the Closest Vector Problem.} Their difference is given by the indefinite norm
\begin{align}
    (v^*)^2=-(v_L^*)^2+(v_R^*)^2.
\end{align}

Therefore, level matching condition is
\begin{align}
    \frac 1 2 \left(\sum_i \{\phi^R_i\}-\sum_i \{\phi^L_i\}\right) +\frac{(v^*)^2}{2}\in \frac{\mathbb Z}{N}
\end{align}
for type II,
\begin{align}
    \frac 1 2 \left(\sum_i \{\phi^R_i\}-\sum_i \{\phi^L_i\}(1-\{\phi^L_i\})\right) +\frac{(v^*)^2}{2}+\frac 1 2\in \frac{\mathbb Z}{N}
\end{align}
for heterotic strings.

More generally, the mass in the $m$th twisted sector is given by
\begin{align}
    E_{L,R} = N_B+\frac{(r_{L,R}+m\phi_{L,R})^2}{2}+\frac{(p_{L,R}+mv_{L,R})^2}{2}+(E_0)_{L,R} - \frac 1 2
\end{align}
for the supersymmetric sides, and
\begin{align}
    E_L = N_B+\frac{(p_L+mv_L)^2}{2} + (E_0)_L-1
\end{align}
for the bosonic side. Here, $r_{L,R}$ is an $\mathrm{SO}(8)$ weight and $N_B$ is the bosonic oscillator level.

For $g^N=1$ with even $N$, an additional condition is
\begin{align}
    pg^{N/2}p=0
    \quad\quad \text{mod $2$} 
\label{Eq:consistency_even}\end{align}
for all $p\in\Gamma^{d+x,d}$.  This condition can be relaxed by doubling the order of the 
group \cite{Narain:1990mw,Harvey:2017rko}.

The number of ground states in the twisted sectors is given by 
\begin{eqnarray}
    \chi(\theta)=\sqrt{det(1-\theta)\over |I^*/I|}
\end{eqnarray}
where $I$ corresponds to the invariant lattice under $\theta$ and $I^*$ its dual. Note that as described in \cite{Narain:1986qm} $\chi(\theta) \in \Z$ for even self-dual lattices.

To compute the spectrum of the orbifolded theory, we use the partition function. The partition function in the untwisted sector is constructed using $\hat g$ insertions in the trace
\begin{align}
    Z\begin{bmatrix}
        n\\
        0
    \end{bmatrix}(\tau,\bar\tau):=\mathrm{Tr}(\hat g^n q^{L_0-c/24}\bar q^{\bar L_0-\bar c/24})
\end{align}
and projecting to $\hat{g}$ invariant states
\begin{align}
    Z[0](\tau,\bar\tau) =\mathrm{Tr}\left(\frac 1 N\sum_{n=0}^{N-1} \hat{g}^n q^{L_0-c/24}\bar q^{\bar L_0-\bar c/24}\right) = \frac 1 N \sum_{n=0}^{N-1} Z\begin{bmatrix}
        n\\
        0
    \end{bmatrix}.
\end{align}

To construct the twisted sectors, one uses \textit{modular covariance}
\begin{align}
    Z\begin{bmatrix}
        n\\
        m
    \end{bmatrix}\left(\frac{a\tau+b}{c\tau+d},\frac{a\bar\tau+b}{c\bar \tau+d}\right) = Z\begin{bmatrix}
    dn-bm\\
    am-cn
    \end{bmatrix}(\tau,\bar\tau),\qquad \begin{pmatrix}
        a & b\\
        c & d
    \end{pmatrix}\in \mathrm{SL}(2,\mathbb Z).
\end{align}
The partial trace $Z\begin{bmatrix}
    n \\ m
\end{bmatrix}$ corresponds to an insertion of $\hat{g}^n$ in the $\hat{g}^m$-twisted sector
\begin{align}
    Z\begin{bmatrix}
        n\\
        m
    \end{bmatrix} = \mathrm{Tr}_{\hat g^m}\left(\hat g^n q^{L_0-c/24}\bar q^{\bar L_0-\bar c/24}\right).
\end{align}
The $\hat g^m$-twisted sector is then constructed as
\begin{align}
    Z[m](\tau,\bar\tau) = \frac 1 N \sum_{n=0}^{N-1} Z\begin{bmatrix}
        n\\
        m
    \end{bmatrix}.
\end{align}

When the orbifold order $N$ is prime, computations are substantially simplified. This is because The $\hat{g}^m$-twisted sector partition function can be constructed by summing the orbit of $Z\begin{bmatrix}
    0\\
    m
\end{bmatrix}$ under $T$, which corresponds to simply imposing level matching. Therefore, for prime orbifolds, the twisted sector spectrum is simply given by all level-matching states. For non-prime orbifolds of order $N$, one needs to compute the non-trivial projections in addition to level-matching since not all $Z\begin{bmatrix}
    n \\
    m
\end{bmatrix}$ are in the modular orbit of $Z\begin{bmatrix}
    0\\
    m
\end{bmatrix}$.

As explained in \autoref{quasicomp}, quasicrystalline compactifications have symmetries that act asymmetrically on the left and the right, so they adopt the asymmetric orbifold machinery we have just introduced. Orbifolding by such symmetries, we obtain \textit{quasicrystalline orbifolds}. The focus of this paper is to explore the various non-geometric features of the quasicrystalline orbifold landscape.

\section{Supersymmetric models}\label{sec4:susy}

Most of the supersymmetric string constructions known involve geometric orbifolds and Calabi-Yau manifolds. However, as demonstrated in \cite{Gkountoumis:2023fym,Baykara:2023plc} it is crucial to study other corners of the string landscape as more exotic constructions could be possible that provide a more complete view of the possible landscape. For instance, in \cite{Baykara:2023plc} supersymmetric examples with 8 supercharges were constructed that have no hypermultiplets, contrary to the ordinary geometric models that always lead to at least one free hypermultiplet controlling the string coupling. Additionally, for elliptic Calabi-Yau threfolds it is known that string charges need to satisfy the Kodaira condition \cite{Kumar:2010ru} and no example outside of Calabi-Yau models was known to violate this. However, in \cite{Baykara:2023plc}  asymmetric orbifolds were shown to be powerful tools to find examples that in fact violate this condition and hence taking us away from the geometric lamppost.

As was shown in the previous section quasicrystalline orbifolds go a step further than usual asymmetric models where the orbifold symmetries are symmetries of the string lattice and not of the geometric target space torus. This leads to CFTs with no clear target space interpretation. In this section we build such models with ${\cal Q}=4,8,16$ number of supercharges in various dimensions. Additionally, orbifold quantum symmetries  will be used to identify large discrete gauge groups in our theories. 

\subsection{$\cal Q$ $=16$ Supercharges}

The string landscape with 16 supercharges is very well studied. A large class of such compactifications was studied in \cite{deBoer:2001wca, Font:2021uyw,Font:2020rsk}, isolated string islands were studied in \cite{Dabholkar:1998kv} and in particular in \cite{Eguchi:1988vra} it was shown that the compactification of IIB with $(2,0)$ supersymmetry to 6d is unique and corresponds to type IIB strings on K3. The chiral 6d $(2,0)$ theory is unique as the number of tensor multiplets is fixed to 21 by anomalies. The 6d $(1,1)$ landscape has been studied in \cite{Fraiman:2022aik} with possible gauge group enhancements corresponding to lattice embeddings in some K3 lattice. It is also well known that the K3 sigma model has well-studied orbifolds limits given by $T^4/\mathbb{Z}_{2,3,4,6}$\cite{Eguchi:1988vra} plus non-abelian orbifold limits \cite{Wendland:2000ry}. In this section we would like to study \textit{quasicrystalline} orbifolds with 16 supercharges and show that in six dimensions they are dual to special limits of type II with K3 compactification which we identify. A similar observation was made in \cite{Erler:1993zy}.

As far as quasicrystalline orbifolds are concerned we have been able to identify  4 such classes of theories with ${\cal Q}=16$ in 6d. The reasoning is as follows: We are considering asymmetric actions that act non-trivially both in the right and left sectors and hence only the type II string can give ${\cal Q}=16$ since any such action will at least break some supersymmetry. For the orbifolding action to preserve half the supersymmetries of type II, the two eigenvalues should be equal both on the left and right, so the action is given by two blocks of the same quasicrystalline action $\theta \sim C(\Phi_m)\oplus C(\Phi_m)$, where $C(\Phi_m)$ is a $4\times 4$ matrix described in \autoref{app:quasi}. Since we are considering compactifications to 6d, $m$ satisfies $\phi(m)=4$ and hence there are only 4 such quasicrystalline actions as summarized  in Table \ref{tab:all-phim} given by  $\mathbb Z_5,\mathbb Z_8,\mathbb Z_{10},\mathbb Z_{12}$.

Note that the only $(2;2)$ quasicrystal that is unimodular by itself is $\Gamma_{12}^{2,2}$. The others can be glued with another copy of themselves to give a unimodular lattice, e.g. $\Gamma_{5}^{2,2}\Gamma_5^{2,2}[11]$ constructed in \autoref{app:construction} using gluing rules summarized in \autoref{app:glue}. We can also consider a compactification of these models to 5d with a free action which will lift the twisted sectors as described in \autoref{app:free}. All these models give the same massless spectrum respectively in 6d and 5d as shown in Table \ref{tab:Q16} but twisted field carry a different discrete gauge group given by the corresponding quantum symmetry.

\renewcommand{\arraystretch}{1.5}
\begin{table}[h!]
\begin{align*}
    \begin{array}{|c|c|c|c|c|}
    \hline
        \multicolumn{5}{|c|}{\text{$Q=16$ Quasicrystalline Orbifolds}}\\
        \hline 
        \text{Dimension}  & \text{Lattice} & \text{Twist} & \text{IIA} &\text{IIB} \\
      \hline & & & & \\
         \multirow{4}{*}{6} & \Gamma^{2,2}_{5}\Gamma^{2,2}_{5}[11] & \Z_{5\ } : (1,1;2,2)/5 & \multirow{4}{*}{\begin{tabular}{@{}c@{}}$\mathcal{N}=(1,1)$ \\ $G+20V$\end{tabular}} & \multirow{4}{*}{\begin{tabular}{@{}c@{}}$\mathcal{N}=(2,0)$ \\ $G+21T$\end{tabular}}\\
          &  \Gamma^{2,2}_{8}\Gamma^{2,2}_{8}[11] & \Z_{8\ } :(1,1;3,3)/8 &  &  \\
           & \Gamma^{2,2}_{10}\Gamma^{2,2}_{10}[11] & \Z_{10} : (1,1;3,3)/10 & & \\
            & 2\Gamma^{2,2}_{12} & \Z_{12}:  (1,1;5,5)/12  & & \\ & & & & \\
           \hdashline & & &   \multicolumn{2}{|c|}{}  \\
         \multirow{4}{*}{5}  & \Gamma^{2,2}_{5}\Gamma^{2,2}_{5}[11]+\Gamma^{1,1} &\Z_{5\ } :  (1,1;2,2)/5 & \multicolumn{2}{|c|}{\multirow{4}{*}{\begin{tabular}{@{}c@{}}$\mathcal{N}=2$ \\ $G+1V$\end{tabular}}}  \\
          & \Gamma^{2,2}_{8}\Gamma^{2,2}_{8}[11]+\Gamma^{1,1} & \Z_{8\ } :  (1,1;3,3)/8 &  \multicolumn{2}{|c|}{} \\
           & \Gamma^{2,2}_{10}\Gamma^{2,2}_{10}[11]+\Gamma^{1,1} &\Z_{10} : (1,1;3,3)/10 &  \multicolumn{2}{|c|}{}\\
             & 2\Gamma^{2,2}_{12}+\Gamma^{1,1} & \Z_{12} : (1,1;5,5)/12 &  \multicolumn{2}{|c|}{} \\  & & &   \multicolumn{2}{|c|}{}  \\
           \hline 
    \end{array}
\end{align*}
\caption{Quasicrystalline orbifolds with 16 supercharges in 6d and 5d. The 5d theories are coupled with a shift in the extra $\Gamma^{1,1}$, which lifts the twisted sectors given that the circle corresponding to $\Gamma^{1,1}$ is large enough.}
\label{tab:Q16}
\end{table}

An interesting observation of \autoref{tab:Q16} is that the IIA compactification has only one tensor multiplet in the gravity multiplet and hence only one string. There are two limits of this theory: the strong and weak coupling limits. We know that there is a unique type IIA weakly coupled string theory given by the K3 sigma model. Therefore, this orbifold is a point in the K3 CFT. Additionally, due to heterotic/type II duality we expect this theory to be the strong coupling limit of some perturbative heterotic model which we would like to identify. 

In particular, every $\mathcal N=(2,2)$ SCFT with central charge $c=\bar c=6$ has a CFT elliptic genus agreeing with that of $T^4$ or K3 \cite{Eguchi:1988vra}. Those with the same elliptic genus as K3 are defined to be K3 SCFTs. 

Now we determine the exact point in the K3 SCFT moduli space that corresponds to our models.
The moduli space of the $(4,4)$ non-linear sigma model is given by 
\begin{eqnarray}
   {\cal M}_{K3}=O(\Gamma^{4,20})\textbackslash O(4,20)/O(4)\times O(20)
\end{eqnarray}
The $\Gamma^{4,20}$ is the integral cohomology of K3, which corresponds to the RR charge lattice, and $O(\Gamma^{4,20})$ is the automorphism group of the lattice. The  $O(4,20)/O(4)\times O(20)$ component specifies the choice of the NSNS fields corresponding to the metric and B-field which   parameterizes the choice of a positive definite four-dimensional subspace  in $\R^{4,20}$ determining the four left- and right-moving supercharges. The supersymmetry preserving automorphisms of the non-linear sigma model consist of those elements of $O(\Gamma^{4,20})$ that leave the four dimensional  subspace fixed. In \cite{Gaberdiel:2011fg}, for a nonsingular K3 SCFT, these automorphisms are identified with subgroups of the Conway Group $\mathrm{Co}_1$ that fix a rank 4 sublattice of the Leech lattice. In particular, the quantum symmetry $\mathcal Q$ of orbifolds, reviewed in \autoref{app:quantsym}, is such an automorphism.

In \cite{Gaberdiel:2012um}, the conjugacy classes in $\mathrm{Co}_1$ that can be quantum symmetries of torus orbifolds were determined. There is only one such conjugacy class in orders 5, 8, 10, 12, and all of them fix a rank 4 sublattice. Therefore, these K3 theories can be determined by constructing the cohomology lattice $\Gamma^{4,20}$ using the rank 20 sublattice of the Leech lattice. Such unimodular lattices were constructed in \cite{Baykara:2021ger} using the fixed-sublattice list of \cite{Hoehn:2015rsa}. The fixed sublattices of the Leech lattices are denoted as HM\# following the notation of ibid. All relevant data is provided in Table \ref{tab:K3}.\footnote{The notation for the symmetry groups follow ATLAS \cite{Wilson1985ATLASOF}: where $A:B$ denotes semidirect product, $G=A.B$ denotes $G/A\cong B$, and $5^{1+2}$ denotes the
extra-special group of order $5^3$, which is an extension of $\mathbb Z_5^2$ by a central element of
order 5.}

\begin{table}[h!]
\begin{align*}
        \begin{array}{|c|c|c|c|c|}
        \hline
            \text{Quasicrystalline orbifold} & \mathcal Q\text{ charges} & \mathrm{Co}_0\text{ class}&\Gamma^{4;20} &\text{Symmetries} \\
            \hline
            \mathbb Z_{5} & (1^5,2^5)/5 & 5C&\mathrm{HM}122 & 5^{1+2}:\mathbb Z_4\\
            \mathbb Z_{8} & (1^2,2^3,3^2,4^3)/8 &  8H& \mathrm{HM}143 &\mathbb Z_8. \mathbb Z_2^3\\
            \mathbb Z_{10} & (1,2^3,3^1,4^3,5^2)/10 & 10F& \mathrm{HM}159 & D_{20}\\
            \mathbb Z_{12} & (1,2,3^2,4^3,5,6^2)/12 & 12N& \mathrm{HM}157 & D_{24}\\
            \hline
        \end{array}
    \end{align*}
    \caption{Quasicrystalline orbifolds and the corresponding K3 surfaces. The $\mathcal Q$ charges correspond to quantum symmetry charges of the scalars. We omitted the conjugate charges for brevity. The quantum symmetry is given by a unique $\mathrm{Co}_0$ class, which constructs the cohomology lattice of the K3. Lastly, we list the full $\mathcal N=(4,4)$ symmetry preserving automorphism group.}
    \label{tab:K3}
\end{table}

There are also points in the K3 moduli space that are not quasicrystalline orbifold points, but are dual to heterotic quasicrystalline compactifications under the Type IIA K3/Heterotic $T^4$ duality. In particular, the K3 model \cite{Vafa:1989ih} obtained by the orbifold of the LG model
\begin{align}
    W=z_1^3+z_2^7+z_3^{42}
\end{align}
corresponds to a $\mathbb Z_{42}$ quasicrystal $2\Gamma^{2;10}_{42}$ on the heterotic side. To see this, we point out two disjoint $\mathbb Z_{42}$ symmetries of the theory. The first is the $\mathbb Z_{42}$ orbifold quantum symmetry in the twisted sectors. The second is a $\mathbb Z_{42}$ acting on the untwisted sector, corresponding to monomials that survive the orbifolding action with
\begin{align}
    z_1^a z_2^b z_3^c,\qquad \frac{a}{3}+\frac{b}{7}+\frac{c}{42}=1,
\end{align}
where $a\in \mathbb Z_2,b\in \mathbb Z_6,c\in \mathbb Z_{41}$. There are 10 such deformations. The $\mathbb Z_{42}$ action is given by $e^{2\pi i c/42}$ on these monomials, which are exactly the phases of the moduli of $\Gamma^{2;10}_{42}$.

Next, we consider the 5d theories in \autoref{tab:Q16}. These theories are obtained as $\Z_5,\Z_8, \Z_{10}, \Z_{12}$ freely acting orbifolds, corresponding to the 6d theories on a circle with an appropriate shift  such that the twisted sectors are lifted. These models have only two moduli and in particular only one vector multiplet. Similar, expectation was discussed in \cite{Persson:2015jka}. If one of the directions in the 2d moduli space decompactifies on a circle then these theories would correspond to  6d  string islands with no other moduli than the dilaton. Such a string island for the $\Z_5$ symmetry is known and constructed in \cite{Dabholkar:1998kv}. According to \cite{ParraDeFreitas:2022wnz}\footnote{For the cases of $\Z_5,\Z_8$ two string islands are expected differing by a choice of discrete theta angle.} we expect that also the $\Z_8, \Z_{10}, \Z_{12}$ orbifolds should also correspond to 6d string islands and hence have such a decompactification limit. The exact 6d string islands will be constructed in \cite{Baykara:tobe}.

\subsubsection*{4d}\label{sec:Q=16-4d}
We can also consider a combination of \textit{quasicrystalline} and \textit{crystalline} symmetries to construct theories with 16 supercharges in 4d.

Note that since quasicrystalline symmetries always act both on the left and right, a $4d$ quasicrystalline orbifold can have at most $8$ supercharges from the untwisted sector. For more than $8$ supercharges, one must build a model with $8$ more supercharges in the twisted sectors. An example is given in \autoref{tab:4d-N=4} which has maximal rank $22$. The extra gravitini are found in the $5$th and $10$th twisted sectors.

One could also consider the same theory on a circle with a freely acting shift that projects out the twisted sectors. One then gets a 3d theory with ${Q}=8$ and massless spectrum $G+2V$.
\begin{table}[h!] 
\begin{align*}
\resizebox{15cm}{!}{$ 
    \begin{array}{|c|c|c|c|}
    \hline
        \multicolumn{4}{|c|}{\text{4d Type IIB } \mathcal N=4\text{ Quasicrystalline Orbifolds}}\\
        \hline
        \Gamma^{6,6} & \text{Twist} &\text{Gauge group}& \text{Spectrum} \\
        \hline
         \Gamma_{5}^{2,2}\Gamma_{5}^{2,2}[11]+\Gamma(A_2)& (4/5,4/5,0;2/5,2/5,2/3)   & U(1)^{22} & G+22V  \\
        \hline
    \end{array}$}
\end{align*}
\caption{Quasicrystalline orbifolds with 16 supercharges in 4d.}
\label{tab:4d-N=4}
\end{table}

\subsection{$\cal Q$ $=8$ Supercharges}
This amount of supercharges first appears in six dimensions and therefore we study models in $D=6,5,4,3$. The largest class of these models is provided by Calabi-Yau manifolds starting from IIA/IIB, M-theory or F-theory. However, such constructions have the drawback of keeping us in the geometric lamppost. To avoid such effects we consider compactifications in non-geometric backgrounds as described in \autoref{sec3:Orbifolds} focused on orbifolding by  quasicrystalline symmetries similar to examples demonstrated in the previous section. The non-geometric nature of these models is manifest by the lack of neutral hypers in the untwisted sector with the exception of the dilaton which is dictated by the fact that we are constructing these models in the perturbative string theory.  As seen in \autoref{eq:fixB} and \autoref{eq:fixmetric} quasicrystalline symmetries exist at special points with a fixed background metric and $B-$field making manifest the non-geometric nature of the compactifications. We will study various such heterotic and type II models. The interesting features we find are the generic lack of neutral scalars in most models, the violation of the Kodaira condition in various example \cite{Baykara:2023plc} and the large generic discrete symmetries. We also comment on the connectedness of such configurations to known geometric models.

\subsubsection*{6d}
The largest known class of 6d compactifications correspond to  F-theory models on elliptic Calabi-Yau threefolds \cite{Morrison:1996na,Morrison:1996pp}. However, from the bottom up perspective there are many more theories expected to potentially exist that satisfy anomaly conditions \cite{Kumar:2010ru} and pass swampland tests \cite{Tarazi:2021duw}. For example, models with no neutral hypers or those that violate the Kodaira condition.
In \cite{Baykara:2023plc} it was shown that asymmetric orbifolds provide examples that go beyond such constructions as they are naturally non-geometric and in fact do provide examples of theories that have no neutral hypers and violate the Kodaira condition. They also provide examples of 6d theories with exotic matter that are not realizable in the geometric regime of F-theory models \cite{Baykara:Matter}. However, it is believed that such theories correspond to  stringy regions of the geometric moduli space, as for a example when the F-theory base is of stringy volume and hence they are connected to F-theory models through non-geometric transitions.

For the chiral 6d $\mathcal N=(1,0)$ supergravity, there are various anomaly cancellation conditions as imposed by the generalized Green-Schwarz mechanism. The cancellation of the gravitational anomaly condition is given by 
\begin{align}\label{anomalies}
    273 - 29 T = H_0+H_c-V.
\end{align}
where $H_0,H_c$ are the number of neutral and charged hypers respectively, $V$ is the number of vector multiplets and  $T$ the number of tensor multiplets. For perturbative heterotic theories, there is always only one tensor $T=1$ due to the absence of RR fields. There are also various gauge and mixed anomaly conditions summarised in \autoref{anomalies}. In all the theories we consider there are quantum symmetries emanating from the orbifold action as we saw in the previous section and therefore every field from the twisted sector carries some non-trivial charge under this discrete symmetry. This implies that all $H_0$ fields corresponding to neutral hypers may carry discrete gauge charge if they arise from twisted sectors.

In Table \ref{tab:Q8}, we list a collection of  ${\cal Q}=8$ quasicrystalline orbifolds in 6d arising either from the type II or the heterotic string. In fact, there are four type II quasicrystalline models with the same massless spectrum of $G+9T+8V+20H_0$, corresponding to the $\mathbb Z_5, \mathbb Z_8,\mathbb Z_{10},\mathbb Z_{12}$ quasicrystals of the previous section with an additional $(-1)^{F_L}$ twist. Similarly, to the previous section the massless spectrum can be distinguished by the discrete gauge charge that they carry from the corresponding quantum symmetry as shown in \autoref{tab:K3}. For brevity we have only listed the $\mathbb Z_{12}$ model while the rest follow from \autoref{tab:Q16}. From the geometric point of view there is an F-theory elliptic  Calabi-Yau threefold with base $dP_9$ with the same low energy matter spectrum. One could suspect that that the orbifold theories we are considering here and the Higgsed phase of theory 1 in \cite{Baykara:2023plc} are special points of this Calabi-Yau moduli space.

The next class of examples we consider all correspond to heterotic orbifolds which have non-trivial gauge groups and matter.\footnote{It is interesting to note that quasicrystalline compactifications are not gauge enhanced points, therefore the rank of the orbifold gauge group is lower than what one would have expected from the usual crystallographic orbifolds.} In this  case we consider orbifolds of orders $N=5,8,12,20,30$ and with appropriate shifts which result in non-geometric models where no neutral hypermultiplets are present similarly to \cite{Baykara:2023plc}. As discussed in \cite{Baykara:2023plc} we can consider the maximally Higgsed phases of these models and compare the spectrum to known geometric models. Since these are all perturbative heterotic models there will always be exactly one tensor multiplet and hence potentially correspond to an elliptic threefold with base $\mathbb{F}_n$ for $0\leq n\leq 12$ which we identify using the non-Higgsable clusters. The spectrum can be compared with the geometry using the correspondence:
\begin{eqnarray}
  h^{2,1}(\text{CY3})=  H_0-1, \ h^{1,1}(\text{CY3})=r+h^{1,1}(B)+2=r+T+2
\end{eqnarray}
where $B$ stands for the base of the elliptic fibration. A particularly, interesting observation is that using quasicrystalline symmetries it is simple to construct examples with large discrete symmetries corresponding to the quantum symmetries of the orbifolds.  Additionally, for each theory we check whether the Kodaira condition is satisfied as done in \cite{Baykara:2023plc}.

Since all the heterotic models seem to correspond to threefolds with base $\mathbb{F}_n$ which all have heterotic duals it would be interesting to identify the location of these theories via the F-theory/Heterotic duality.

\renewcommand{\baselinestretch}{1.5}
\begin{table}[h!] 
\begin{align*}
\hspace{-0.5cm}
\resizebox{15cm}{!}{$ 
    \begin{array}{|c|c|c|c|}
     \hline 
        \multicolumn{4}{|c|}{\text{6d ${\N}=(1,0)$ Quasicrystalline Orbifolds}}\\
        \hline                                                                
\text{Type}          & \text{IIA/B}           & \text{Het}                                                                                                                              & \text{Het}                                                \\ \hline
\Gamma^{4+x;4}       & 2\Gamma_{12}^{2;2}     & 2\Gamma_{12}^{2,2}+2\Gamma(E_8)                                                                                                         & 2\Gamma_{12}^{2;2}+2\Gamma(E_8)                           \\ \hline
\text{Symmetry} & \Z_{2}\times \Z_{12} & \Z_{12} & \Z_{12} \\\hline
\text{Twist}         & (-1)^{F_L}(1,1;5,5)/12 & (1,1;5,5)/12                                                                                                                            & (1,1;5,5)/12                                              \\ \hline
\text{Shift}         & 0                      & (1,-1, 0^6,0^8)/12                                                                                                                      & (0^{13}, 1/2, 0, 7/12) \\ \hline
\text{Gauge group}   & U(1)^{8}               & E_8\times E_7\times U(1)                                                                                                                &  E_8 \times SO(12) \times SU(2)      \times U(1)    \\ \hline
\text{Matter}        &     20 (\textbf{1}^8)                 & (\textbf{1},\textbf{56})_{(5,4,3,2,1,0)}^{(1,1,2,3,1,2)} +
(\textbf{1},\textbf{1})_{(5,4,3,2,1,0)}^{(1,1,2,3,1,2)} &    (\textbf{1},\bar{\textbf{32}},\textbf{1})_{(1,2,1)}  ^{(6,2,0)}   +(\textbf{1},\textbf{32},\textbf{1})_{(5,3,1)}^{(1,2,1)}  +(\textbf{1},{\textbf{12}},\textbf{2})_{(4,2,0)}^{(1,3,2)}  \\
& & &  +(\textbf{1},{\textbf{1}},\textbf{2})_{(7,5,9,3)}^{(5,7,1,3)} +(\textbf{1},\textbf{1},\textbf{1})_{(2,10,4,8,6)}^{(2,2,12,6,12)} 
\\ \hline
\text{Spectrum}      & G+ 9T +8V+20H_0        & G+ T +382V +626H_c                                                                                                                      & G+ T +318V +562H_c                                        \\ \hline
\text{Higgsed Phase} & dP9                    & \mathbb{F}_{12}                                                                                                                         & \mathbb{F}_{12}                                            \\ \hline
\text{Kodaira} & \text{Yes} & \text{Yes} & \text{No}\\ \hline
    \end{array} $}
\end{align*}

\begin{align*}
\hspace{-0.5cm}
\resizebox{15cm}{!}{$ 
    \begin{array}{|c|c|c|c|}
     \hline 
        \multicolumn{4}{|c|}{\text{6d ${\N}=(1,0)$ Quasicrystalline Orbifolds}}\\
        \hline 
\text{Type}          & \text{Het}                                                                                                                     & \text{Het}                                          & \text{Het}                                         \\ \hline
\Gamma^{4+x;4}       & \Gamma_{5}^{2;2}\Gamma_{5}^{2;2}[11]+\Gamma(E_8)                                                                                                   & \Gamma_{8}^{6;2}\Gamma_{8}^{6,2}[11]+\Gamma(E_8)  & \Gamma_{12}^{6;2}\Gamma_{12}^{6,2}[11]+\Gamma(E_8) \\ \hline
\text{Symmetry} & \Z_{5} & \Z_{8} & \Z_{12} \\\hline
\text{Twist}         & (1,1;2,2)/5                                                                                                                    & (1,1;3,3)/8                          & (5,5;1,1)/12                               \\ \hline
\text{Shift}         & (0^{12},2, 3, 2, 4)/5 & (0^{10},6, 3, 2, 6, 0, 2)/8                                                 &       (8,20,0,8,12,40,36,20,18,15,6,24,24,12,3,15 )/12                                            \\ \hline
\text{Gauge group}   & E_8\times SO(10)\times SU(3)\times U(1)                                                                               & E_8\times SU(4)^2\times SU(2)\times U(1)                   & SU(9)\times SO(12)\times SU(2) \times U(1)                   \\ \hline
\text{Matter}        &    (\textbf{1},\textbf{16},\textbf{1})_{(15,3,-9)}^{(1,10,5)}        +(\textbf{1},\textbf{10},\textbf{3})_{(-10,2)}^{(1,5)}                                                                                                                                 &  (\textbf{1},\textbf{6},\textbf{1},\textbf{2})_{(4,0)}^{(1,3)}   +(\textbf{1},\textbf{6},\textbf{1},\textbf{1})_{(2)}^{(6)}                                                      &       ( \textbf{9,1,2})_{(1,1)}^{(1,3)}+   ( \textbf{9,1,1})_{(2,0)}^{(2,6)}+   ( \textbf{1,1,2})_{(3,1)}^{(1,6)}+   ( \textbf{1,32,1})_{(1,0)}^{(1,1)}     \\ &  +(\textbf{1},\textbf{1},\textbf{3})_{(8,-16,-4)}^{(15,5,10)}     +(\textbf{1},\textbf{1},\textbf{1})_{(12)}^{(20)}   &        +(\textbf{1},\textbf{4},\textbf{$\bar{\mathbf{4}}$},\textbf{1})_{(4)}^{(6)}       &    
+   ( \textbf{1,12,2})_{(0)}^{(2)}+   ( \textbf{1,12,1})_{(1)}^{(4)}+   ( \textbf{1,1,2})_{(1)}^{(3)}+   ( \textbf{36,1,1})_{(0)}^{(2)} +  ( \textbf{1,1,1})_{(2)}^{(10)}      \\ \hline
\text{Spectrum}      & G+ T +302V +546H_c                                                        & G+ T +18V +262H_c                                   & G+ T +150V +394H_c                                  \\ \hline
\text{Higgsed Phase} & \mathbb{F}_{12}                                  & \mathbb{F}_{0}                                      & \mathbb{F}_{0}                                 \\ \hline \text{Kodaira} & \text{No} & \text{Yes}& \text{Yes}\\ \hline
    \end{array}  $}
\end{align*}
\caption{Quasicrystalline orbifolds with $8$ supercharges in 6d. We provide the Narain lattice $\Gamma^{4+x,4}$ where $x=0$ for type II or $x=16$ for heterotic, the twist, the shift, information about the spectrum, the potential Higgsed phases of the models and whether the model satisfies the Kodaira condition. The notation $(R_1,R_2)_{([q_1,q_2],[q_3,q_4])}^{(N_1,N_2)}$ denotes that we have $N_1$ multiplets in the tensor product representation $R_1\otimes R_2$ under two continuous gauge groups $G_1\times  G_2$ with $U(1)_1\times U(1)_2$ charge $(q_1,q_2)$ and $N_2$ multiplets in $R_1\otimes R_2$ with abelian charges $(q_3,q_4)$.}
\label{tab:Q8}
\end{table}

\renewcommand{\baselinestretch}{1.5}
\begin{table}[h!] 
\begin{align*}
\hspace{-0.5cm}
\resizebox{15cm}{!}{$ 
    \begin{array}{|c|c|c|c|}
     \hline 
        \multicolumn{3}{|c|}{\text{6d ${\N}=(1,0)$ Quasicrystalline Orbifolds}}\\
        \hline 
\text{Type}          & \text{Het}                                       & \text{Het}                                 \\ \hline
\Gamma^{4+x;4}       & \Gamma_{30}^{6;2}\Gamma_{30}^{6,2}[11]+\Gamma(E_8) & \Gamma_{30}^{6;2}\Gamma_{30}^{6,2}[11]+\Gamma(E_8) \\ \hline
\text{Symmetry} & \Z_{30} & \Z_{20}  \\\hline
\text{Twist}         & (13,7,11,13,7,11;1,1)/30                             & (3, 7, 9, 3, 7, 9 ; 1, 1)/20                         \\ \hline
\text{Shift}         & (0^5,9,0^2)/10                                                & (0^5, 10, 7, 0)        /20                                      \\ \hline
    \text{Gauge group}   & E_7\times U(1)                      & E_6\times U(1)^2                   \\ \hline
\text{Matter}   &   ( \textbf{1})_{(2,\ 3,\ 4,\ 5,\ 6,\ 7)}^{(82,64,42,28,12,4)}                                                       &           ( \textbf{1})_{([7,0],\ [3,0],\ [4,0],\ [6,0],[0,3],[1,0],[2,0],[5,0])}^{(\ 1,  \quad \ 9,\quad  \ 14, \quad  6, \ \ 20, \  \  67, \ \ 72, \ \  27 \ )}                    \\ & &  ( \textbf{1})_{( [2,-3],\ [-3,3],[5,3],[5,-3],[2,3],[3,3],[1,-3],[-4,3],[6,3])}^{(\ \   6, \quad \ \ 5, \quad \ 3, \quad \ 4, \quad 24, \ \ 18,  \quad 34, \quad 12, \quad 2 \ )}                                            \\ \hline
\text{Spectrum}      & G+ T +134V +378H_c                                  & G+ T +80V +324H_c                                   \\ \hline
\text{Higgsed Phase} & \mathbb{F}_8                                      & \mathbb{F}_6                                        \\ \hline
    \end{array} $}
\end{align*}
\caption{Quasicrystalline orbifolds with $8$ supercharges in 6d. We provide the Narain lattice $\Gamma^{4+x,4}$ where $x=0$ for type II or $x=16$ for heterotic, the twist, the shift, information about the spectrum, and the potential higgsed phases of the models. The notation is as explained in \autoref{tab:Q8}.}
\label{tab:Q8B}
\end{table}

\newpage
\subsubsection*{5d}
Quasicrystalline compactifications are possible in even dimensions as reviewed in \autoref{app:classification} but one can consider the 6d theories on a circle with a free action along the circle corresponding to  shifting. Since a shift direction opens up with the extra circle, theories that might not have satisfied level matching in 6d can get cured with a shift. Such an example is demonstrated in  \autoref{tab:Q85d} corresponding to a $\Z_{42}$ orbifold. Additionally, all heterotic theories of \autoref{tab:Q8} will have an $18$ dimensional Coulomb branch as the first example of \autoref{tab:Q85d} .  The type II theory will have a $10$ dimensional Coulomb branch similar to the second example. This is because the gauge symmetry came from the untwisted sector in all these examples and hence remains after the shift.

In particular, the $\mathbb Z_{42}$ example has minimal number of vector multiplets and a very large discrete gauge symmetry group: $\mathbb Z_2.\mathbb Z_{42}$ generated by the quantum symmetry $\mathcal Q$ and reflection $-1$. This is an isolated branch in the moduli space with a $\mathbb Z_{42}$ discrete gauge symmetry at all points in the moduli space. Because of the free action generically only massive states carry charge under this symmetry. The usual geometric intuition tells us that Calabi Yau moduli spaces in general do not have generic discrete symmetries but at special points they do (for example the moduli space of the quintic  has points with $\mathbb Z_{41}$ symmetry \cite{Greene:1998yz}). From the example with generic discrete gauge symmetries the order of the symmetry group is much smaller. Therefore, both the lack of a universal hyper and the large generic discrete symmetries make these models non-geometric. But as discussed earlier we generically expect that the non-geometric  models are connected to the geometric ones at some special points.

Lastly, we mention that just as was noticed in \cite{Gkountoumis:2023fym, Baykara:2023plc} all examples without hypers have even and bounded rank in 5d. It would be interesting to investigate non-cyclic orbifold constructions which may be more promising to provide the odd rank cases, if they exist.

\renewcommand{\baselinestretch}{1.5}
\begin{table}[h!] 
\begin{align*}
\hspace{-0.5cm}
\resizebox{16cm}{!}{$ 
    \begin{array}{|c|c|c|c|c|c|c|}
    \hline
        \multicolumn{7}{|c|}{\text{5d ${\N}=1$ Quasicrystalline Freely Acting Orbifolds}}\\
        \hline
        \text{Type} & \Gamma^{21;5} & \text{Twist} & I  &\text{Gauge group}& \text{Spectrum} \\
        \hline
        \text{Het} & 2\Gamma_{12}^{2,2}+2\Gamma(E_8)+\Gamma^{1;1} & \Z_{12}:(1,1;5,5)/12 & 2\Gamma(E_8)+\Gamma^{1;1} &  U(1)^{18} & G +18 V \\
        \text{Het} & \Gamma_{20}^{2,6}\Gamma_{20}^{2,2}[11]+\Gamma(E_8)+\Gamma^{1;1} & \Z_{20}:(3,7,9,3,7,9;1,1)/20 & \Gamma(E_8)+\Gamma^{1;1} &  U(1)^{10} & G +10V\\
        \text{Het} & 2\Gamma_{42}^{2,10}+\Gamma^{1;1} & \Z_{42}:(5,11,13,17,19,5,11,13,17,19;1,1)/42 & \Gamma^{1;1} &U(1)^2 & G +2V\\
        \hline
    \end{array}$ }
\end{align*}
\caption{Freely acting quasicrystalline orbifolds with $8$ supercharges in 5d. By doing a shift on $\Gamma^{1;1}$ with a circle of radius large enough, we lift the twisted sectors. The 5d spectrum is then given by the untwisted sector of the 6d orbifold and KK modes.}
\label{tab:Q85d}
\end{table}

\subsubsection*{4d}
In \autoref{tab:4d-N=2}, we list quasicrystalline orbifolds in 4d with $Q=8$. Given that quasicrystals act on both left and right, only type II can be used to give this amount of supercharges. We list only IIB, since the spectrum of IIA can be found by exchanging the corresponding hodge numbers. In particular,  the number of vector multiplets is given by $h_{2,1}$ and of hypermultiplets is given by $h_{1,1}+1$.

The first $\mathbb Z_{12}$ quasicrystalline orbifold has the same spectrum whether considered from IIA or IIB similar to the 
self-mirror Calabi-Yau that has the same Hodge numbers.

\begin{table}[h!] 
\begin{align*}
\resizebox{15cm}{!}{$ 
    \begin{array}{|c|c|c|c|c|}
    \hline
        \multicolumn{5}{|c|}{\text{4d Type IIB } \mathcal N=2\text{ Quasicrystalline Orbifolds}}\\
        \hline
        \Gamma^{6,6} & \text{Twist} &\text{Gauge group}& \text{Spectrum}& (h_{2,1},h_{1,1}) \\
        \hline
         3\Gamma_{12}^{2,2}& (1,1,2;5,5,10)/12   & U(1)^{11} & G+11V+12H_0 & (11,11), \cite{Candelas:2016fdy} \\
         3\Gamma_{12}^{2,2} & (1,2,3;5,10,15)/12   & U(1)^{22} & G+22V+11H_0& (22, 10),\cite{Kreuzer:2000xy}\\
         3\Gamma_{12}^{2,2} & (1,3,4;5,15,20)/12   & U(1)^{14} & G+14V+27H_0& (14, 26),\cite{Kreuzer:2000xy}\\
         3\Gamma_{12}^{2,2} & (1,4,5;5,20,1)/12   & U(1)^{29} & G+29V+6H_0& (29, 5),\cite{Kreuzer:2000xy}\\
         \Gamma_{24}^{4,4}+\Gamma_{12}^{2,2}& (1,7,8;5,11,16)/24   & U(1)^8 & G+8V+21H_0& (8, 20)\\
         \Gamma_{24}^{4,4}+\Gamma_{12}^{2,2} & (1,11,10;5,7,2)/24   &U(1)^{13} & G+ 4V+17H_0& (4, 16), \cite{Candelas:2016fdy} \\
        \hline
    \end{array}$}
\end{align*}
\caption{Quasicrystalline orbifolds with $Q=8$ supercharges in 4d. We provide the spectrum for type IIB theories. The spectrum for IIA theories can be obtained by considering the mirror $h_{1,1}\leftrightarrow h_{2,1}$. In IIB we have $h^{1,1}+1$ hyper multiplets and $h^{1,2}$ vectors. The Calabi-Yau references correspond to existing Calabi-Yau manifolds or their mirrors known in the literature that would correspond to the same low energy spectrum.}
\label{tab:4d-N=2}
\end{table}

\subsubsection*{3d}
In 3d, there is only one kind of supermultiplet---the vector and matter multiplets are indistinguishable. Additionally, since it is an odd dimension, we get an extra circle that we can shift on and  lift the twisted sectors.

\begin{table}[h!]
    \begin{align*}
    \begin{array}{|c|c|c|}
    \hline
        \multicolumn{3}{|c|}{\text{3d ${\cal N }=2$ Freely Acting Quasicrystalline Orbifolds}}\\
        \hline
        \Gamma^{7,7} & \text{Twist}  & \text{Spectrum} \\
        \hline
        2\Gamma^{2,2}_{12}+\Gamma^{2,2}(A_1^2)+\Gamma^{1,1} &  (1,1,0;5,5,6)/12   & G+2V\\
        \Gamma^{2,2}_{12}+\Gamma^{4,4}_{24}+\Gamma^{1,1} &  (1,10,11;2,5,7)/24 & G+V+H\\
        2\Gamma^{2,2}_{12}+\Gamma^{2,2} &  (1,1,12;5,5,0)/12 &  G+6V\\
        \Gamma^{6,6}_{21}+\Gamma^{1,1} &  (1,4,5;2,8,10)/21 &  G+V+H\\
        \hline
    \end{array}
\end{align*}
\caption{Freely acting quasicrystalline orbifolds with $Q=4$ supercharges. We denote the 4d uplift of these models if they were to exist in order to be more descriptive about the spectrum.}
\label{tab:3d-quasi}
\end{table}

\newpage
\subsection{$\mathcal{Q}=4$ Supercharges}
Here we consider some examples of theories with $\mathcal{N}=4$ supercharges similar to F-theory on an elliptic Calabi-Yau fourfold. One could consider also evaluate the superpotential and K\"ahler potential for such theories which is left for future work.
\subsubsection*{4d}
In \autoref{tab:4d-N=1}, we list some $Q=4$ quasicrystalline orbifolds of potential interest with minimal amount of matter. We provide the charged spectra in appendix \ref{app:4dmatter}.

The heterotic $\mathbb Z_{24}$ quasicrystalline orbifold only has one neutral complex matter multiplet while the other matter multiplets are charged. This is the minimal number of neutral matter one could get because the complex axio-dilaton is always neutral and cannot be projected out in a perturbative string theory. In this sense, this model is minimal  in the perturbative string corner.

\begin{table}[h!]
\begin{align*}
\resizebox{16cm}{!}{$ 
    \begin{array}{|c|c|c|c|c|c|c|}
    \hline
        \multicolumn{7}{|c|}{\text{4d } \mathcal N=1\text{ Quasicrystalline Orbifolds}}\\
        \hline
        \text{Type}  & \text{Twist} & I & \text{Shift} &\text{Gauge group}& \text{4d Spectrum} \\
        \hline
        \text{IIA}  & (-1)^{F_L}(2,5,7;1,10,11)/24 & 0 & 0 &U(1)^{3} & G+ 3V+35M_0 \\
        \text{IIB}  & (-1)^{F_L}(2,5,7;1,10,11)/24 & 0 & 0 &U(1)^{15} & G+ 15V+23M_0 \\
        \text{Het}  & (1,1,2;5,5,10)/12 & 2\Gamma(E_8) & 
            (7, 1, 8, 11 ,
            8 , 9 , 0 , 8 ,
            5 , 10 , 10 ,5,
            1 , 8 , 9 , 3)/12 & SU(4)^2\times SU(2)^3\times U(1)^7 & G+46V+459M_c+2M_0 \\
        \text{Het} & (2,5,7;1,10,11)/24 & 2\Gamma(E_8) & (19,20,6,0,10,18,17,6,20,14,14,21,23,5,21,4)/24 &SU(2)^5\times U(1)^{11} & G+ 26V + 277M_c +M_0\\
        \hline
    \end{array}$}
\end{align*}
\caption{Quasicrystalline orbifolds with $Q=4$ supercharges in 4d.A matter multiplet consists of a complex scalar and a Weyl fermion, and $M_0, M_c$ corresponds to neutral and charged matter. The 3d spectrum has one extra vector.}
\label{tab:4d-N=1}
\end{table}

\section{Conclusion and future directions}
In this work we have studied a special type of asymmetric orbifolds called quasicrystalline orbifolds that are specified by some quasi-periodicity of the Narain lattices. They correspond to irrational 2d CFT and this irrationality helps to lift massless states. We have identified all such irreducible abelian quasicrystals in 6d with 16 supercharges and have given various examples in lower dimensions and with lower amounts of supersymmetry. We expect that such symmetries may arise at special strong coupling points in geometric models and hence in a sense many of these orbifolds may be deformable to geometric models. We have also identified large quantum symmetries which correspond to large discrete symmetries of the bulk theory. Additionally, we identified four more potential string islands in 6d with 16 supercharges by constructing their circle compactification to 5d.  Their explicit construction in 6d is in progress in \cite{Baykara:tobe}.

In a accompanying work \cite{Baykara:nonsusy} we study also applications of quasicrystalline compactifications to non-supersymmetric strings and identify such models which are tachyon free and rigid in the sense of having only one neutral modulus given by the dilaton. These theories are similar but correspond to a separate class from the $O(16)\times O(16)$ non-susy string theories \cite{Alvarez-Gaume:1986ghj,Dixon:1986iz} and their compactification \cite{Ginsparg:1986wr,Fraiman:2023cpa} but they could be related in some web of dualities.

It would be interesting to classify large classes of exotic orbifold models and understand their exotic features like the violation of the Kodaira condition, no neutral hyper models etc. Such an analysis would give a more complete idea of the consistent string landscape and potentially avoid geometric lamppost effects. Additionally, it would be interesting to continue the search for non-susy orbifold models with no tachyon that may have potential phenomenological implications.

Other interesting features of orbifolds that are of particular interest include constructions of non-perturbative orbifolds and their generic consistency. A subclass of such theories in the context of F-theory is studied in \cite{Baykara:ftheory}. This work and similar works point towards a more complete picture of the boundaries of consistent quantum gravity vacua and allows us to sharpen the Swampland criteria.

\section{Acknowledgements}

We thank Yuta Hamada for participation at early stages of this work.
HCT would like to thank Harvard University and the Swampland Initiative for their hospitality during part of this work.
 The work of ZKB and CV is supported by a grant from the Simons Foundation (602883,CV), the DellaPietra Foundation, and by the NSF grant PHY-2013858.

\appendix{}

\section{Narain lattice to background conversion}\label{app:covariant}
The correspondence between the Narain lattice and the KK momenta and winding is \cite[Eq. (10.37)]{Blumenhagen:2013fgp}
\begin{align}
p^I_{L,R}=\frac 1 {\sqrt{2}}\left(\sqrt{\alpha'}p^I \pm \frac 1 {\sqrt{\alpha'}} L^I\right),
\end{align}
where $p^I$ is the KK momentum and $L^I$ is the winding length
\begin{align}
    X^I(\sigma+2\pi ,\tau) = X^I(\sigma,\tau) + 2\pi L^I.
\end{align}

Equivalently, the KK momenta and winding length in terms of $p_L,p_R$ are given by
\begin{align}
    L^I&=\sqrt{\frac{\alpha'}{2}} (p^I_L - p^I_R),\\
    p^I&=\frac{1}{\sqrt{2\alpha'}}(p^I_L+p^I_R).
\end{align}

From the winding length data, it is easy to read off the metric. The lattice of winding lengths
\begin{align}
    \Lambda_d &=\{L \mid X(\sigma+2\pi,\tau)=X(\sigma,\tau)+2\pi L\}\\
    &= \left\{\sqrt{\frac{\alpha'}{2}} (p_L - p_R) \mid (p_L;p_R)\in \Gamma^{d;d}\right\}
\end{align}
is precisely the lattice corresponding to the spacetime torus
\begin{align}
    T^d = \mathbb R^d/2\pi \Lambda_d.
\end{align}
Therefore, choosing a basis $e_i$ for $\Lambda_d$, the background metric can be written as $G_{ij}=e_i\cdot e_j$ in the lattice basis. We use capital letters $I,J$ to denote Cartesian coordinate indices and $i,j$ to denote lattice coordinate indices. One can convert back to Cartesian coordinates by $G_{IJ}= e_I^{*i}G_{ij} e^{*j}_J$, where $e^{*i}$ form a basis for the dual lattice $\Lambda_d^*$.

Determining the $B$-field is more subtle. This is because KK momentum $p^I$ actually does not generate center of mass translations when there is a nontrivial $B$-field. Intuitively, the $B$-field assigns a phase to the worldsheet. If a string with winding $L^I$ is translated, it will pick up a phase like $e^{i B_{IJ} L^I x^I}$ corresponding to the worldsheet sweeped by that translation, independent of the phase $e^{i x_I p^I}$ due to KK momentum. Therefore, the actual generator of translations depends both on the KK momentum and the winding
\begin{align}\label{eq:canonical-conj}
    \pi_I = G_{IJ} p^J +\frac 1 {\alpha'}B_{IJ}L^J.
\end{align}

Since $\pi_I$ generates translations on $T^d$, periodicity requires
\begin{align}
    \pi_I\in \Lambda_d^*.
\end{align}
Define the KK momentum lattice as
\begin{align}
    \Lambda_{KK}=\left\{ p^I=\frac{1}{\sqrt{2\alpha'}}(p^I_L+p^I_R)\mid (p_L;p_R)\in \Gamma^{d;d}\right\},
\end{align}
and in terms of the lattice basis as
\begin{align}
    \Lambda_{kk}=\{p^i = p^Ie_I^{*i}\mid p^I\in \Lambda_{KK}\}.
\end{align}

We can solve for the $B$-field by writing \eqref{eq:canonical-conj} in lattice basis
\begin{align}
    \frac 1 {\alpha'} e^{*i}_IB_{ij}e^{*j}_J L^J&= \pi_I - e_I^{*i} G_{ij}e_J^{*j}p^J\\
    \frac 1 {\alpha'} B_{ij} L^j &= \pi_I e_i^I- G_{ij} p^j.
\end{align}
Note that $L^j$ is an integer because $L^J$ span the lattice $\Lambda_d$. Similarly, $\pi_I e^I_i$ is an integer because $\pi_I$ is in the dual lattice. Writing the above in terms of lattices, we get
\begin{align}
    \frac 1 {\alpha'} [B_{ij}] \mathbb Z^d&=\mathbb Z^d \cup [G_{ij}]\Lambda_{kk}.
\end{align}
In practice, one chooses a basis $f_i$ for the right hand side, then solve for an antisymmetric $B$-field by
\begin{align}
    f_i = \frac 1 {\alpha'} \sum_j  B_{ij}.
\end{align}

\section{Classification of Narain lattice symmetries}\label{app:Narain}
In this section, we classify all possible finite order symmetries of Narain lattices.

\autoref{app:lattice-theory} sets the definitions. We first define and differentiate between three concepts that progressively build on each other: free $\mathbb Z$-modules (structures isomorphic to $\mathbb Z^n$), lattices, and Narain lattices. In \autoref{app:classification}, we classify the symmetries of the structures defined. In particular, we classify symmetries of free $\mathbb Z$-modules, and determine the conditions for such symmetries to act on lattices. We also show that any symmetry that is a symmmetry of a unimodular lattice is also a symmetry of a Narain lattice. Finally in \autoref{app:quasi}, we reintroduce quasicrystalline compactifications in terms of the mathematical language developed.

We mention the main results here to make navigation easier. \autoref{thm:decomposition} is a classification of automorphisms of free $\mathbb Z$-modules, in other words, finite order elements $\theta\in \mathrm{GL}(n,\mathbb Z)$. Corollary \autoref{cor:prime-pow} is the sufficient (and necessary \cite{Harvey:1987da}) condition for the existence of a symmetry $\theta$ to act on a unimodular lattice. Corollary \autoref{cor:lattice-aut-exists} states that any $\theta\in\mathrm{GL}(n,\mathbb Z)$ can be used to construct an even lattice, possibly nonunimodular. In \autoref{sub:sym-embed} we give an argument that any finite order symmetry of a unimodular lattice is also the symmetry of a Narain lattice.
\subsection{Lattice theory review}\label{app:lattice-theory}
\subsubsection{Abstract lattices}
\begin{definition}
    A lattice $(\Lambda,q)$ is a free $\mathbb Z$-module $\Lambda$ equipped with a quadratic form $q$
    \begin{align}
        q(nv)=n^2q(v),\qquad n\in \mathbb Z, v\in \Lambda.
    \end{align}
\end{definition}
One can use a bilinear form $\langle -,-\rangle$ (also denoted by a dot $-\cdot -$) interchangeably with the quadratic form $q$. The conversion between the two is given by the polarization identity
\begin{align}
    \langle v,w\rangle &= \frac 1 2 \left(q(v+w)-q(v)-q(w)\right),\\
    q(v) &= \langle v,v\rangle.
\end{align}
The lattice is \textit{even} if $q(v)$ is even for all $v\in \Lambda$ and \textit{integral} if $\langle v,w\rangle\in\mathbb Z$ for all $v,w\in \Lambda$. Note that evenness implies integrality. The \textit{signature} $(r;s)$ of the lattice denotes the negative $r$ and positive $s$ indices of inertia of the quadratic form $q$. Lattices with $r=0$ (resp. $s=0$) are called \textit{positive} (resp. \textit{negative}) \textit{definite}. If $r,s\neq 0$, the lattice is \textit{indefinite}.  We denote the lattice $(\Lambda,q)$ with signature $(r;s)$ as $\Lambda^{r;s}$, whereas when we refer to the underlying free $\mathbb Z$-module we use $\Lambda$.

The minimal number $n$ of generators of $\Lambda$ is called its \textit{rank}. Without the quadratic form, $\Lambda$ of rank $n$ is isomorphic to the integer lattice $\Lambda \cong \mathbb Z^n$. With the quadratic form, the right equivalence morphisms are isometries. An \textit{isometry} between two lattices $\Lambda^{r;s}$ and $\Lambda'^{r;s}$ is a module isomorphism $\psi$ that preserves the quadratic form
\begin{align}
    \psi: \Lambda &\to \Lambda',\\
    q(v)&=q'(\psi(v)).
\end{align}
Isometric lattices are denoted as $\Lambda^{r;s}\cong \Lambda'^{r;s}$. An \textit{isometric automorphism} of a lattice $\Lambda^{r;s}$ is an automorphism of $\Lambda$ that is an isometry. The group of isometric automorphisms of a lattice are denoted as
\begin{align}
    \mathrm{Aut}(\Lambda^{r;s}).
\end{align}

\begin{definition}
The \textit{dual lattice} of $\Lambda^{r;s}$ is defined by
\begin{align}
    \Lambda^*:=\{w\in \mathbb Q\otimes \Lambda \mid \langle w,v\rangle\in \mathbb Z\text{ for all }v\in\Lambda \},
\end{align}
endowed with the $\mathbb Q$-linear extension of the bilinear form $\langle -,-\rangle$.
\end{definition}

The lattice is called \textit{unimodular} if it is dual to itself. An important result in lattice theory is that indefinite even unimodular lattices are unique.
\begin{prop}\label{prop:unique-lattice}
    If $\Lambda^{r;s}$ is an indefinite even unimodular lattice, then $r\equiv s\pmod{8}$ and it is unique up to isometry
    \begin{align}
        \Lambda^{r;s} \cong E_8(\pm)^{\oplus \frac{|r-s|}{8}} \oplus U^{\oplus \min (r,s)} =:\mathrm{II}^{r;s},
        \label{evenuni}
    \end{align}
    where $U$ is the hyperbolic lattice and $E_8(\pm)$ is the $E_8$ lattice with positive $(+)$ or negative $(-)$ definite quadratic form.
\end{prop}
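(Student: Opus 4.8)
The plan is to prove the proposition in three stages: existence of a lattice of the stated normal form, necessity of the congruence $r\equiv s\pmod 8$, and uniqueness up to isometry. The first two stages are essentially bookkeeping, whereas the uniqueness is where the genuine arithmetic content lies.

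First I would verify that the right-hand side of \eqref{evenuni} is an indefinite even unimodular lattice of signature $(r;s)$. The hyperbolic lattice $U$ is even and unimodular of signature $(1;1)$, and $E_8(\pm)$ is even and unimodular of signature $(8;0)$ or $(0;8)$. Since orthogonal direct sum is additive on signatures and preserves evenness and unimodularity, the lattice $E_8(\pm)^{\oplus |r-s|/8}\oplus U^{\oplus \min(r,s)}$ — taking $E_8(+)$ when $s>r$ and $E_8(-)$ when $r>s$ — has signature exactly $(r;s)$ and is even unimodular. This exhibits a lattice of the claimed type whenever $8\mid |r-s|$. For the necessity of $r\equiv s\pmod 8$ I would appeal to the Gauss--Milgram reciprocity formula: for any even lattice the signature $s-r$ is congruent mod $8$ to the Gauss--Milgram invariant of its discriminant form $\Lambda^*/\Lambda$. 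For a unimodular lattice $\Lambda^*=\Lambda$, so the discriminant form is trivial and its invariant vanishes, forcing $s-r\equiv 0 \pmod 8$, i.e. $r\equiv s\pmod 8$.

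The substantial part is uniqueness, which I would argue through the local--global theory of quadratic forms. The strategy is to show that any two indefinite even unimodular lattices of the same signature $(r;s)$ lie in the same \emph{genus}, and then invoke Eichler's theorem that an indefinite lattice of rank $\geq 3$ is determined up to isometry by its genus (via strong approximation for the spin group). To check that the genus coincides, I would compare the localizations $\Lambda\otimes \mathbb{Z}_p$ at every place: at the real place the invariant is the signature, fixed by hypothesis; at each odd prime $p$ a unimodular $\mathbb{Z}_p$-lattice is classified by its rank and discriminant, both pinned down by the rank $r+s$ and the global signature through the Hasse product formula; and at $p=2$ one uses the classification of even unimodular $\mathbb{Z}_2$-lattices, built from the two rank-two blocks $U$ and $\left(\begin{smallmatrix} 2 & 1 \\ 1 & 2\end{smallmatrix}\right)$, again constrained to a single class by rank and signature. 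The only indefinite rank-$2$ case, signature $(1;1)$, falls outside Eichler's hypothesis and I would dispatch it directly by checking that $U$ is the unique even unimodular lattice of that signature.

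I expect the main obstacle to be the $2$-adic step: verifying that the rank and signature force a single $2$-adic isometry class, and hence a single genus. The local theory of even forms at $p=2$ is precisely where the subtle $\pmod 8$ phenomena are concentrated, so this is the delicate computation underpinning both the congruence constraint and the collapse of the genus to one isometry class.
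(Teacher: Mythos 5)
The paper does not actually prove Proposition \ref{prop:unique-lattice}: it is quoted as a standard classical result of lattice theory (the Milnor--Eichler classification of indefinite even unimodular forms, as in Serre's \emph{Course in Arithmetic} or Milnor--Husemoller), so there is no in-text argument to compare yours against. Judged on its own, your outline is the correct standard proof and all three stages are sound: the direct-sum construction settles existence, Gauss--Milgram applied to the trivial discriminant form gives $r\equiv s\pmod 8$ (just make sure you invoke a proof of Milgram's formula that does not itself presuppose the mod-$8$ theorem for unimodular lattices, e.g.\ the Gauss-sum computation on generators of the Witt group of finite quadratic forms rather than the theta-function route), and genus theory plus Eichler settles uniqueness. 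Two standard points are elided and worth making explicit. First, at the local step it is not rank alone but rank together with the determinant (which is $(-1)^{r}$, hence fixed by the signature) that pins down the $\mathbb{Z}_p$-class; at $p=2$ this is exactly what excludes the second even unimodular class $U^{\oplus k-1}\oplus\left(\begin{smallmatrix}2&1\\1&2\end{smallmatrix}\right)$, whose determinant differs from $(-1)^{r}$ by the nonsquare $-3$ once $r\equiv s\pmod 8$. Second, Eichler's theorem as usually stated says the \emph{spinor genus} of an indefinite lattice of rank $\geq 3$ contains one isometry class, so you also need the (true, standard) fact that the genus of a unimodular lattice consists of a single spinor genus. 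With those two clarifications your argument is complete; alternatively, the more elementary route of Serre splits off hyperbolic planes via Meyer's theorem until a definite even unimodular remainder is reached, but then still needs a stabilization fact (e.g.\ $E_8^{\oplus 2}\oplus U\cong D_{16}^{+}\oplus U$) that is most cleanly obtained from the same genus argument you use.
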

\subsubsection{Narain lattice}
String theory compactified on a $d$-dimensional torus $T^d$ is characterized by the embedding of the even unimodular lattice $\mathrm{II}^{d+x;d}\hookrightarrow\Gamma^{d+x;d}\subset \mathbb R^{d+x;d}$, where $x=16$ for heterotic strings and $x=0$ otherwise \cite{Narain:1985jj, Narain:1986am}. The embedded lattice $\Gamma^{d+x;d}$ is called the \textit{Narain lattice}. 

The \textit{T-duality group} is given by $\mathrm{Aut}(\mathrm{II}^{d+x,d})\cong \mathrm{Aut}(\Gamma^{d+x,d})$. Depending on the lattice embedding, some combination of T-dualities can become symmetries. In particular, an isometric automorphism $\theta \in \mathrm{Aut}(\Gamma^{d+x,d})$ that decomposes into left and right parts as $\theta=(\theta_L;\theta_R)$ acts as a symmetry on the worldsheet CFT. We call the group of such isometric automorphisms the \textit{Narain symmetry group}
\begin{align}
    \mathrm{Sym}(\Gamma^{d+x;d}):=\mathrm{Aut}(\Gamma^{d+x;d}) \cap \left(\mathrm{O}(d+x)\times \mathrm{O}(d)\right).
\end{align}
\subsection{Classification of symmetries}\label{app:classification}
Our goal is to classify the symmetries that can occur in $\mathrm{Sym}(\Gamma^{r;s})$. We will not limit our discussion to unimodular lattices, as the gluing construction in appendix \ref{app:glue} can be used to obtain unimodular lattices from nonunimodular ones. However, we limit our discussion to even lattices.

Consider the structures we have reviewed: free $\mathbb Z$-modules $\Lambda$, lattices $\Lambda^{r;s}$, embedded lattices $\Lambda^{r;s}\hookrightarrow \Gamma^{r;s}\subset \mathbb R^{r;s}$, and the automorphism structures of each.
\begin{center}
    \begin{tikzcd}
    \Lambda \arrow[r] & \Lambda^{r;s} \arrow[r] & (\Lambda^{r;s} \hookrightarrow \Gamma^{r;s}\subset\mathbb R^{r;s})\\
    \mathrm{Aut}(\Lambda)  \arrow[r, hookleftarrow] & \mathrm{Aut}(\Lambda^{r,s}) \arrow[r, hookleftarrow] & \mathrm{Sym}(\Gamma^{r;s})
\end{tikzcd}
\end{center}
Our procedure will be to start with an element $\theta\in\mathrm{Aut}(\Lambda)$, and then to construct a quadratic form $q$ and an embedding $\Lambda^{r;s}\hookrightarrow\Gamma^{r;s}$ compatible with $\theta$ so that we can pull it back to $\mathrm{Sym}(\Gamma^{r;s})$.
\subsubsection{Free $\mathbb Z$-module automorphisms}
The automorphism group of a free $\mathbb Z$-module $\Lambda$ of rank $n$ is isomorphic to the group of invertible integer matrices $\mathrm{Aut}(\Lambda)\cong\mathrm{GL}(n,\mathbb Z)$, so we use them interchangeably. The invertible integral matrices have determinant $\pm 1$. In this subsection, we will determine what finite orders are possible in this group. For an exposition on integral matrices, see \cite{newman,5e076f03-1178-3b70-95a5-ab41b5a4d470}.

We will characterize matrices $\theta\in\mathrm{GL}(n,\mathbb Z)$ using their associated polynomials.
\begin{definition}
    The characteristic polynomial of a square matrix $\theta$ is
    \begin{align}
        \chi_\theta(x) = \det (xI-\theta),
    \end{align}
\end{definition}
\begin{definition}
    The minimal polynomial of a square matrix $\theta$ is the monic polynomial $\mu_\theta(x)$ of smallest degree such that
    \begin{align}
        \mu_\theta(\theta)=0.
    \end{align}
\end{definition}
The roots of the characteristic polynomial $\chi_\theta(x)$ are the eigenvalues of $\theta$ with multiplicity. The roots of the minimal polynomial $\mu_\theta(x)$ are the eigenvalues of $\theta$ each with multiplicity $1$. Note that $\chi_\theta(\theta)=\mu_\theta(\theta)=0$ and $\mu_\theta(x)$ divides $\chi_\theta(x)$.

Conversely, one can also define a matrix starting with a polynomial.
\begin{definition}
    Given a monic irreducible polynomial
    \begin{align}
        p(x) = x^n + a_{n-1}x^{n-1} + \dots + a_{1}x+a_0,
    \end{align}
    the companion matrix $C(p)$ is defined as
    \begin{align}
        C(p) := \begin{pmatrix}
        0 & 0 & 0 & \cdots & -a_0\\
        1 & 0 & 0 & \cdots & -a_1\\
        0 & 1 & 0 & \cdots & -a_2\\
         & \vdots & & \vdots\\
        0 & 0 & 0 & \cdots & -a_{n-1}
    \end{pmatrix}.
    \end{align}
\end{definition}
By construction, the characteristic polynomial of the companion matrix $C(p)$ is
\begin{align}
    \chi_{C(p)}(x)=p(x).
\end{align}

To construct elements of order $m$ in $\mathrm{GL}(n,\mathbb R)$, it is enough to consider one rotation plane and rotate by $2\pi r/m$ where $r$ is coprime with $m$. However, such a matrix may not be integral. 

Instead, we construct an element of order $m$ in $\mathrm{GL}(n,\mathbb Z)$ with multiple rotation planes, each with a different rotation phase $2\pi r/m$ where $0<r<m$ is coprime with $m$. Such integers $r$ are called \textit{totatives} of $m$, and the number of totatives of $m$ is given by \textit{Euler's totient function} $\phi(m)$.
\begin{definition}
    The $m$th cyclotomic polynomial is
    \begin{align}
        \Phi_m(x):=\prod_{\substack{\mathrm{gcd}(r,m)=1 \\ 0<r<m}} (x-e^{2\pi i r/m}).
    \end{align}
\end{definition}
Some important properties of cyclotomic polynomials are as follows.
\begin{lemma}\label{lem:cyclotomic-props}
    Cyclotomic polynomials have the following properties:
    \begin{enumerate}
        \item $\Phi_m(x)$ is a monic polynomial of degree $\phi(m)$,
        \item $\Phi_m(x)$ is an irreducible polynomial over the integers,
        \item The irreducible decomposition of $x^m-1$ is
        \begin{align}
            x^m-1 = \prod_{d|m}\Phi_d(x).
        \end{align}
    \end{enumerate}
\end{lemma}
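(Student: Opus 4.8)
The plan is to dispatch the three items in increasing order of difficulty, treating the irreducibility in item (2) as the real content. For item (1), the definition exhibits $\Phi_m(x)$ as a product of $\phi(m)$ monic linear factors $(x-e^{2\pi i r/m})$, one for each totative $r$ of $m$; a product of monic polynomials is monic and the degrees add, so $\Phi_m$ is monic of degree $\phi(m)$ immediately. For item (3), I would argue by comparing roots over $\C$. Every $m$-th root of unity $e^{2\pi i k/m}$ with $0\le k<m$ is a primitive $d$-th root of unity for a unique divisor $d\mid m$, namely $d=m/\gcd(k,m)$; conversely each primitive $d$-th root for $d\mid m$ is an $m$-th root of unity. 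This partitions the roots of $x^m-1$ according to their exact order, and the roots of exact order $d$ are precisely the roots of $\Phi_d$. Since $x^m-1$ has $m$ distinct (simple) roots and both sides of the claimed identity are monic, the factorization $x^m-1=\prod_{d\mid m}\Phi_d(x)$ follows over $\C$. A useful byproduct, which I would establish here by induction on $m$, is that $\Phi_m\in\Z[x]$: writing $\Phi_m=(x^m-1)/\prod_{d\mid m,\,d<m}\Phi_d$ and using that division of monic integer polynomials stays in $\Z[x]$, the integrality of the proper divisors $\Phi_d$ lifts to $\Phi_m$.

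The substance is item (2), irreducibility over $\Z$ (equivalently over $\mathbb{Q}$ by Gauss's lemma). Fix a primitive $m$-th root of unity $\zeta$ and let $f\in\Z[x]$ be its minimal polynomial over $\mathbb{Q}$, which is monic with integer coefficients since $\zeta$ is an algebraic integer. As $f$ divides $\Phi_m$, it suffices to show that every primitive $m$-th root of unity is a root of $f$, which forces $\deg f=\phi(m)$ and hence $f=\Phi_m$. Because every totative $r$ of $m$ factors into primes none of which divide $m$, and the primitive roots are exactly the $\zeta^r$, it is enough to prove the following closure property: if $\alpha$ is any root of $f$ and $p$ is a prime with $p\nmid m$, then $\alpha^p$ is also a root of $f$. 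Iterating this prime by prime starting from $\zeta$ then reaches all the $\zeta^r$.

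The hard step, and the main obstacle, is this closure property, which I would prove by reduction modulo $p$. Suppose for contradiction that $\alpha^p$ is not a root of $f$. Writing $\Phi_m=f\,g$ with $g\in\Z[x]$ monic, the element $\alpha^p$ is then a root of $g$, so $\alpha$ is a root of $g(x^p)$, and minimality of $f$ gives $f(x)\mid g(x^p)$ in $\Z[x]$. Reducing mod $p$ and using the Frobenius identity $\bar g(x^p)=\bar g(x)^p$ in $\mathbb{F}_p[x]$, I obtain that $\bar f$ divides $\bar g^{\,p}$, so $\bar f$ and $\bar g$ share a common irreducible factor in $\mathbb{F}_p[x]$. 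Then $\overline{\Phi_m}=\bar f\,\bar g$ has a repeated factor, and since $\Phi_m\mid x^m-1$, the polynomial $x^m-1$ would have a repeated root over $\mathbb{F}_p$. But for $p\nmid m$ the derivative $m x^{m-1}$ is coprime to $x^m-1$ in $\mathbb{F}_p[x]$, so $x^m-1$ is separable there, a contradiction. This establishes the closure property, completes item (2), and, combined with item (3), shows that $x^m-1=\prod_{d\mid m}\Phi_d$ is genuinely the decomposition into irreducibles over $\Z$.
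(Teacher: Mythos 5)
Your proof is correct and fills in, at full length, exactly the standard argument that the paper itself merely outsources: the paper's proof consists of noting (1) is definitional, citing Dummit--Foote p.~554 for the irreducibility in (2), and asserting (3). Your treatment of (3) via partitioning the $m$-th roots of unity by exact order is in fact cleaner than the paper's claim that ``(3) follows from (2)'' (the factorization identity comes from the root partition; irreducibility only certifies that the factors are irreducible), and your mod-$p$ Frobenius argument for (2) is the same one found in the cited reference.
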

\begin{proof}
Properties (1) follow from the definition of cyclotomic polynomials. For proof of property (2), see \cite[p. 554]{dummit2003abstract}. Property (3) follows from (2).
\end{proof}

Given the coefficients of the $m$th cyclotomic polynomial,
\begin{align}
    \Phi_m(x) = x^{\phi(m)} + a_{\phi(m)-1} x^{\phi(m)-1} + \dots + a_1 x + a_0,
\end{align}
we define the companion matrix
\begin{align}
    C(\Phi_m) = \begin{pmatrix}
        0 & 0 & 0 & \cdots & -a_0\\
        1 & 0 & 0 & \cdots & -a_1\\
        0 & 1 & 0 & \cdots & -a_2\\
         & \vdots & & \vdots\\
        0 & 0 & 0 & \cdots & -a_{\phi(m)-1}
    \end{pmatrix}.
\end{align}
Since each eigenvalue is distinct, both the characteristic and minimal polynomials of $C(\Phi_m)$ are given by the $m$th cyclotomic polynomial $\chi_{C(\Phi_m)}=\mu_{C(\Phi_m)}=\Phi_m$.

Now we show that $C(\Phi_m)$ are the building blocks of invertible integral matrices with finite order.

\begin{theorem}\label{thm:decomposition}
    An element $\theta\in \mathrm{GL}(n,\mathbb Z)$ of order $m=p_1^{e_1}p_2^{e_2}\dots p_t^{e_t}$ with prime $p_1<p_2<\dots <p_t$ exists if and only if $\theta$ is similar over $\mathbb Q$ to a block matrix
    \begin{align}
        Q\theta Q^{-1} = \bigoplus_{j=1}^k C(\Phi_{d_j})^{\oplus \ell_j}, \qquad \text{with }\ell_j\in\mathbb N, \quad Q\in \mathrm{GL}(n,\mathbb Q), 
    \end{align}
    where each $d_j$ divides $m$ and $\mathrm{lcm}(d_1,\dots,d_k)=m$, with $n$ satisfying
    \begin{equation}
    \begin{aligned}\label{eq:rank-ineq}
        \sum_{i=1}^t (p_i-1)p_i^{e_i-1}-1 &\leq n\qquad\text{for } p_1^{e_1}=2,\\
        \sum_{i=1}^t (p_i-1)p_i^{e_i-1}&\leq n\qquad \text{otherwise}.
    \end{aligned}
    \end{equation}
\end{theorem}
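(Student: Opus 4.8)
The plan is to reduce the entire classification to the factorization of $x^m-1$ into cyclotomic polynomials, and then to solve a small combinatorial optimization for the rank bound. First I would translate the finite-order hypothesis into a statement about the minimal polynomial. If $\theta\in\mathrm{GL}(n,\mathbb Z)$ has order $m$, then $\theta^m=1$ forces $\mu_\theta(x)\mid x^m-1$. Since $x^m-1$ has distinct roots in characteristic zero it is squarefree, and by Lemma \ref{lem:cyclotomic-props}(3) it factors as $\prod_{d\mid m}\Phi_d(x)$ into distinct monic irreducibles; hence $\mu_\theta=\prod_{j}\Phi_{d_j}$ for some divisors $d_j\mid m$, and in particular $\theta$ is semisimple. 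I would then apply the primary decomposition of $\mathbb Q^n$ viewed as a $\mathbb Q[x]$-module with $x$ acting as $\theta$: because the relevant factors are the pairwise coprime irreducibles $\Phi_{d_j}$, one gets $\mathbb Q^n=\bigoplus_j \ker\Phi_{d_j}(\theta)$, and each summand is a vector space over the field $\mathbb Q[x]/(\Phi_{d_j})\cong\mathbb Q(e^{2\pi i/d_j})$ of some dimension $\ell_j$. Choosing a basis identifies $\theta$ on that summand with $C(\Phi_{d_j})^{\oplus\ell_j}$, which yields the asserted block form $Q\theta Q^{-1}=\bigoplus_j C(\Phi_{d_j})^{\oplus \ell_j}$ over $\mathbb Q$.

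Next I would pin down the order in terms of the blocks. Using that $\Phi_d\mid x^N-1$ if and only if $d\mid N$, one has $\theta^N=1\iff \mu_\theta\mid x^N-1\iff d_j\mid N$ for all $j$, so the order of $\theta$ equals $\mathrm{lcm}(d_1,\dots,d_k)$; requiring this to be $m$ gives the stated constraint on the $d_j$ and proves the forward implication. For the converse, any such block matrix already lies in $\mathrm{GL}(n,\mathbb Z)$: each $C(\Phi_{d_j})$ is integral because cyclotomic polynomials are monic with integer coefficients, and it is invertible over $\mathbb Z$ since $\lvert\det C(\Phi_{d_j})\rvert=\lvert\Phi_{d_j}(0)\rvert=1$; by the order computation the direct sum has order $\mathrm{lcm}(d_j)=m$.

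Finally, the rank bound reduces to minimizing $n=\sum_j\ell_j\,\phi(d_j)\ge\sum_{d\in S}\phi(d)$ over sets $S$ of divisors with $\mathrm{lcm}(S)=m$, since $\deg\Phi_{d}=\phi(d)$. For the lower bound, for each prime power $p_i^{e_i}$ dividing $m$ exactly I would select some $d\in S$ with $p_i^{e_i}\mid d$; grouping the primes by which chosen divisor realizes their full power partitions $\{1,\dots,t\}$ into blocks $T_a$ with $\prod_{i\in T_a}p_i^{e_i}\mid g_a$, whence $\phi(g_a)\ge\prod_{i\in T_a}\phi(p_i^{e_i})$. The elementary inequality $xy\ge x+y$ for $x,y\ge2$, iterated, then gives $\prod_{i\in T_a}\phi(p_i^{e_i})\ge\sum_{i\in T_a}\phi(p_i^{e_i})$ whenever every prime power present has $\phi\ge2$, producing $n\ge\sum_i\phi(p_i^{e_i})$ in the generic case. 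When $p_1^{e_1}=2$ the single factor $\phi(2)=1$ can be absorbed into a block containing an odd prime power, since $\phi(2p^e)=\phi(p^e)$, which both attains the optimum and accounts for the $-1$ correction, giving $n\ge\sum_i\phi(p_i^{e_i})-1$; matching constructions such as $S=\{p_1^{e_1},\dots,p_t^{e_t}\}$ or $S=\{2p_2^{e_2},p_3^{e_3},\dots\}$ then show the bounds are achieved.

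I expect this last combinatorial minimization to be the main obstacle: the delicate point is verifying that bundling distinct prime powers into a common divisor never lowers the cost, except through the $\phi(2)=1$ anomaly that is responsible for the separate even case. By contrast, the block decomposition and the order computation are routine module theory over $\mathbb Q[x]$ once the cyclotomic factorization of $x^m-1$ and the irreducibility of the $\Phi_d$ from Lemma \ref{lem:cyclotomic-props} are in hand, and the integrality of the companion matrices makes the existence direction immediate.
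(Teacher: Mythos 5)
Your proposal is correct, and for the one part the paper actually proves --- the reduction to the block form $\bigoplus_j C(\Phi_{d_j})^{\oplus\ell_j}$ --- your route is the same as the paper's: $\mu_\theta\mid x^m-1$, the cyclotomic factorization from Lemma \ref{lem:cyclotomic-props}, primary decomposition, and the identification of each primary component with copies of the companion matrix (you phrase this via vector spaces over $\mathbb Q[x]/(\Phi_{d_j})$, the paper via the constraint that the characteristic polynomial of an integral block must be a power of $\Phi_{d_j}$; these are the same argument). Where you genuinely diverge is that the paper stops there and defers everything else --- the identification of the order with $\mathrm{lcm}(d_1,\dots,d_k)$, the existence direction, and the rank inequality \eqref{eq:rank-ineq} --- to the cited reference, whereas you supply self-contained proofs: the order computation via $\Phi_d\mid x^N-1\iff d\mid N$, integrality and unimodularity of the companion matrices from $\lvert\Phi_d(0)\rvert=1$, and the combinatorial minimization of $\sum_{d\in S}\phi(d)$ over sets of divisors with $\mathrm{lcm}(S)=m$. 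Your minimization argument is sound: grouping the prime powers $p_i^{e_i}$ by which element of $S$ contains them, using $u\mid v\Rightarrow\phi(u)\le\phi(v)$ together with multiplicativity, and the inequality $xy\ge x+y$ for $x,y\ge 2$ correctly isolates the $\phi(2)=1$ anomaly responsible for the $-1$ in the even case. Two cosmetic gaps worth closing: to realize every $n$ satisfying \eqref{eq:rank-ineq} (not just the minimal one) you should pad the optimal block sum with identity blocks $C(\Phi_1)$, which do not change the order; and the degenerate block $T_a=\{1\}$ with $p_1^{e_1}=2$ should be checked separately in the lower bound (it contributes $\phi(g_a)\ge 1\ge\phi(2)-1$, so nothing breaks). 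The payoff of your version is a proof that does not lean on the external reference; the cost is carrying the elementary number theory explicitly.
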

\begin{proof}
We only present the decomposition part of the theorem. For the rest of the proof, see \cite[Theorem 2.7]{5e076f03-1178-3b70-95a5-ab41b5a4d470}. To denote two matrices $A,B$ that are similar over $\mathbb Q$, we write $A\sim B$.

Suppose $\theta\in \mathrm{GL}(n,\mathbb Z)$ has order $m=p_1^{e_1}p_2^{e_2}\dots p_t^{e_t}$. Let $\mu_\theta(x)$ be the minimal polynomial of $\theta$. Then $\mu_\theta(x)$ divides $x^m-1$. Let the irreducible decomposition of $\mu_\theta(x)$ be
\begin{align}
    \mu_\theta(x) = p_1(x)^{f_1}p_2(x)^{f_2}\dots p_k(x)^{f_k}.
\end{align}
We also know the irreducible decomposition of $x^m-1$ from Lemma \ref{lem:cyclotomic-props}
\begin{align}
    x^m-1=\prod_{d|m}\Phi_d(x).
\end{align}
Therefore, $p_j(x)=\Phi_{d_j}(x)$ for some $d_j|m$ with each $d_j$ distinct and $f_j=1$
\begin{align}
    \mu_\theta(x) = \prod_{j=1}^k \Phi_{d_j}(x).
\end{align}
By primary decomposition, we get that $\theta$ is similar over $\mathbb Q$ to a block matrix
\begin{align}
    \theta \sim \begin{pmatrix}
        \theta_{d_1} & 0 &\cdots  & 0\\
        0 & \theta_{d_2} &\cdots &0\\
        \vdots & \vdots  & \ddots & \vdots \\
        0 &0 &\cdots  & \theta_{d_k}
    \end{pmatrix}
\end{align}
with $\theta_{d_i}$ having the minimal polynomial $\Phi_{d_i}(x)$. This means the eigenvalues of $\theta_{d_i}$ are totatives of $d_i$, possibly with multiplicity. Since $\theta_{d_i}$ is an integral matrix, its characteristic polynomial must be
\begin{align}
    \chi_{\theta_{d_i}}(x) = \Phi_{d_i}(x)^{\ell_i}
\end{align}
for some $\ell_i \in \mathbb N$. 
This means that it is similar over $\mathbb Q$ to a block matrix of $\ell_i$ many companion matrices
\begin{align}
    \theta_{d_i} \sim C(\Phi_{d_i})^{\oplus \ell_i}.
\end{align}
\end{proof}
Note that the similarity statement is over $\mathbb Q$ and not necessarily over $\mathbb Z$. However, similarity over $\mathbb Z$ can also be determined for certain cases. Latimer-MacDuffee Theorem \cite{Latimer} states that the number of similarity classes of integral matrices with irreducible minimal polynomial $\mu(x)$ is given by the class number of $\mathbb Z[x]/(\mu(x))$.\footnote{Reducible $p(x)$ was also allowed in the original formulation of the Latimer-MacDuffee theorem with more involved machinery. However, in general, computation of class number is difficult.} For $\mu(x)=\Phi_m(x)$, the class number is $1$ for $\phi(m)<22$. This means for $\theta\in\mathrm{GL}(n,\mathbb Z)$ of order $m$ with $\mu_\theta=\Phi_m$ and $\phi(m)<22$, the automorphism $\theta$ is unique up to conjugacy. In physical terms, this means that quasicrystalline symmetries are unique up to integral transformations for $\Gamma^{r;s}$ with $r+s<22$.

\subsubsection{Lattice automorphisms}
Given a $\theta\in\mathrm{GL}(n,\mathbb Z)$, we now determine if there is a quadratic form $q$ that it fixes as $q(\theta(v))=q(v)$ for all $v\in \Lambda$. This determines if a lattice $\Lambda^{r;s}$ with $\theta \in \mathrm{Aut}(\Lambda^{r;s})$ exists.

We start with the case of unimodular lattices, where there are stringent conditions. Given a polynomial $q(x)$, define the number of eigenvalues $\lambda$ of the companion matrix $C(q)$ with $|\lambda|>1$ by $m(q)$.
\begin{theorem}[\cite{GROSS2002265,BAYERFLUCKIGER2002215,BayerFluckigerTaelman2020}]
    Let $r,s$ be non-negative integers with $r\equiv s\pmod{8}$. Let $p(x)$ be a monic irreducible polynomial and $q(x)=p(x)^n$ for $n$ a non-negative integer. Assume $q(x)$ has degree $r+s$. If
    \begin{enumerate}
        \item $q(x)$ is reciprocal, i.e. $x^{r+s}q(1/x)=q(x)$,
        \item $m(q)\leq \min (r,s)$ \text{ and } $m(q)\equiv r \equiv s \pmod{2}$,
        \item $|q(1)|,|q(-1)|,(-1)^{\frac{r+s}{2}}q(1)q(-1)$ are squares.
    \end{enumerate}
    Then there exists an even unimodular lattice $\Lambda^{r;s}$ and $\theta\in\mathrm{SO}(\Lambda^{r;s})$ with characteristic polynomial $\chi_\theta(x)=q(x)$.
\end{theorem}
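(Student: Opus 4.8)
The plan is to convert the existence of the lattice into a question about Hermitian forms over the ring $R=\mathbb{Z}[x]/(q(x))$ and then to solve that question by a local-to-global analysis. Because $q$ is reciprocal (condition 1), the assignment $x\mapsto x^{-1}$ defines an involution $\sigma$ on $R$; in the irreducible case $q=p$ the field $K=\mathbb{Q}[x]/(p(x))$ carries $\sigma$ with totally real fixed field $K^+$. An even unimodular lattice $\Lambda^{r;s}$ together with $\theta\in\mathrm{SO}(\Lambda^{r;s})$ of characteristic polynomial $q$ is the same datum as a torsion-free $R$-module $M$ equipped with a $\sigma$-Hermitian form $h$ whose trace form $\langle v,w\rangle=\mathrm{Tr}_{K/\mathbb{Q}}(\beta\,h(v,w))$, for a suitable $\beta\in K^+$, is integral, even, unimodular, and of signature $(r;s)$. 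First I would set up this transfer dictionary carefully, identifying $\theta$ with multiplication by $x$, so that $\theta$-invariance of the symmetric form is automatic from the $\sigma$-Hermitian property.

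Next I would solve the problem one place at a time. At the archimedean place the real embeddings and conjugate pairs of $K$ split the form into definite and indefinite blocks: each root $\lambda$ with $|\lambda|>1$, paired with $1/\lambda$, forces a block contributing to both indices, while roots on the unit circle contribute blocks whose signature we may choose by tuning $\beta$. Thus the achievable signatures are exactly those with $m(q)\le\min(r,s)$ and $m(q)\equiv r\equiv s\pmod 2$, which is condition 2. At each finite prime the local form is classified by its discriminant and Hasse--Witt invariant; the relevant discriminants are governed by $\det(1-\theta)=q(1)$ and $\det(1+\theta)=(-1)^{r+s}q(-1)$, so the hypotheses that $|q(1)|$ and $|q(-1)|$ be squares trivialize these discriminants at the primes dividing them, and the condition that $(-1)^{(r+s)/2}q(1)q(-1)$ be a square pins down the product of Hasse invariants demanded by the prescribed signature.

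With the local invariants in hand I would assemble a global rational quadratic space by the Hasse--Minkowski principle: the three numerical conditions are precisely what guarantees that the prescribed collection of local invariants satisfies Hilbert reciprocity (the product formula), so a $\mathbb{Q}$-space carrying a $\theta$-action and a symmetric form of signature $(r;s)$ exists. The final and most delicate step is to descend from this rational space to an actual even unimodular $\mathbb{Z}$-lattice that is $\theta$-stable. I would do this prime by prime, constructing at every $p$ a $\sigma$-stable self-dual $R_p$-lattice---even at $p=2$---and then glue using the fact that an indefinite form has a single class in its genus (Eichler's theorem, via strong approximation), so that compatible local lattices determine a single global lattice.

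The hard part will be this last descent, especially at the bad primes---those dividing $\mathrm{disc}(q)$, the prime $2$, and the primes dividing $q(\pm1)$---where self-dual $R_p$-lattices need not exist for free; this is exactly where the square conditions of hypothesis 3 are indispensable, since they are what allow the local Hermitian forms to admit unimodular lattices. A further subtlety is that for $q=p^n$ with $n>1$ the ring $R$ is not an order in a field but a non-maximal, possibly non-reduced ring, so one must work with the genus of $R$-lattices and check that the obstruction in the corresponding Hermitian class group vanishes under the stated hypotheses; controlling this class-group obstruction is the technical heart of the arguments of Gross--McMullen and Bayer-Fluckiger that underlie the cited result.
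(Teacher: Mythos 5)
This theorem is not proved in the paper at all: it is imported verbatim from the cited references (Gross--McMullen and Bayer-Fluckiger(--Taelman)), and the paper's only original contribution in its vicinity is Corollary \ref{cor:prime-pow}, which specializes the hypotheses to $q=\Phi_m^n$ by evaluating $\Phi_m(\pm1)$. So there is no in-paper argument to compare against; what you have written is a sketch of the proof in the cited literature, and as such it is a faithful outline of the actual strategy: the transfer to $\sigma$-Hermitian forms over $\mathbb{Z}[x]/(q(x))$ with $\sigma\colon x\mapsto x^{-1}$ (well-defined precisely because reciprocity forces $q(0)=1$, so $x$ is a unit), the archimedean count showing that the $m(q)$ pairs $\{\lambda,1/\lambda\}$ off the unit circle each force a $(1,1)$ block while unit-circle pairs give sign-adjustable definite blocks (yielding condition 2), the identification of the local invariants with $q(\pm1)$ and the discriminant $(-1)^{(r+s)/2}q(1)q(-1)$ (condition 3), Hasse--Minkowski plus Hilbert reciprocity to build the rational space, and Eichler's one-class-per-genus theorem to descend to a $\theta$-stable even unimodular lattice.

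Two caveats. First, your claim that the fixed field $K^+$ of $\sigma$ is totally real is not correct for a general monic irreducible reciprocal $p$ (it holds when all roots lie on the unit circle or on the real line, e.g.\ for cyclotomic or Salem polynomials, but $\lambda+\lambda^{-1}$ need not be real for an arbitrary root); this does not derail the argument, since the archimedean analysis treats the off-circle roots separately, but the statement should be weakened. Second, and more importantly, your text is a plan rather than a proof: every genuinely difficult step --- constructing $\sigma$-stable self-dual local lattices at the primes dividing $\mathrm{disc}(q)$, at $2$ (where evenness must be enforced), and at primes dividing $q(\pm1)$; and controlling the Hermitian class-group obstruction when $n>1$ and $R$ is a non-maximal order --- is explicitly deferred with ``I would'' and acknowledged as the technical heart. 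That is an honest and accurate map of where the difficulty lives, but it means the proposal establishes nothing beyond what the citation already does; if the intent were to replace the citation with a self-contained argument, these local computations would have to be carried out, and they occupy most of the length of the original papers.
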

The irreducible polynomials we deal with are cyclotomic polynomials. Specializing to our case, we have
\begin{corollary}\label{cor:prime-pow}
    Let $r\equiv s \equiv 0 \pmod{2}$. If \begin{enumerate}
        \item $n$ is even, or
        \item $n=1$ and $m$ is neither a prime power $p^r$ nor two times a prime power $2p^r$,
    \end{enumerate}
    then there is an even unimodular lattice $\Lambda^{r;s}$ of rank $r+s=n\phi(m)$ and $\theta\in \mathrm{SO}(\Lambda^{r;s})$ of order $m$ with characteristic polynomial $\chi_\theta(x)=\Phi_m(x)^n$.
\end{corollary}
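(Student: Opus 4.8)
The plan is to obtain the corollary as a direct specialization of the preceding existence theorem, taking the monic irreducible polynomial $p(x)=\Phi_m(x)$ (irreducible by Lemma~\ref{lem:cyclotomic-props}) and $q(x)=\Phi_m(x)^n$, which has degree $n\phi(m)=r+s$ as required. Once the theorem's three numbered hypotheses are verified for this $q$, it produces an even unimodular $\Lambda^{r;s}$ and $\theta\in\mathrm{SO}(\Lambda^{r;s})$ with $\chi_\theta=\Phi_m^n$; the ambient hypothesis $r\equiv s\pmod 8$ it needs is exactly the one Prop.~\ref{prop:unique-lattice} requires for any even unimodular lattice to exist. Because $\mu_\theta=\Phi_m$ is squarefree, $\theta$ is semisimple with eigenvalues the primitive $m$-th roots of unity, so $\theta^m=1$ while no smaller power is trivial, giving order exactly $m$. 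Everything thus reduces to checking conditions (1)--(3).

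First I would dispatch (1) and (2), which are immediate. For (1), the primitive $m$-th roots of unity are closed under $\zeta\mapsto\zeta^{-1}$, so $x^{\phi(m)}\Phi_m(1/x)=\Phi_m(x)$ for $m\geq 2$, and raising to the $n$-th power gives $x^{r+s}q(1/x)=q(x)$. For (2), all eigenvalues of $\theta$ lie on the unit circle, so $m(q)=0$; then $m(q)\leq\min(r,s)$ is automatic and the parity requirement $m(q)\equiv r\equiv s\pmod 2$ is precisely the standing hypothesis $r\equiv s\equiv 0\pmod 2$.

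The substance is condition (3), for which I would begin by recording the standard evaluations: for $m>2$ the integers $\Phi_m(1)$ and $\Phi_m(-1)$ are strictly positive, with $\Phi_m(1)=1$ exactly when $m$ is not a prime power and $\Phi_m(-1)=1$ exactly when $m$ is not twice a prime power. In case (2) ($n=1$, with $m$ neither a prime power nor twice a prime power) both evaluations equal $1$, so $|q(1)|=|q(-1)|=1$ are squares and the remaining quantity is just the sign $(-1)^{(r+s)/2}=(-1)^{\phi(m)/2}$. In case (1) ($n$ even) the positive integers $q(\pm1)=\Phi_m(\pm1)^n$ and their product $(\Phi_m(1)\Phi_m(-1))^n$ are perfect $n$-th powers of positive integers, hence squares, so again only the sign $(-1)^{(r+s)/2}$ must be controlled.

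The hard part will be pinning that sign to $+1$, which is where the two disjoint hypotheses are actually consumed and which reduces to showing $(r+s)/2=n\phi(m)/2$ is even. In case (1) this is easy: $\phi(m)$ is even for $m\geq 3$, so $(n/2)\phi(m)$ is even. In case (2) it amounts to the arithmetic claim that $4\mid\phi(m)$ whenever $m$ is neither a prime power nor twice a prime power; I would prove this via the $2$-adic valuation $v_2(\phi(m))=\max(v_2(m)-1,0)+\sum_{p\mid m,\,p\text{ odd}}v_2(p-1)$, noting that $v_2(\phi(m))\leq 1$ forces at most one odd prime factor together with a small $2$-part, which in every case makes $m$ a prime power or twice a prime power---exactly the shapes excluded in (2). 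Granting this, $(-1)^{(r+s)/2}=1$ in both cases, condition (3) holds, and the preceding theorem completes the construction.
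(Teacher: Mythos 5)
Your proof follows the same route as the paper's: specialize the quoted existence theorem to $q=\Phi_m^n$ and verify its three hypotheses via reciprocity of cyclotomic polynomials, $m(q)=0$, and the evaluations $\Phi_m(\pm 1)$ at the excluded orders. In fact you are more careful than the paper's own proof, which only establishes that $|q(1)|$ and $|q(-1)|$ are squares and silently omits the last part of condition (3), the squareness of $(-1)^{(r+s)/2}q(1)q(-1)$; your observation that $4\mid \phi(m)$ whenever $m$ is neither a prime power nor twice a prime power (and that $(n/2)\phi(m)$ is even for even $n$ and $m\geq 3$) is exactly what is needed to close that gap.
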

\begin{proof}
    Cyclotomic polynomials for $m\geq 2$ are reciprocal so (1) is satisfied. (2) is satisfied by assumptions. To show the hypotheses satisfy (3), we use some identities of $\Phi_m(x)$. In particular, 
    \begin{align}
        \Phi_m(1)=\begin{cases}
            p & \text{if }m=p^r\text{ with }p\text{ prime},\\
            1 & \text{otherwise}.
        \end{cases}
    \end{align}

    Now we consider $\Phi_m(-1)$. Suppose $m$ is not a prime power. If $m$ is odd, then the identity
    \begin{align}\label{eq:oddphi}
        \Phi_{m}(-x)=\Phi_{2m}(x)
    \end{align}
    holds. Therefore for odd $m$, $\Phi_m(-1)=\Phi_{2m}(1)=1$.
    
    For even $m=2^k t$ with $t$ odd, if $k>1$ we use the identity
    \begin{align}
        \Phi_m(x)=\Phi_{2t}(x^{2^{k-1}}),
    \end{align}
    and if $k=1$ we use \eqref{eq:oddphi}. We get
    \begin{align}
        \Phi_{2^k t}(-1) = \begin{cases}
            1, & k>1\\
            \Phi_{t}(1) & \text{otherwise}.
        \end{cases}
    \end{align}
    We see that for $m=2^kt$ with $k>1$, the condition is satisfied. For $m=2t$, $\Phi_m(-1)=1$ if and only if $t$ is not a prime power.
\end{proof}
Note that this is a sufficient condition for a quasicrystalline symmetry $\theta$ to act on a unimodular lattice. This condition is exactly what was found in \cite{Harvey:1987da} to be a necessary condition. Together, we conclude that condition 2 of Corollary \autoref{cor:lattice-aut-exists} is necessary and sufficient for an irreducible quasicrystal to exist.

More generally, we can always construct an even lattice not necessarily unimodular with the desired automorphism. Since $\theta\in\mathrm{GL}(n,\mathbb Z)$ decomposes into blocks of $C(\Phi_m)$ for various $m$, it is enough to consider only one such block.
\begin{prop}\label{prop:irred-lat-exists}
    Let $r\equiv s\equiv 0\pmod{2}$. Let $m>1$ be an integer. Then there exists an even lattice $\Lambda^{r;s}$ of rank $r+s=\phi(m)$ with isometric automorphism $C(\Phi_m) \in \mathrm{Aut}(\Lambda^{r;s})$ of order $m$.
\end{prop}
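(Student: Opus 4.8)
The plan is to realize $C(\Phi_m)$ as multiplication by a root of unity on a ring of cyclotomic integers, and to build the invariant even form as a trace form. Set $K=\mathbb{Q}(\zeta_m)$ with $\zeta_m=e^{2\pi i/m}$, let $K^+=\mathbb{Q}(\zeta_m+\zeta_m^{-1})$ be its maximal totally real subfield, and identify the free $\mathbb{Z}$-module $\mathbb{Z}^{\phi(m)}$ with the ring of integers $\mathcal{O}_K=\mathbb{Z}[\zeta_m]$ via the power basis $1,\zeta_m,\dots,\zeta_m^{\phi(m)-1}$. In this basis the multiplication-by-$\zeta_m$ map has matrix exactly $C(\Phi_m)$, since $\zeta_m\cdot\zeta_m^i=\zeta_m^{i+1}$ for $i<\phi(m)-1$ while $\Phi_m(\zeta_m)=0$ reproduces the last column. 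It therefore suffices to equip $\mathcal{O}_K$ with an even, nondegenerate, $\mathbb{Z}$-valued bilinear form of signature $(r;s)$ for which multiplication by $\zeta_m$ is an isometry.

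For the form I would take, for a totally real $\alpha\in\mathcal{O}_{K^+}$ to be fixed below,
\[
\langle x,y\rangle_\alpha := \mathrm{Tr}_{K/\mathbb{Q}}(\alpha\, x\, \bar y), \qquad \bar{\cdot}:\zeta_m\mapsto\zeta_m^{-1},
\]
where $\bar{\cdot}$ is complex conjugation, i.e. the nontrivial element of $\mathrm{Gal}(K/K^+)$. Invariance under $C(\Phi_m)$ is immediate: $\langle \zeta_m x,\zeta_m y\rangle_\alpha=\mathrm{Tr}(\alpha\,\zeta_m\bar\zeta_m\, x\bar y)=\langle x,y\rangle_\alpha$ because $\zeta_m\bar\zeta_m=1$. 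Integrality holds since $\alpha x\bar y\in\mathcal{O}_K$ has integral trace. Evenness is automatic from the CM structure: $x\bar x$ is fixed by conjugation, so $\alpha x\bar x\in\mathcal{O}_{K^+}$, and since $\mathrm{Tr}_{K/\mathbb{Q}}=2\,\mathrm{Tr}_{K^+/\mathbb{Q}}$ on $K^+$, the value $q(x)=\langle x,x\rangle_\alpha$ is even.

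It remains to fix the signature, which is the crux. Diagonalizing over $\mathbb{R}$ through the $\phi(m)/2$ conjugate pairs of embeddings $\sigma_j,\bar\sigma_j$, one finds $q(x)=\sum_j 2\,\sigma_j(\alpha)\,|\sigma_j(x)|^2$, so each complex place contributes a two-dimensional block that is positive- or negative-definite according to the sign of the real number $\sigma_j(\alpha)$. Hence the signature is $(2n_-;2n_+)$, where $n_-$ (resp. $n_+$) counts the places with $\sigma_j(\alpha)<0$ (resp. $>0$); this is precisely why only even signatures occur and matches the hypothesis $r\equiv s\equiv 0\pmod 2$. To realize a prescribed $(r;s)=(2a;2b)$ with $a+b=\phi(m)/2$, I would invoke weak approximation: $K^+$ is dense in $K^+\otimes_{\mathbb{Q}}\mathbb{R}\cong\mathbb{R}^{\phi(m)/2}$, so there is an element with any prescribed sign pattern at the real places, and clearing denominators by a positive integer (which preserves signs) gives $\alpha\in\mathcal{O}_{K^+}$ with $a$ negative and $b$ positive signs; nonvanishing of all $\sigma_j(\alpha)$ simultaneously secures nondegeneracy.

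The main obstacle is this signature step — correctly isolating each embedding's contribution and guaranteeing the desired sign pattern exists — while the invariance, integrality, and evenness checks are one-liners once the trace-form framework is in place. For $m=2$ the statement is vacuous, since $\phi(2)=1$ is odd and no $(r;s)$ with $r\equiv s\equiv 0\pmod 2$ has $r+s=1$; for $m>2$ the field $K$ is genuinely CM, so the argument applies uniformly.
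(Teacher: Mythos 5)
Your proof is correct, but it takes a genuinely different route from the paper's. The paper treats the invariance condition $C(\Phi_m)^T G\, C(\Phi_m)=G$ as a linear system for the Gram matrix: it is solved over $\mathbb R$ using the eigenbasis of $C(\Phi_m)$ (the invariant subspace is spanned by the real matrices $2\Re(v_i^T\otimes v_i^*)$ attached to conjugate eigenvalue pairs, and the signature is read off from the signs of the coefficients), then the same-dimensional rational solution space is sampled by rational approximation to hit any admissible signature, and finally denominators are cleared to land in the integers. You instead realize $C(\Phi_m)$ as multiplication by $\zeta_m$ on $\mathcal O_K=\mathbb Z[\zeta_m]$ and take the twisted trace form $\mathrm{Tr}_{K/\mathbb Q}(\alpha x\bar y)$ with $\alpha$ totally real, using weak approximation in $K^+$ to prescribe the signs $\sigma_j(\alpha)$ at the archimedean places. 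The two arguments are conceptually parallel — your places $\sigma_j$ are exactly the paper's conjugate eigenvalue pairs, and weak approximation plays the role of their rational approximation — but your version buys two things: evenness comes for free from $\mathrm{Tr}_{K/\mathbb Q}=2\,\mathrm{Tr}_{K^+/\mathbb Q}$ on $K^+$ (the paper's proof only arranges an \emph{integer} Gram matrix and leaves the even-diagonal condition implicit, e.g.\ via a further scaling by $2$), and integrality is automatic rather than obtained by clearing denominators. What the paper's linear-algebra formulation buys in exchange is that it matches the explicit computational recipe of Appendix E (searching over Gram entries $a_k=v\cdot\theta^k v$), and it extends verbatim to reducible $\theta$ with repeated blocks, whereas the trace-form construction as you state it is tied to the irreducible case $\chi_\theta=\Phi_m$. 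Your observation that the case $m=2$ is vacuous is a nice touch; both proofs implicitly rely on $\phi(m)$ being even.
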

\begin{proof} Finding a quadratic form $q$ fixed by $C(\Phi_m)$ is equivalent to finding a $\phi(m)\times\phi(m)$ Gram matrix $G$ with
\begin{align}\label{eq:isometry-matrix}
    C(\Phi_m)^T G C(\Phi_m)=G.
\end{align}
This is a linear system of equations for the entries of $G$. We first solve it over the reals, then rationals, and finally integers. 

We solve the linear equation in reals. Note that $v_i^T\otimes  v_j$ for $v_i$ eigenbasis vectors of $C(\Phi_m)$ form a basis for the space of all matrices. Since the Gram matrix is symmetric, the basis elements are also symmetrized $v_i^T\otimes  v_j+v_j^T\otimes  v_i$. Let the eigenvalue of $v_i$ be $\lambda_i$. Then we have an eigenbasis
\begin{align}
    C(\Phi_m)^T(v_i^T\otimes  v_j+v_j^T\otimes  v_i)C(\Phi_m) = \lambda_i \lambda_j (v_i^T\otimes  v_j+v_j^T\otimes  v_i).
\end{align}
The eigenspace we seek $\lambda_i\lambda_j=1$ corresponds to $\lambda_j=\lambda_i^{-1}$.

Recall that eigenvalues of $C(\Phi_m)$ are $e^{2\pi i r/m}$ with $r$ ranging over totatives of $m$. Denote the list of the smaller half of the totatives of $m$ by $r_i$. Let the eigenvectors corresponding to $e^{2\pi i r_i/m}$ and $e^{-2\pi i r_i/m}$ be $v_i,v_{i}^*$. Then the solution space to \eqref{eq:isometry-matrix} is spanned by
\begin{align}
    \alpha_i := v_i^T \otimes v_i^{*} + v_i^{T*} \otimes v_i = 2\Re (v_i^T \otimes v_i^*).
\end{align}
Note that the basis consists of real matrices.

The index $(r;s)$ of an arbitrary solution
\begin{align}\label{eq:solution-sp}
    G=\sum_i c_i \alpha_i
\end{align}
is given by the number $r$ of negative coefficients $c_i<0$ and the number $s$ of positive coefficients $c_i>0$, since the set of $v_i$ are an eigenbasis for $G$.

Under field extensions, the dimension of the space of solutions don't change. Therefore the solution space in $\mathbb Q$ has the same dimension as that in $\mathbb R$ \eqref{eq:solution-sp}, and is spanned by some linear combinations of elements
\begin{align}
    \beta_i&=\sum_{j}c_{ij}\alpha_j,\\
    G &= \sum_i d_i \beta_i,
\end{align}
with $c_{ij}$ real, $d_i\in\mathbb Q$ and $\beta_i\in\mathbb Q^{\phi(m)\times \phi(m)}$. Since any consistent signature $(r;s)$ is possible with reals, it is also possible for rationals by using rational approximation to reals. 

Finally, for any rational solution $G=\sum_i d_i\beta_i$ to \eqref{eq:isometry-matrix} with desired signature $(r;s)$, we can obtain an integer matrix without changing the signature by multiplying by the common denominator of all entries of $G$.
\end{proof}
Putting blocks of $C(\Phi_m)$ together, we see that any $\theta\in\mathrm{GL}(n,\mathbb Z)$ can be used to construct an even lattice of any signature $(r,s)$.
\begin{corollary}\label{cor:lattice-aut-exists}
    Let $n=r+s$ and $\theta \in \mathrm{GL}(n,\mathbb Z)$ with order $m$. If $n$ satisfies \eqref{eq:rank-ineq}, then there exists an even lattice $\Lambda^{r;s}$ with isometric automorphism $\theta \in \mathrm{Aut}(\Lambda^{r;s})$.
\end{corollary}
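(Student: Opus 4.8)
The plan is to reduce the general statement to the single-block case of \autoref{prop:irred-lat-exists} by using the rational block decomposition of \autoref{thm:decomposition}, and then to transport the resulting quadratic form from the block basis back to the original integral basis. Since $\theta\in\mathrm{GL}(n,\mathbb Z)$ of order $m$ is given---and the rank bound \eqref{eq:rank-ineq} holds precisely because such a $\theta$ exists---\autoref{thm:decomposition} supplies a matrix $Q\in\mathrm{GL}(n,\mathbb Q)$ with
\begin{align}
    Q\theta Q^{-1}=\theta':=\bigoplus_{j=1}^k C(\Phi_{d_j})^{\oplus \ell_j},\qquad d_j\mid m.
\end{align}
The first task is to build an even integral Gram matrix $G'$ on $\mathbb Z^n$ fixed by $\theta'$ with the prescribed signature $(r;s)$; the second is to pull it back along $Q$ so that it becomes fixed by $\theta$ itself.

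First I would assemble $G'$ block by block. For each cyclotomic factor with $d_j\ge 3$ the rank $\phi(d_j)$ is even, so \autoref{prop:irred-lat-exists} furnishes an even lattice of that rank with isometric automorphism $C(\Phi_{d_j})$ and any signature $(r_j;s_j)$ obeying $r_j\equiv s_j\equiv 0\pmod 2$; I take $\ell_j$ copies. The factors $\Phi_1=x-1$ and $\Phi_2=x+1$ lie outside the hypotheses of \autoref{prop:irred-lat-exists}, but their companion blocks are the $1\times 1$ matrices $(1)$ and $(-1)$, each an isometry of a rank-one even lattice $\langle 2\rangle$ or $\langle -2\rangle$ (the map $v\mapsto -v$ preserves any quadratic form); these rank-one pieces each contribute a single $\pm$ to the signature and are exactly what allow me to match its parity. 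Direct-summing all the pieces yields an even integral $G'$ with $(\theta')^T G'\theta'=G'$ and the desired signature, the only bookkeeping being the distribution of $(r;s)$ among the blocks: the even-rank blocks absorb even amounts, and the $(\pm1)$-eigenspaces of $\theta$ supply the odd remainder.

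Next I would set $G:=Q^T G' Q$. Using $\theta=Q^{-1}\theta' Q$ and $(\theta')^T G'\theta'=G'$, a direct computation gives $\theta^T G\theta=Q^T(\theta')^T G'\theta' Q=Q^T G' Q=G$, so $G$ is fixed by $\theta$; and since congruence by an invertible matrix preserves the indices of inertia (Sylvester's law), $G$ carries the same signature $(r;s)$ as $G'$.

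The one genuinely delicate point is that $Q$ is only rational, so $G=Q^T G' Q$ is a rational rather than integral symmetric matrix. The remedy is that scaling by a positive integer commutes with every property I need: for any $N>0$ the matrix $NG$ is symmetric, still satisfies $\theta^T(NG)\theta=NG$, and still has signature $(r;s)$. Taking $N$ equal to twice a common denominator of the entries of $G$ simultaneously clears all denominators and forces the diagonal to be even, so $NG$ is the Gram matrix of an even integral lattice $\Lambda^{r;s}$ with $\theta\in\mathrm{Aut}(\Lambda^{r;s})$. Thus the main obstacle is not the existence of an invariant form but this final step of realizing $\theta$-invariance, integrality, evenness, and signature $(r;s)$ all at once under denominator-clearing; the remainder is the routine block bookkeeping above.
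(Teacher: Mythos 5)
Your proof is correct and takes essentially the same route as the paper: decompose $\theta$ into cyclotomic companion blocks via \autoref{thm:decomposition}, invoke Proposition \ref{prop:irred-lat-exists} blockwise, direct-sum, and clear denominators at the end. You in fact supply details the paper's one-line argument leaves implicit (the rank-one $\Phi_1,\Phi_2$ blocks falling outside the proposition's hypotheses, and the transport of the invariant form along the rational conjugating matrix $Q$ followed by scaling to restore integrality and evenness), while sharing with the paper the same light treatment of how the signature $(r;s)$ is distributed among the blocks.
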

\subsubsection{Symmetries of the embedding}\label{sub:sym-embed}
Given $\theta\in\mathrm{Aut}(\Lambda^{r;s})$, we aim to choose an embedding $\Lambda^{r;s}\hookrightarrow \Gamma^{r;s}\subset \mathbb R^{r;s}$ such that the isometric automorphism is preserved $\theta \in \mathrm{Sym}(\Gamma^{r;s})$.

To choose an embedding that manifestly respects $\theta$, choose orthonormal vectors from the rotation planes of the $\theta$, normalize, and send them to the standard basis in $\mathbb R^{r;s}$. This way, $\theta$ decomposes to left and right components manifestly as
\begin{align}
    \theta = (\theta_L;\theta_R) \in \mathrm{O}(r)\times \mathrm{O}(s).
\end{align}
\subsection{Quasicrystalline compactifications}\label{app:quasi}
Recall compactification on a $d$-dimensional torus $T^d$ is described by the Narain lattice $\Gamma^{d+x;d}$ where $x=16$ for heterotic and $x=0$ otherwise. 

Any symmetry $\theta=\mathrm{Sym}(\Gamma^{d+x;d})$ can be written by Theorem \ref{thm:decomposition} as a block matrix
\begin{align}
    \begin{pmatrix}
        C(\Phi_{d_1}) & 0 & \cdots & 0\\
        0 & C(\Phi_{d_2}) & \cdots & 0\\
        \vdots & \vdots & \ddots & \vdots\\
        0 & 0 & \cdots & C(\Phi_{d_k})
    \end{pmatrix},
\end{align}
where $C(\Phi_{d_i})$ are companion matrices to cyclotomic polynomials $\Phi_{d_i}(x)$, with eigenvalues $e^{2\pi i r/d_i}$ where $0<r<d_i$ and $\mathrm{gcd}(r,d_i)=1$. We see that if the rotation planes of any of the $C(\Phi_{d_i})$ are not all on the left or right, $\theta$ is a quasicrystalline symmetry.

We now consider quasicrystalline compactifications corresponding to a single block $C(\Phi_m)$. These are called irreducible quasicrystals. We denote such a unimodular irreducible quasicrystal as $\Gamma^{r;s}_m$. The possibilities for $m$ are given in Corollary \ref{cor:prime-pow} for $n=1$. As deduced from the corollary, all such irreducible quasicrystals exist. We give a list of them in Table \ref{tab:unimodular-quasi-body}.

Multi-block (reducible) quasicrystals can be constructed by using multiple irreducible quasicrystals, gluing various even lattices $\Lambda^{r;s}$ of Corollary \ref{cor:lattice-aut-exists}, or by explicit construction as in appendix \ref{app:construction}.

\section{Gluing construction}
\label{app:glue}
Given a lattice $\Lambda^{r;s}$ and chosen a minimal number of generators
\begin{align}
    \Lambda = \langle v_1,v_2,\dots,v_d\rangle,
\end{align}
one can define the \textit{Gram matrix} of the lattice as
\begin{align}
    G_{\Lambda^{r;s},v}:=\begin{pmatrix}
         v_1 \cdot v_1 &  v_1\cdot v_2 & \cdots &  v_1\cdot v_d\\
         v_2\cdot v_1 & \ddots & & \\
        \vdots & & &\vdots \\
         v_d\cdot v_1 & \cdots & &  v_d\cdot v_d
    \end{pmatrix}.
\end{align}
The Gram matrix is symmetric.
The Gram matrix of the dual lattice with basis $\langle w_i, v_j\rangle = \delta_{ij}$ is the inverse that of the original lattice
\begin{align}
    G_{\Lambda^{r;s*},w} = G_{\Lambda^{r;s},v}^{-1}.
\end{align}
It follows that if $\Lambda^{r;s}$ is a unimodular lattice, then the determinant of its Gram matrix is 1
\begin{align}
    \det G_{\Lambda^{r;s},v}  =1.
\end{align}
The failure of an integral lattice to be unimodular is measured by the \textit{discriminant group}
\begin{align}
    \mathcal D(\Lambda) = \Lambda^*/\Lambda.
\end{align}
It is also called the \textit{glue group} because the dual lattice can be constructed by gluing copies of $\Lambda$ along representatives of elements of $\mathcal D(\Lambda)$
\begin{align}
    \Lambda^* = \coprod_{[r_i]\in \mathcal D(\Lambda)} r_i+\Lambda
\end{align}
where $[r_i]\neq [r_j]$ for $i\neq j$, and then using the $\mathbb Q$-linear extension of the bilinear form.

The discriminant group inherits the quadratic form of the lattice as
\begin{align}
    \bar q &: \mathcal D(\Lambda) \to \mathbb Q/2\mathbb Z\\
    \bar q&([v]):=q(v)\pmod{2}.
\end{align}
If the glue group of two lattices are \textit{isometric}, i.e. there exists
\begin{align}
    \bar\psi :\mathcal D(\Lambda_1) \to \mathcal D(\Lambda_2)\\
    \bar q_1([v])=\bar q_2(\bar\psi([v])),
\end{align}
then one can glue the two lattices along their glue groups and obtain a unimodular lattice.
\begin{lemma}[Gluing Lemma]\label{lem:glue}
    If $(\Lambda_1,q_1)$ and $(\Lambda_2,q_2)$ are even lattices with isometry
    \begin{align}
        \bar \psi: \mathcal D(\Lambda_1) \to \mathcal D (\Lambda_2),
    \end{align}
    then
    \begin{align}
        \Gamma:= \{(v,w)\in \Lambda_1^*\oplus \Lambda_2^* \mid \bar\psi([v])=[w]\}
    \end{align}
    equipped with
    \begin{align}
        q(v,w):=-q_1(v)+q_2(w)
    \end{align}
    is an even unimodular lattice.
\end{lemma}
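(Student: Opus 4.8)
The plan is to view $\Gamma$ as a lattice trapped between the orthogonal sum and its dual. Write $M$ for the free $\mathbb{Z}$-module $\Lambda_1\oplus \Lambda_2$ equipped with the quadratic form $-q_1\oplus q_2$; since $\Lambda_1,\Lambda_2$ are even, so is $M$, and $M^\ast=\Lambda_1^\ast\oplus\Lambda_2^\ast$. Because $v\in\Lambda_1$ forces $[v]=0$ and $\bar\psi(0)=0$, we have the chain $M\subseteq \Gamma\subseteq M^\ast$, so $\Gamma$ has the same full rank as $M$. The strategy is to check in turn that $q(v,w)=-q_1(v)+q_2(w)$ is even on $\Gamma$, that the associated bilinear form is integral on $\Gamma$, and that $\Gamma$ is unimodular; the first two make $\Gamma$ an even integral lattice, and the third upgrades it to unimodular.

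For evenness, recall first that $\bar q([v]):=q(v)\bmod 2$ is well defined on $\mathcal D(\Lambda_i)$: for $\lambda\in\Lambda_i$ one has $q_i(v+\lambda)=q_i(v)+2\langle v,\lambda\rangle+q_i(\lambda)$, where $\langle v,\lambda\rangle\in\mathbb{Z}$ because $v\in\Lambda_i^\ast$ and $q_i(\lambda)\in 2\mathbb{Z}$ because $\Lambda_i$ is even. Now take $(v,w)\in\Gamma$, so that $[w]=\bar\psi([v])$. Since $\bar\psi$ is an isometry of discriminant forms, $\bar q_2([w])=\bar q_1([v])$, i.e. $q_2(w)\equiv q_1(v)\pmod 2$, whence $q(v,w)=-q_1(v)+q_2(w)\equiv 0\pmod 2$.

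Integrality is where the one genuine subtlety appears. Let $\bar b([v],[v']):=\langle v,v'\rangle\bmod\mathbb{Z}$ be the bilinear discriminant form, related to $\bar q$ by $\bar q([v]+[v'])-\bar q([v])-\bar q([v'])=2\,\bar b([v],[v'])$, where the right-hand side is the image of $\bar b$ under the \emph{injective} doubling map $\mathbb{Q}/\mathbb{Z}\to\mathbb{Q}/2\mathbb{Z}$. Since $\bar\psi$ is additive and preserves $\bar q$, it preserves $2\,\bar b$; injectivity of the doubling map then shows $\bar\psi$ preserves $\bar b$ itself. Hence for $(v,w),(v',w')\in\Gamma$ we get $\langle w,w'\rangle\equiv\langle v,v'\rangle\pmod{\mathbb{Z}}$, so $\langle(v,w),(v',w')\rangle=-\langle v,v'\rangle+\langle w,w'\rangle\in\mathbb{Z}$. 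I expect this factor-of-two bookkeeping, rather than anything else, to be the main obstacle; it can be sidestepped entirely if one takes the convention that an ``isometry of glue groups'' already means preservation of $\bar b$ alongside $\bar q$.

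Finally, unimodularity follows from an index computation. Reducing modulo $M$ identifies $\Gamma/M$ with the graph $\{([v],\bar\psi([v]))\}\subseteq \mathcal D(\Lambda_1)\oplus\mathcal D(\Lambda_2)$, so $[\Gamma:M]=|\mathcal D(\Lambda_1)|=:N$, and since $\bar\psi$ is a bijection $|\mathcal D(\Lambda_2)|=N$ as well. Using $|\det G_{\Lambda_i}|=|\mathcal D(\Lambda_i)|$ gives $|\det G_M|=N^2$, and the change-of-basis relation $\det G_M=[\Gamma:M]^2\det G_\Gamma$ then yields $|\det G_\Gamma|=N^2/N^2=1$. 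An integral lattice with $|\det G_\Gamma|=1$ satisfies $\Gamma=\Gamma^\ast$, so $\Gamma$ is even and unimodular, completing the argument.
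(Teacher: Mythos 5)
Your proof is correct. Note that the paper itself states this Gluing Lemma without proof (it is the standard discriminant-form gluing construction of Nikulin and Conway--Sloane), so there is no in-paper argument to compare against; your write-up supplies exactly the missing content. The sandwich $M\subseteq\Gamma\subseteq M^{*}$ with $M=\Lambda_1\oplus\Lambda_2(-q_1\oplus q_2)$, the evenness check via $\bar q_2(\bar\psi([v]))=\bar q_1([v])$, and the determinant count $|\det G_\Gamma|=\det G_M/[\Gamma:M]^2=N^2/N^2=1$ are all sound, including the identification of $\Gamma/M$ with the graph of $\bar\psi$ (which uses additivity and bijectivity of $\bar\psi$ --- both implicit in the paper's use of the word ``isometry'' but worth flagging, since the paper's displayed definition only records preservation of $\bar q$). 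Two small remarks: your integrality paragraph is strictly redundant, because once $q$ takes values in $2\mathbb Z$ on the subgroup $\Gamma$, the polarization identity $2\langle x,y\rangle=q(x+y)-q(x)-q(y)$ already forces $\langle x,y\rangle\in\mathbb Z$; and the point you isolate as the ``genuine subtlety'' --- that preservation of $\bar q$ implies preservation of $\bar b$ via injectivity of the doubling map $\mathbb Q/\mathbb Z\to\mathbb Q/2\mathbb Z$ --- is handled correctly and is indeed the step most often glossed over in the literature.
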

If the discriminant group is a cyclic group $\mathcal D(\Lambda)\cong \mathbb Z_N$, we can label representatives $v_i$ from each element $[v_i]$ using a label $i\in \mathbb Z_N$ by choosing the generator $[v_1]$ to be the shortest vector in $\mathcal D(\Lambda)$. This labeling can be used to write the \textit{glue code} for a gluing construction as
\begin{align}
    \Lambda_1\Lambda_2[ij]:= \coprod_{n\in \mathbb Z_N} (\Lambda_1,\Lambda_2)+n(v_i,w_j),
\end{align}
where $[v_i]\in\mathcal D(\Lambda_1)$ and $[w_j]\in\mathcal D(\Lambda_2)$.

\section{Construction of quasicrystals}\label{app:construction}
In this appendix we describe our method to explicitly construct quasicrystals of any signature and not necessarily unimodular.

Lattice automorphisms can be decomposed to blocks as in Theorem \ref{thm:decomposition}. We first show the construction of irreducible quasicrystals, which correspond to only one block.

Choose the order $m$ of the quasicrystalline action $\theta=C(\Phi_m)$. Choose the signature $(r;s)$ such that $r,s$ are even and $r+s=\phi(m)$. Choose the twist vector
\begin{align}
    (k_1,\dots,k_{r/2};k_{r/2+1},\dots,k_{(r+s)/2})/m
\end{align}
with $0<k_i<\lfloor\frac{m}{2}\rfloor$, $\mathrm{gcd}(k_i,m)=1$ and each $k_i$ distinct.

Take the starting vector $v\in \mathbb R^{r;s}$
\begin{align}\label{eq:start-v}
    v= (v_1,0,v_2,0,\dots,v_{r/2},0;v_{r/2+1},0,\dots,v_{(r+s)/2},0).
\end{align}
We will generate a set of generators by repeated application of the action in a suitable basis
\begin{align}
    B^{-1}\theta B^{-1} = \begin{pmatrix}
        R(2\pi k_1/m) & 0 & \cdots & 0\\
        0 & R(2\pi k_2/m) & \cdots & 0\\
        \vdots & \vdots & \ddots & \vdots\\
        0 & 0 & \cdots & R(2\pi k_{(r+s)/2}/m)
    \end{pmatrix}
\end{align}
where $R(\varphi)$ is the $2\times 2$ rotation matrix by $\varphi$ radians. Since $\theta$ satisfies
\begin{align}
    \theta^{\phi(m)} = -a_{\phi(m)-1}\theta^{\phi(m)-1}-\dots -a_1 \theta -a_0,
\end{align}
the generators of the lattice are
\begin{align}
    v, \theta v, \theta^2 v,\dots, \theta^{\phi(m)-1}v.
\end{align}
The lattice $\Gamma^{r;s}$ is characterized by its Gram matrix. Let
\begin{align}
    a_k = v \cdot \theta^k v.
\end{align}
Then the Gram matrix is
\begin{align}\label{eq:ak-var}
G_{\Gamma^{r;s},v}=\begin{pmatrix}
    v\cdot v & v\cdot \theta v & v\cdot \theta^2 v & \dots & v\cdot \theta^{\phi(m)-1}v\\
    v\cdot \theta v & v\cdot v & v\cdot \theta v  &  & \\
    v \cdot \theta^2 v & v\cdot \theta v & v\cdot v &\\
    \vdots &  &  &\ddots \\
    v\cdot \theta^{\phi(m)-1}v 
    & & & & v\cdot v
\end{pmatrix}=\begin{pmatrix}
    a_0 & a_1 & \cdots & a_{\phi(m)-1}\\
    a_1 & a_0 & \cdots & \vdots \\
    \vdots & & \ddots & \\
    a_{\phi(m)-1} & \cdots & & a_0
\end{pmatrix}.
\end{align}
For the lattice to be even, we need $a_0\in 2\mathbb Z$ and $a_i\in \mathbb Z$ for $i>0$.

There are $\frac{r+s}{2}$ many unknowns in $v$ \eqref{eq:start-v}, and $\phi(m)=r+s$ many $a_k$ variables \eqref{eq:ak-var}. Therefore $a_k$ are linearly dependent on each other. We choose $(r+s)/2$ many independent $a_k$ and write the other $a_i$ and $v_i$ as in terms of them.

As the last step, we try with a computer various $a_0\in 2\mathbb Z$ and for $i>0$ various $a_i\in \mathbb Z$ values until we get a solution for which all $v_i$ values are real. If there is a desired value for $\det G_{\Gamma^{r;s},v}$ as well (for unimodular lattices, we want a determinant of one for example), we keep constructing quasicrystals until we find it. 

All possibilities for $m$ listed by lattice rank are given in Table \ref{tab:all-phim}. For unimodular quasicrystals, the explicit Gram matrix entries $a_i$ are provided in Table \ref{tab:unimodular-quasi-data}. The Gram matrix entries $a_i$ of some non-unimodular quasicrystals are provided in Table \ref{tab:nonunimodular-quasi-data}.

Now we construct a reducible unimodular quasicrystal. A straightforward construction is to just take two unimodular irreducible quasicrystals together as $\Gamma_{m}^{r;s}+\Gamma_{m'}^{r';s'}$. However, it is also possible to construct unimodular quasicrystals by gluing together two nonunimodular quasicrystals.

As an example, take $\Gamma_5^{2;2}$.\footnote{Note this is the same as $\Gamma_{10}^{2;2}$ since every lattice has a symmetric $\mathbb Z_2$ action.} The basis is given by
\begin{align}
    v_1 &= \frac{\sqrt{2}}{\sqrt[4]{5}}\left(1,0;1,0\right),\\
    v_2 &= \frac{\sqrt{2}}{\sqrt[4]{5}}\left(\frac{\sqrt{5}-1}{4},\frac{\sqrt{5+\sqrt{5}}}{2\sqrt{2}};\frac{-\sqrt{5}-1}{4},\frac{\sqrt{5-\sqrt{5}}}{2\sqrt{2}}\right),\\
    v_3 &= \frac{\sqrt{2}}{\sqrt[4]{5}}\left(\frac{-\sqrt{5}-1}{4},\frac{\sqrt{5-\sqrt{5}}}{2\sqrt{2}};\frac{\sqrt{5}-1}{4},-\frac{\sqrt{5+\sqrt{5}}}{2\sqrt{2}}\right),\\
    v_4 &= \frac{\sqrt{2}}{\sqrt[4]{5}}\left(\frac{-\sqrt{5}-1}{4},-\frac{\sqrt{5-\sqrt{5}}}{2\sqrt{2}};\frac{\sqrt{5}-1}{4},\frac{\sqrt{5+\sqrt{5}}}{2\sqrt{2}}\right).
\end{align}
The determinant of the lattice is $\det G_{\Gamma_5^{2;2}}=5$ with discriminant group $\mathcal D(\Gamma_5^{2;2})=\mathbb Z_5$. 

As described in \autoref{app:glue}, we can glue two copies of $\Gamma^{2;2}_5$ together along the generator $w_1$ of its discriminant group, so
\begin{align}
    \Gamma^{2;2}_5\Gamma^{2;2}_5[11] = \coprod_{n=1}^5 (\Gamma_5^{2;2}(-1)\oplus \Gamma_5^{2;2})+n(w_1,w_1).
\end{align}
Note the first copy has its quadratic form flipped as prescribed in \autoref{lem:glue}. To obtain the generator $w_1$ of the discriminant group, one chooses the shortest vector that is in $(\Gamma_5^{2;2})^*$ but not in $\Gamma^{2;2}_5$
\begin{align}
    w_1 &= -\frac 2 5 v_1 + \frac 1 5 v_2 -\frac 1 5 v_3 +\frac 2 5 v_4,\\
    &= \frac{1}{\sqrt[4]{20}}\left(-1 , -\sqrt{1-\frac 2 {\sqrt{5}}};-1,\sqrt{1+\frac{2}{\sqrt{5}}}\right),\\
    w_1^2 &=\frac 2 5.
\end{align}
We now check if the glue vectors are preserved under the quasicrystalline action.
\begin{align}
    \theta w_1 &= -\frac 2 5 v_2 + \frac 1 5 v_3 -\frac 1 5 v_4 +\frac 2 5 (-v_1-v_2-v_3-v_4)\\
    &= -\frac 2 5 v_1 - \frac 4 5 v_2 -\frac 1 5 v_3 -\frac 3 5 v_4.
\end{align}
We see that
\begin{align}
    \theta w_1 \equiv w_1 \pmod{\Gamma^{2;2}_5}.
\end{align}
This means that the quasicrystalline action of each $\Gamma^{2;2}_5$ is a symmetry of the Narain lattice $\Gamma^{2;2}_5\Gamma^{2;2}_5[11]$.

As a result, $\Gamma_5^{2;2}\Gamma_5^{2;2}[11]$ is a reducible unimodular quasicrystalline lattice of signature $(4;4)$, whose qusaicrystalline symmetries are generated by twist vectors
\begin{align}
    \theta_1 &= (1,0;2,0)/5,\\
    \theta_2 &= (0,1;0,2)/5.
\end{align}
\begin{table}
\begin{align*}
    \begin{array}{|c|c|}
    \hline
        \text{Lattice }\Gamma^{r;s} \text{ rank } \phi(m)=r+s & \text{Symmetry order }m \\
        \hline
        4 &  5,8,10,\textbf{12}\\
        6 & 7, 9, 14, 18\\
        8 & \textbf{15}, 16, \textbf{20}, \textbf{24}, \textbf{30}\\
        10 & 11,22\\
        12 & 13, \textbf{21}, 26, \textbf{28}, \textbf{36}, \textbf{42}\\
        14 & \varnothing\\
        16 & 17, 32, 34, \textbf{40}, \textbf{48}, \textbf{60}\\
        18 & 19, 27, 38, 54\\
        20 & 25, \textbf{33}, \textbf{44}, 50, \textbf{66}\\
        22 & 23,46\\
        24 & \textbf{35}, \textbf{39}, \textbf{45}, \textbf{52}, \textbf{56}, \textbf{70}, \textbf{72}, \textbf{78}, \textbf{84}, \textbf{90}\\
        26 & \varnothing\\
        28 &  26, 58\\
        30 &  31,62\\
        32 & \textbf{51}, 64, \textbf{68}, \textbf{80}, \textbf{96}, \textbf{102}, \textbf{120}\\
        \hline
    \end{array}
\end{align*}
\caption{All solutions to $\phi(m)=r+s$. The ones with unimodular lattice realizations are in bold.}
\label{tab:all-phim}
\end{table}
\begin{table}
\begin{align}
        \begin{array}{|c|c|c|}
        \hline
            \text{Signature} & \text{Twist} & a_i \\
            \hline
            (2;2) & (1;5)/12 & (0,-1,0,0)\\
            (4;4) & (1,2;4,7)/15 & (0,-1,0,0,1,0,0,0)\\
            (4;4) & (1,3;7,9)/20 & (0,-1,0,0,0,0,0,0)\\
            (4;4) & (1,5;7,11)/24 & (0,-1,-1,0,0,-1,0,1)\\
            (4;4) & (1,7;11,13)/30 & (-2,-1,1,0,-2,-1,1,1)\\
            (6;6) & (1,4,5;2,8,10)/21 & (0,-1,0,1,1,-1,-2,0,3,1,-2,-2)\\
            (10;2) & (5,11,13,17,19;1)/42 & (-2,0,1,0,0,1,1,-1,-1,0,0,0)\\
            \hline
        \end{array}
\end{align}
\caption{The Gram matrix $G_{\Gamma^{r;s},v}$ entries for irreducible unimodular quasicrystals. The $ij$-th entry of the Gram matrix is $a_{|i-j|}$.}
\label{tab:unimodular-quasi-data}
\end{table}
\begin{table}
\begin{align}
        \begin{array}{|c|c|c|c|c|}
        \hline
            \text{Signature} & \text{Twist} & a_i & \mathcal{D}(\Lambda) \\
            \hline
            (2;2) & (1;3)/8 & (0,-1,0,1) & \mathbb Z_2^2\\
            (2;2) & (1;3)/10 & (0,-1,-1,1) & \mathbb Z_5\\
            (4;2) & (2,3;1)/7 & (0,1,0,-1,-1,0) & \mathbb Z_7\\
            (6;2) & (3,7,9;1)/20 & (0,1,1,1,1,0,-1,-1) & \mathbb Z_2^4\\
            (6;2) & (5,7,11;1)/24 & (0,1,1,1,0,0,0,0) & \mathbb Z_2^2\\
            (6;2) & (7,11,13;1)/30 & (-2,1,1,0,0,-1,1,1) & \mathbb Z_5^2\\
            \hline
        \end{array}
\end{align}
\caption{Explicit construction data for non-unimodular quasicrystals. As before, $a_{|i-j|}$ corresponds to the $ij$-th entry in the Gram matrix $G_{\Lambda^{r;s},v}$. The last column gives the discriminant group $\mathcal D(\Lambda)$ of the lattice. Since the lattices are non-unimodular, the discriminant group is nontrivial.}
\label{tab:nonunimodular-quasi-data}
\end{table}

\section{Freely acting orbifolds}\label{app:free}
The orbifold obtained by a group $G$ acting without fixed points is termed a \textit{freely acting orbifold}. The most useful feature of these orbifolds is that their twisted sectors can be massed up.

A cyclic orbifold action always has even number of eigenvalues. Therefore, in odd dimensions, there is always at least one real dimension that is fixed. We can shift along the fixed real dimension together with the rotations to get a freely acting orbifold.

More explicitly, the technique we use in odd dimensions involves orbifolding on $T^d$ coupled with a shift on an additional $S^1$, compactifying overall to $9-d$ dimensions. The advantage of this construction is that except at special $S^1$ radii, all twisted sectors become massive. Intuitively, as the radius $r$ increases, twisted sector strings become longer and thus gain mass. 

Since the $S^1$ remains invariant under the overall orbifolding action, we have
\begin{align}
    I \supset \Gamma^{1,1} = \left\{\frac 1 2(n/r + w r,n/r - w r)|n,w\in\mathbb Z\right\}\,.
\end{align}
One can then choose a shift vector that satisfies level matching and also $r$ large enough to lift the twisted sectors. 

Essentially, freely acting orbifolds enable us to project out a significant portion of the massless spectrum in the untwisted sector without introducing massless states in the twisted sectors. For more details, we refer to appendix A.3 of \cite{Baykara:2023plc}.

\section{Quantum symmetry of orbifolds}\label{app:quantsym}
An interesting effect of orbifolding is that 
 you have a CFT $\mathcal C$ and orbifold by an abelian $g$, the resulting CFT $\mathcal C'$ has the same symmetry $g$ now called ``quantum symmetry''\cite{Vafa:1989ih}. In fact, gauging this symmetry again corresponds to ungauging and one ends up with the original CFT $\mathcal C$.

In particular, consider $Z_{T^n}[M,A]$  the partition function of $M$ on $T^n$ with some background gauge field $A$ of group $G$ with $A$ taking values in $H^1(M,G)$. Equivalent for the orbifolded theory on $T^n/G$ consider the partition function $Z_{T^n/G}[M,A']$ with $A'$ being the gauge field of some group $G'$ and taking values in $H^1(M,G')$ then we know that 
\begin{eqnarray}
   Z_{T^n/G}[M,A']\propto \sum_A e^{i(A',A)}Z_{T^n}[M,A]
\end{eqnarray}
with 
\begin{eqnarray}
    e^{i(\underline{\ },\underline{\ })}:H^1(M,G')\times H^1(M,G)\to H^2(M,U(1))\equiv U(1)
\end{eqnarray}
we can invert this expression by thinking of $M$ on $T/G/G'$ as 
\begin{eqnarray}
Z_{T^n}[M,A]   \propto \sum_A' e^{i(A',A)}Z_{T^n/G}[M,A']
\end{eqnarray}
being a version of a discrete Fourier transform.

 \section{Normalization of U(1)s }\label{app:abelian}
The OPE of a current algebra with generators $J^a$ is given by 
 \begin{eqnarray}\label{current}
     J^a({z})J^b({w})
\sim {k\delta^{ab}\over ({z}-{w})^2}+{if^{abc}J^c({w})\over ({z}-{w})} \end{eqnarray}
In the heterotic string we have both levels equal to $k=1$ such that the central charge of the Kac-Moody alegbra is $c=2k{248\over k+30}=16$ as expected.

In the Cartan basis we can express them as 
\begin{align}
E_a=e^{iP_a\cdot X}, \ p^2_a=2 \\
H_i=i h_i\cdot \partial X
\end{align}
Then the level of the current algebra \autoref{current} can be expressed as $k\to \hat{k}={2k\over \psi^2}$ where $\psi^2$ is the length of the highest root which for the heterotic string is $\psi^2=2$ and hence $\hat{k}=k$. Therefore, when we consider compactifications that break the heterotic gauge groups we still want to normalize the level appropriately.

The U(1) factors that we consider in this work come from the untwisted sector and hence from a Cartan element of $E_8$ therefore it can be written as 
\begin{eqnarray}
    J_0=V_Q\cdot \partial X
\end{eqnarray}
The level of this U(1) is given by 
\begin{eqnarray}
  k_{U(1)}=|V_Q|_{\alpha}^2
\end{eqnarray}
In this work we work in the $\alpha-$basis and hence the norm is taken with respect to the quadratic form of the lattice.
A state that is charged under the gauge group will carry a momentum vector $P$ that is in the charge lattice in the untwisted sector or in the shifted lattice of the twisted sectors and hence have a vertex operator of the form $e^{iP\cdot  X}$ and hence for the U(1) generator $J_0=V_Q\cdot \partial X$ the OPE will be
\begin{equation}
    J_0(z)e^{-P\cdot X(z)}\sim {V_Q\cdot P\over  (z-w)}+\cdots 
\end{equation}
In other words the U(1) charge is given by $Q=V_Q\cdot P$ which specifies also the conformal dimension of the state as  $\Delta={Q^2\over 2 k_{U(1)}}$. If we want the level of the abelian algebra to be 1 then we need to normalize the charges accordingly by dividing with the norm of the basis vector as $Q\to {Q\over {|V_Q|_\alpha}}$.

 \section{Details on 6d constructions}

A sort review of the 6d minimal supergravity and chiral anomalies can be found in Appendix B of  \cite{Baykara:2023plc}. Interestingly in the case of a single tensor multiplet the anomaly polynomial factorizes. In this work the 6d theories we study mainly come from heterotic models and hence only one tensor multiplet will be present.

 \begin{eqnarray}
 I_8(R,F)=\frac{1}{2}\Omega_{\alpha\beta}X^\alpha_4 X^\beta_4, \ \ X_4^\alpha=\frac{1}{2}a^\alpha trR^2 +\sum_ib_i^\alpha \frac{2}{\lambda_i}trF_i^2 +2b_{ij}^\alpha F_iF_j
 \end{eqnarray}
 where $a^\alpha, b_i^\alpha$ are vectors in $\R^{1,T}$, $\Omega_{\alpha \beta }$ is the metric on this space and  $ \lambda_i $ are normalization factors of the gauge groups $G_i$.

 For $T=1$ the decomposition is given by 

 \begin{eqnarray}
   \hspace{-1cm}
 I_8={1\over 16}(trR^2-\sum_k\alpha_k trF_k^2-\sum_{ij}\alpha_{ij}F_iF_j)\wedge (trR^2-\sum_k\tilde{\alpha}_k trF_k^2-\sum_{ij}\tilde{\alpha}_{ij}F_iF_j)
 \end{eqnarray}
in the basis that 
\begin{eqnarray}
  \hspace{-1cm}
    \Omega=\begin{pmatrix}
        0 & 1 \\ 1 & 0
    \end{pmatrix}, a=\begin{pmatrix}
        -2 \\-2
    \end{pmatrix}, b_k={1\over 2 }\lambda_k 
\begin{pmatrix}
    \alpha_k \\ \tilde{\alpha}_k
\end{pmatrix},\ b_{ij}={1\over 2 }\lambda_{ij} 
\begin{pmatrix}
    \alpha_{ij} \\ \tilde{\alpha}_{ij}
\end{pmatrix}, \ j={1\over \sqrt{2}}\begin{pmatrix}
    e^\phi \\ e^{-\phi}
\end{pmatrix}
\end{eqnarray}

Let us consider all the anomaly conditions:

 \begin{itemize}
 	\item ${\label{eqn:R4} R^4:  \ \ H-V=273-29T}$
 	\item$ {\label{eqn:F4}trF^4: \ \ 0=B^i_{Adj}-\sum n_R^i B^i_R}$
 	\item ${\label{eqn:R22}(R^2)^2: \  a\cdot a=a^\alpha\Omega_{\alpha \beta }a^\beta  =9-T}$
 	\item ${\label{eqn:F2R2} trF^2R^2: \ -a\cdot b_i=-a^\alpha\Omega_{\alpha \beta }b_i^\beta   =-\frac{1}{6}\lambda_i (A^i_{Adj}-\sum_Rn_R^iA^i_R)} = \alpha_i +\tilde{\alpha_i}$
  \item ${\label{eqn:aF2R2} F_i F_jR^2: \ -a\cdot b_{ij}=\frac{1}{6}\sum_I M_I q_{I,i}q_{I,j}} = \alpha_{ij} +\tilde{\alpha_{ij}}$
 	\item ${\label{eqn:F22}(trF^2)^2: \  b_i \cdot b_i =b_i^\alpha\Omega_{\alpha \beta }b_i^\beta  =\frac{1}{3}\lambda_i^2 (\sum_R n_R^iC^i_R-C^i_{Adj}) }={1\over 2}\alpha_i \tilde{\alpha_i}$
  \item ${\label{eqn:aF22}F_iF_jF_kF_l: \  b_{ij}\cdot b_{kl}+b_{ik}\cdot b_{jl}+b_{il}\cdot b_{kj}  =\sum_I M_I q_{I,i}q_{I,j}q_{I,k}q_{I,l} }$ $$\quad \quad \quad \quad \quad \quad \quad \quad \quad \quad \quad \quad ={1\over 2}(\alpha_{ij} \tilde{\alpha_{ik}}+\alpha_{ik} \tilde{\alpha_{jl}}+\alpha_{il} \tilde{\alpha_{kj}})$$
  \item ${\label{eqn:aFF}F_iF_jtrF_k^2: \  b_k \cdot b_{ij} =\sum_I  M_I^k \lambda_{k} A_Rq_{I,i}q_{I,j}  }$ 
  \item $F_i^3trF_k: 0= \sum_I M^k_IE^I_kq_{I,i}$
 	\item ${\label{eqn:F2F2}trF^2_itrF^2_j: \ b_i \cdot b_j=b_i^\alpha\Omega_{\alpha \beta }b_j^\beta  = \sum_{R,S}\lambda_i \lambda_jn_{RS}^{ij}A^i_RA^j_S ={1\over 4}(\alpha_i \tilde{\alpha_j}+\alpha_j \tilde{\alpha_i}) \ \ \ i\neq j }$
 	 \end{itemize}
where $tr_RF^3=E_RtrF^3$.

\subsection*{Theory 1}
\begin{equation}
E_8\times E_7\times U(1)
\end{equation}

\begin{itemize}
    \item \textbf{Untwisted Sector}
    
    1 \text{Tensor}
    \item \textbf{Twisted Sector}
    \begin{align}
(\textbf{1},\textbf{56})_{(5,4,3,2,1,0)}^{(1,1,2,3,1,2)} +
(\textbf{1},\textbf{1})_{(5,4,3,2,1,0)}^{(1,1,2,3,1,2)}    
\end{align}
\end{itemize}

The total for $E_8$ is $10 (\textbf{56})$ and $66$ charged under U(1) only giving a total of $H_c=626$ which satisfies the gravitational anomalies.

We have that 
\begin{equation}
-6 a\cdot b_{U(1)}=    \sum_i n_i q_i^2=42, \  3 b_{U(1)}^2=\sum_i n_i q_i^4=9, b_{U(1)}\cdot b_{E_7}= \sum_i {12 }n_{56,i} q_i^2=6
\end{equation}
Note that in fact for $T=1$ these sums are fixed by anomalies and hence matching our expectations.

\begin{equation}
    E7: A_{56}=1,C_{56}=1/24,A_{adj}=3,C_{adj}=1/6
\end{equation}
\begin{equation}
    E8: A_{adj}=1,C_{adj}=1/100
\end{equation}

with anomaly lattice:
\begin{equation}
    \begin{pmatrix}
        8 &-14& 10& -14 \\
        -14 & 12& 0 & 12\\
        10 & 0 &-12 &0 \\
        -14&12 & 0 & 12
    \end{pmatrix}
\end{equation}
\begin{equation}
a=(-2,-2),   \  b_{E_7}=(1,6)=b_{U(1)},  \  b_{E_8}=(1,-6)
\end{equation}
such the anomaly polynomial factorizes as
\begin{equation}
    {-1\over 16 }(trR^2-{1\over 6}trF_{E_7^2}-{1\over 30}tr^2_{E_8}-2tr^2_{U(1)})\wedge (trR^2-trF_{E_7^2}+{1\over 5}tr^2_{E_8}-12tr^2_{U(1)})
\end{equation}
The expectation is that this theory corresponds  to an elliptic threefold with the base $\mathbb{F}_{12}$ since it contains the non-Higgsable $E_8$ and $10\times \textbf{56}$ can completely Higgs the $E_7$.

Note that the references \cite{Park:2011wv,Erler:1993zy} are off by an order of 2 for the values of the abelian vectors of this model. Their error traces back to the correct normalization of the currents as described in \autoref{app:abelian}. In our case the  U(1) basis is $V^1_Q=(6,0^{16})$ with $(V^1_Q)^2=72$ (in $\alpha$-basis). Therefore as explained in \autoref{app:abelian} the physical normalization for the charges it $Q'={Q\over \sqrt{72}}$.

Note that as one would expect the untwisted sector has no other moduli other than the dilaton. This indicates that as described in \autoref{sec2:lattices} the quasicrystalline action fixes the $G_{ij},B_{ij}$ Narain moduli.

\subsection*{Theory 2}

\begin{equation}
 E_8 \times SO(12) \times SU(2)      \times U(1)   
\end{equation}

\begin{itemize}
    \item \textbf{Untwisted Sector}
\begin{equation*}
(\textbf{1},\bar{\textbf{32}},\textbf{1})_{(1,2,1)}  ^{(6,2,0)} 
\end{equation*}
\item \textbf{Twisted Sectors}
\begin{align*}  &(\textbf{1},\bar{\textbf{32}},\textbf{1})_{(2,1)}  ^{(2,0)}   +(\textbf{1},\textbf{32},\textbf{1})_{(5,3,1)}^{(1,2,1)}  +(\textbf{1},{\textbf{12}},\textbf{2})_{(4,2,0)}^{(1,3,2)}  \\
&  +(\textbf{1},{\textbf{1}},\textbf{2})_{(7,5,9,3)}^{(5,7,1,3)} +(\textbf{1},\textbf{1},\textbf{1})_{(2,10,4,8,6)}^{(2,2,12,6,12)} 
\end{align*}
\end{itemize}

All anomaly cancellation conditions are satisfied and the  anomaly coefficients are given by 
\begin{align}
   & b_{SO(12)}^2=12,     a\cdot b_{SO(12)}=-14\\
     & b_{SU(2)}^2=12,     a\cdot b_{SU(2)}=-14\\
      & b_{U(1)}^2=12,     a\cdot b_{U(1)}=-14\\
    & b_{E_8}^2=-12,     a\cdot b_{E_8}=10\\
\end{align}
where solutions are given by 
\begin{equation}
a=(-2,-2),   \  b_{SO(12)}=(1,6)=b_{SU(2)}=b_{U(1)},  \  b_{E_8}=(1,-6)
\end{equation}
The charges are expressed in the U(1) basis $V^1_Q=(0^{15},6)$ with $(V^1_Q)^2=72$ (in $\alpha$-basis). Therefore as explained in \autoref{app:abelian} the physical normalization for the charges it $Q'={Q\over \sqrt{72}}$.

\subsection*{Theory 3}

\begin{eqnarray}
    E_8\times SO(10)\times SU(3)\times U(1)
\end{eqnarray}

\begin{itemize}
    \item \textbf{Untwisted Sector} 
    \begin{eqnarray}
(\textbf{1},\textbf{16},\textbf{1})_{(15)}^{(1)} +(\textbf{1},\textbf{10},\textbf{3})_{(-10)}^{(1)} 
    \end{eqnarray}
    \item \textbf{Twisted Sectors}
    \begin{eqnarray}
(\textbf{1},\textbf{16},\textbf{1})_{(3,-9)}^{(10,5)}        +(\textbf{1},\textbf{10},\textbf{3})_{(2)}^{(5)}     +(\textbf{1},\textbf{1},\textbf{1})_{(12)}^{20)}  
    \end{eqnarray}
\end{itemize}

\begin{align}
   & b_{E_8}^2=-12,   \quad  a\cdot b_{E_8}=-10\\
    &  b_{SO(10)}^2=b_{SU(3)}^2=b_{U(1)}^2=12, \quad  a\cdot b_{SO(10)}=a\cdot b_{U(1)}=14\\
\end{align}
where solutions are given by 
\begin{equation}
a=(-2,-2),   \  b_{SO(12)}=(1,6)=b_{SU(2)}=b_{U(1)},  \  b_{E_8}=(1,-6)
\end{equation}
The charges are expressed in the U(1) basis $V^1_Q=(0^{12},10, 15, 10, -5)$ with $(V^1_Q)^2=300$ (in $\alpha$-basis). Therefore as explained in \autoref{app:abelian} the physical normalization for the charges it $Q'={Q\over \sqrt{300}}$.
\subsection*{Theory 4}
\begin{eqnarray}
   E_8\times  SU(4)\times SU(4)\times  SU(2)\times U(1)
\end{eqnarray}

\begin{align}
   & b_{E_8}^2=-12,   \quad  a\cdot b_{E_8}=-10\\
    &  b_{SO(10)}^2=b_{SU(3)}^2=b_{U(1)}^2=12, \quad  a\cdot b_{SO(10)}=a\cdot b_{U(1)}=14\\
\end{align}
\begin{equation}
a=(-2,-2),   \  b_{SO(10)}=(1,6)=b_{SU(3)}=b_{U(1)},  \  b_{E_8}=(1,-6)
\end{equation}
The charges are expressed in the U(1) basis $V^1_Q=(0^{11},4,0^4)$ with $(V^1_Q)^2=32$ (in $\alpha$-basis). Therefore as explained in \autoref{app:abelian} the physical normalization for the charges it $Q'={Q\over \sqrt{32}}$.
\subsection*{Theory 5}
\begin{eqnarray}
   SU(9)\times  SO(12)\times SU(2)\times U(1)
\end{eqnarray}
\begin{itemize}
    \item \textbf{Untwisted Sector:}
    0
    \item \textbf{Twisted Sector:}
 \begin{align}
    &  ( \textbf{9,1,2})_{(1,1)}^{(1,3)}+   ( \textbf{9,1,1})_{(2,0)}^{(2,6)}+   ( \textbf{1,1,2})_{(3,1)}^{(1,6)}+   ( \textbf{1,32,1})_{(1,0)}^{(1,1)}+   ( \textbf{1,12,2})_{(0)}^{(2)}\\ & +   ( \textbf{1,12,1})_{(1)}^{(4)}+   ( \textbf{1,1,2})_{(1)}^{(3)}+   ( \textbf{36,1,1})_{(0)}^{(2)} +  ( \textbf{1,1,1})_{(2)}^{(10)}
       \end{align} 
\end{itemize}
\begin{align}
   & b_{SU(9)}^2=b_{SO(12)}^2=0,   \quad  a\cdot b_{SU(9)}=  a\cdot b_{SO(12)}=-2\\
    &  b_{SU(2)}^2=8, \quad  a\cdot b_{SO(10)}=-10,\quad b^2_{U(1)}=4, a\cdot b_{U(1)}=-6\\
\end{align}
\begin{equation}
a=(-2,-2),   \  b_{SU(9)}=(0,1)=b_{SO(12)},  \  b_{SU(2)}=(4,1),\ b_{U(1)}=(2,1)
\end{equation}
The charges are expressed in the U(1) basis $V^1_Q=(0^{9},2, 4, 4, 4, 4, 2, 2)$ with $(V^1_Q)^2=8$ (in $\alpha$-basis). Therefore as explained in \autoref{app:abelian} the physical normalization for the charges it $Q'={Q\over \sqrt{8}}$.

\subsection*{Theory 6}
\begin{eqnarray}
   E_7\times   U(1)
\end{eqnarray}
\begin{itemize}
    \item \textbf{Untwisted Sector:}
    0
    \item \textbf{Twisted Sector:}
 \begin{align}
    &  ( \textbf{1})_{(2,\ 3,\ 4,\ 5,\ 6,\ 7)}^{(82,64,42,28,12,4)}
       \end{align} 
\end{itemize}
\begin{align}
   & b_{E_7}^2=-8,   \quad  a\cdot b_{E_7}=6\\
    &   b^2_{U(1)}=8, a\cdot b_{U(1)}=-10\\
\end{align}
\begin{equation}
a=(-2,-2),   \  b_{E_7}=(0,1),  \  b_{U(1)}=(4,1)
\end{equation}
The charges are expressed in the U(1) basis $V^1_Q=(0^{5},5,0^2)$ with $(V^1_Q)^2=50$ (in $\alpha$-basis). Therefore as explained in \autoref{app:abelian} the physical normalization for the charges it $Q'={Q\over \sqrt{50}}$.

\subsection*{Theory 7}
\begin{eqnarray}
   E_6\times   U(1)^2
\end{eqnarray}

\begin{itemize}
    \item \textbf{Untwisted Sector:}
    0
    \item \textbf{Twisted Sector:}
 \begin{align}
    &   ( \textbf{1})_{([7,0],\ [3,0],\ [4,0],\ [6,0],[0,3],[1,0],[2,0],[5,0])}^{(\ 1,  \quad \ 9,\quad  \ 14, \quad  6, \ \ 20, \  \  67, \ \ 72, \ \  27 \ )}                    \\  & ( \textbf{1})_{( [2,-3],\ [-3,3],[5,3],[5,-3],[2,3],[3,3],[1,-3],[-4,3],[6,3])}^{(\ \   6, \quad \ \ 5, \quad \ 3, \quad \ 4, \quad 24, \ \ 18,  \quad 34, \quad 12, \quad 2 \ )}       
       \end{align} 
\end{itemize}
\begin{align}
   & b_{E_6}^2=-6,   \quad  a\cdot b_{E_6}=4\\
    &   b^2_{U(1)_1}=   b^2_{U(1)_2}=6, a\cdot b_{U(1)_1}=a\cdot b_{U(1)_2}=-8\\ & b_{U(1)_1}\cdot b_{U(1)_2}=6
\end{align}
\begin{equation}
a=(-2,-2),   \  b_{E_6}=(-3,1),  \  b_{U(1)}=(3,1)
\end{equation}
The charges are expressed in the U(1) basis $V^1_Q=(0^{6},5,0),V^2_Q=(0^{5},4,2,0)$ with $(V^1_Q)^2=50,(V^1_Q)^2=24$ (in $\alpha$-basis). Therefore as explained in \autoref{app:abelian} the physical normalization for the charges it $Q'_1={Q_1\over \sqrt{50}}, \ Q'_2={Q_2\over \sqrt{24}}$.

\section{4d matter}\label{app:4dmatter}
Here, we present the representations of the charged fermions for the models in \autoref{tab:4d-N=1}. In \autoref{tab:4d-model3}, we give the charges of the chiral matter for model 3, and in \autoref{tab:4d-model4} for model 4. We only denote the left-handed Weyl fermion charges.
\renewcommand{\baselinestretch}{1.5}
\begin{table}[h!] 
\begin{align*}
\hspace{-0.5cm}
\resizebox{16cm}{!}{$ 
    \begin{array}{|ccc|}
    \hline
    \multicolumn{3}{|c|}{\text{4d }\mathcal N=1\text{ Model No. 3 Matter Charges}}\\
    \hline
 7\times (\mathbf{1},\mathbf{1},\mathbf{1},\mathbf{1},\mathbf{1})_{2, -2, -2, 2, 0, 2, 0} &1\times (\mathbf{1},\mathbf{\overline{4}},\mathbf{2},\mathbf{1},\mathbf{1})_{-2, 1, 1, -1, 2, -1, 2}&1\times (\mathbf{1},\mathbf{4},\mathbf{1},\mathbf{1},\mathbf{1})_{0, 2, 1, 0, 0, 1, 0}\\1\times (\mathbf{1},\mathbf{6},\mathbf{1},\mathbf{1},\mathbf{2})_{2, 0, 1, 0, -2, 0, -2}&1\times (\mathbf{1},\mathbf{1},\mathbf{1},\mathbf{1},\mathbf{2})_{-2, -2, -5, 2, 2, 2, 2}&1\times (\mathbf{\overline{4}},\mathbf{1},\mathbf{1},\mathbf{1},\mathbf{1})_{-1, 2, 0, 0, 0, 1, -1}\\1\times (\mathbf{1},\mathbf{1},\mathbf{1},\mathbf{2},\mathbf{1})_{2, -3, 0, 1, -1, -3, 1}&1\times (\mathbf{1},\mathbf{1},\mathbf{1},\mathbf{2},\mathbf{1})_{0, 1, 0, 1, 1, 1, 3}& 1\times (\mathbf{1},\mathbf{1},\mathbf{1},\mathbf{1},\mathbf{2})_{0, -3, -2, 1, 2, 0, 2}\\1\times (\mathbf{\overline{4}},\mathbf{1},\mathbf{1},\mathbf{1},\mathbf{1})_{1, -2, 0, 0, 2, 1, 1}& 2\times (\mathbf{1},\mathbf{4},\mathbf{1},\mathbf{1},\mathbf{1})_{2, -2, 1, 0, 0, -1, 0} &1\times (\mathbf{4},\mathbf{1},\mathbf{1},\mathbf{2},\mathbf{1})_{-1, 1, 0, -1, -1, 0, -2}\\ 2\times (\mathbf{1},\mathbf{1},\mathbf{1},\mathbf{1},\mathbf{2})_{-2, 1, -2, 1, 2, 2, 2}&2\times (\mathbf{1},\mathbf{4},\mathbf{1},\mathbf{1},\mathbf{2})_{1, -1, -2, 1, -1, 1, -1}&2\times (\mathbf{1},\mathbf{1},\mathbf{2},\mathbf{1},\mathbf{1})_{1, 2, 4, -2, -1, -2, -1}\\2\times (\mathbf{1},\mathbf{4},\mathbf{1},\mathbf{1},\mathbf{1})_{-3, 1, 1, -1, 3, -1, 3}&2\times (\mathbf{1},\mathbf{1},\mathbf{1},\mathbf{1},\mathbf{1})_{-1, 1, 0, -1, 3, 4, -1}&2\times (\mathbf{4},\mathbf{1},\mathbf{1},\mathbf{2},\mathbf{1})_{0, 0, 0, 0, 0, 0, -1}\\2\times (\mathbf{1},\mathbf{1},\mathbf{1},\mathbf{2},\mathbf{1})_{-1, 2, 0, 0, -2, -1, 0}&2\times (\mathbf{\overline{4}},\mathbf{1},\mathbf{1},\mathbf{1},\mathbf{1})_{0, -1, 0, -1, -1, -1, -2}&2\times (\mathbf{6},\mathbf{1},\mathbf{1},\mathbf{1},\mathbf{1})_{1, -1, 0, 1, 1, 0, 3}\\2\times (\mathbf{1},\mathbf{1},\mathbf{2},\mathbf{1},\mathbf{2})_{1, 0, 1, 0, -1, 0, -1}&1\times (\mathbf{1},\mathbf{4},\mathbf{1},\mathbf{1},\mathbf{1})_{-1, 0, 0, -1, 1, 0, -1}&1\times (\mathbf{1},\mathbf{1},\mathbf{1},\mathbf{2},\mathbf{1})_{-1, -2, -2, 1, 1, -1, 3}\\1\times (\mathbf{1},\mathbf{1},\mathbf{2},\mathbf{1},\mathbf{2})_{1, 0, -1, 1, 0, 2, 0}&1\times (\mathbf{\overline{4}},\mathbf{1},\mathbf{1},\mathbf{1},\mathbf{1})_{0, 2, 3, -1, -1, -2, 0}&4\times (\mathbf{1},\mathbf{1},\mathbf{1},\mathbf{1},\mathbf{2})_{1, -1, -1, 0, -1, 1, -3}\\4\times (\mathbf{1},\mathbf{1},\mathbf{1},\mathbf{1},\mathbf{1})_{1, 0, 3, -1, -1, -3, 1}&2\times (\mathbf{1},\mathbf{1},\mathbf{1},\mathbf{2},\mathbf{1})_{-2, -1, -2, 0, 2, 1, 0}&2\times (\mathbf{1},\mathbf{1},\mathbf{1},\mathbf{1},\mathbf{2})_{2, -2, -1, 1, -2, -1, 0}\\2\times (\mathbf{1},\mathbf{1},\mathbf{2},\mathbf{1},\mathbf{1})_{0, 2, 3, -1, 1, 0, 1}&6\times (\mathbf{1},\mathbf{1},\mathbf{1},\mathbf{1},\mathbf{1})_{0, 1, 3, -2, 0, -1, -2}&5\times (\mathbf{1},\mathbf{1},\mathbf{1},\mathbf{1},\mathbf{1})_{-1, 0, -1, 1, 1, -1, 3}\\5\times (\mathbf{1},\mathbf{1},\mathbf{1},\mathbf{1},\mathbf{1})_{-3, 2, -1, -1, 1, 1, -1}&5\times (\mathbf{1},\mathbf{1},\mathbf{2},\mathbf{1},\mathbf{1})_{3, -1, 1, 0, -2, 0, -2}&5\times (\mathbf{1},\mathbf{1},\mathbf{1},\mathbf{1},\mathbf{1})_{1, 2, 1, -1, -3, 1, -5}\\6\times (\mathbf{1},\mathbf{1},\mathbf{2},\mathbf{1},\mathbf{1})_{-1, -1, -1, 0, 2, 0, 2}&6\times (\mathbf{1},\mathbf{1},\mathbf{1},\mathbf{1},\mathbf{1})_{3, 0, 1, 1, -3, -1, -1}&1\times (\mathbf{1},\mathbf{1},\mathbf{1},\mathbf{1},\mathbf{2})_{1, -1, 1, 0, -1, -3, 1}\\1\times (\mathbf{1},\mathbf{1},\mathbf{1},\mathbf{1},\mathbf{1})_{-1, -2, -3, 1, 3, 3, 1}&2\times (\mathbf{1},\mathbf{1},\mathbf{2},\mathbf{1},\mathbf{1})_{2, -1, 0, 1, -1, -1, 1}&2\times (\mathbf{1},\mathbf{1},\mathbf{1},\mathbf{1},\mathbf{1})_{0, 2, 0, 0, -2, 0, -2}\\2\times (\mathbf{1},\mathbf{1},\mathbf{2},\mathbf{1},\mathbf{1})_{0, 1, 0, -1, -1, 1, -3}&2\times (\mathbf{1},\mathbf{1},\mathbf{1},\mathbf{1},\mathbf{1})_{2, -2, 0, 0, 0, 0, 0}&2\times (\mathbf{1},\mathbf{1},\mathbf{1},\mathbf{1},\mathbf{1})_{0, -2, 0, 0, 0, -2, 0}\\2\times (\mathbf{1},\mathbf{1},\mathbf{1},\mathbf{1},\mathbf{1})_{-2, 2, 0, 0, 2, 2, 2}&2\times (\mathbf{\overline{4}},\mathbf{1},\mathbf{1},\mathbf{1},\mathbf{1})_{-1, 0, 0, 0, 2, 1, 1}&2\times (\mathbf{4},\mathbf{1},\mathbf{1},\mathbf{1},\mathbf{1})_{-1, 0, 0, 0, 0, -1, 1}\\3\times (\mathbf{1},\mathbf{1},\mathbf{1},\mathbf{1},\mathbf{1})_{2, -2, 0, 0, -2, -2, -2}&3\times (\mathbf{1},\mathbf{1},\mathbf{1},\mathbf{1},\mathbf{1})_{0, 2, 0, 0, 0, 2, 0}&3\times (\mathbf{4},\mathbf{1},\mathbf{1},\mathbf{1},\mathbf{1})_{1, 0, 0, 0, -2, -1, -1}\\3\times (\mathbf{1},\mathbf{1},\mathbf{1},\mathbf{1},\mathbf{1})_{-2, 2, 0, 0, 0, 0, 0}&3\times (\mathbf{1},\mathbf{1},\mathbf{2},\mathbf{1},\mathbf{1})_{-2, 1, 0, -1, 1, 1, -1}&3\times (\mathbf{1},\mathbf{1},\mathbf{1},\mathbf{1},\mathbf{1})_{0, -2, 0, 0, 2, 0, 2}\\4\times (\mathbf{\overline{4}},\mathbf{1},\mathbf{1},\mathbf{1},\mathbf{1})_{1, 0, 0, 0, 0, 1, -1}&4\times (\mathbf{1},\mathbf{1},\mathbf{2},\mathbf{1},\mathbf{1})_{0, -1, 0, 1, 1, -1, 3}&8\times (\mathbf{1},\mathbf{1},\mathbf{1},\mathbf{2},\mathbf{1})_{0, 0, 1, 0, 2, 0, 2}\\6\times (\mathbf{1},\mathbf{1},\mathbf{1},\mathbf{2},\mathbf{1})_{-1, 1, 1, -1, -1, -2, -1}&6\times (\mathbf{1},\mathbf{1},\mathbf{1},\mathbf{2},\mathbf{1})_{-1, 1, 1, -1, 1, 0, 1}&6\times (\mathbf{1},\mathbf{1},\mathbf{1},\mathbf{1},\mathbf{1})_{1, -1, -2, 1, -3, 0, -3}\\6\times (\mathbf{1},\mathbf{1},\mathbf{1},\mathbf{1},\mathbf{1})_{1, -1, -2, 1, -1, 2, -1}& &\\
\hline
    \end{array}$ }
\end{align*}
\caption{Charged matter for model 3.}
\label{tab:4d-model3}
\end{table}
\renewcommand{\baselinestretch}{1.5}
\begin{table}[h!] 
\begin{align*}
\hspace{-0.5cm}
\resizebox{16cm}{!}{$ 
    \begin{array}{|ccc|}
    \hline
    \multicolumn{3}{|c|}{\text{4d }\mathcal N=1\text{ Model No. 4 Matter Charges}}\\
    \hline
1\times (\mathbf{1},\mathbf{1},\mathbf{2},\mathbf{1},\mathbf{1})_{0, 0, 1, 0, 1, 0, 1, 1, 0, 0, 0}&1\times (\mathbf{1},\mathbf{1},\mathbf{2},\mathbf{1},\mathbf{1})_{0, 0, -1, -1, 1, 2, -4, -1, -1, -1, 0}&1\times (\mathbf{1},\mathbf{1},\mathbf{1},\mathbf{2},\mathbf{1})_{-1, 1, 0, -1, 0, -1, 0, 1, 1, 0, 0}\\1\times (\mathbf{1},\mathbf{1},\mathbf{1},\mathbf{1},\mathbf{1})_{0, 1, -2, -2, 0, 0, -1, 0, -1, 0, 0} &1\times (\mathbf{1},\mathbf{1},\mathbf{1},\mathbf{2},\mathbf{2})_{-1, 1, 0, -2, 1, -2, 0, 1, 0, 1, 0}&1\times (\mathbf{1},\mathbf{1},\mathbf{1},\mathbf{2},\mathbf{1})_{2, -2, 0, 2, -1, 4, 0, -2, 0, -1, 0}\\1\times (\mathbf{1},\mathbf{1},\mathbf{1},\mathbf{2},\mathbf{1})_{0, 0, 0, 1, 0, 2, 0, -1, 0, 1, 0}&1\times (\mathbf{1},\mathbf{1},\mathbf{1},\mathbf{2},\mathbf{2})_{2, 0, 0, 0, -1, 2, 0, -1, 0, 0, 0}&1\times (\mathbf{1},\mathbf{1},\mathbf{1},\mathbf{1},\mathbf{2})_{-2, 0, 0, -1, 1, -4, 0, 2, 0, -1, 0}\\1\times (\mathbf{1},\mathbf{1},\mathbf{1},\mathbf{2},\mathbf{1})_{1, -2, 0, 1, -1, 1, 0, 0, 1, -3, 0}&1\times (\mathbf{1},\mathbf{1},\mathbf{1},\mathbf{1},\mathbf{2})_{2, 1, 0, 1, -2, 2, 0, -1, 0, 0, 0}&1\times (\mathbf{1},\mathbf{1},\mathbf{2},\mathbf{1},\mathbf{1})_{-1, 1, 1, 0, 1, -2, 1, 1, 0, 1, 0} \\1\times (\mathbf{1},\mathbf{1},\mathbf{1},\mathbf{1},\mathbf{2})_{0, 1, 0, -2, 1, 0, 0, 0, 0, 2, 0}&1\times (\mathbf{1},\mathbf{1},\mathbf{1},\mathbf{1},\mathbf{2})_{-1, 1, 0, 0, 0, -2, 0, 1, 0, 0, 0}& 1\times (\mathbf{1},\mathbf{1},\mathbf{2},\mathbf{1},\mathbf{2})_{0, -1, 1, 1, 1, 0, 1, 0, 0, 0, 0}\\1\times (\mathbf{1},\mathbf{2},\mathbf{1},\mathbf{1},\mathbf{1})_{0, 0, 0, 0, 0, 0, 3, 1, 1, 0, -1}&1\times (\mathbf{1},\mathbf{1},\mathbf{2},\mathbf{1},\mathbf{1})_{0, 0, -1, 0, -1, 1, -1, -1, 1, 0, -2}&1\times (\mathbf{1},\mathbf{1},\mathbf{2},\mathbf{1},\mathbf{1})_{0, 0, -1, 0, -1, -1, 3, 1, -1, 0, 2}\\2\times (\mathbf{1},\mathbf{1},\mathbf{1},\mathbf{1},\mathbf{1})_{1, 0, -2, -1, -1, 3, -1, -1, 0, -1, 0} &1\times (\mathbf{2},\mathbf{2},\mathbf{1},\mathbf{1},\mathbf{1})_{0, 0, 2, 1, 0, -2, 3, 1, 1, 1, 0}&2\times (\mathbf{1},\mathbf{1},\mathbf{2},\mathbf{1},\mathbf{1})_{1, 0, 1, 0, 0, 1, 1, 0, 1, 0, 0} \\1\times (\mathbf{2},\mathbf{2},\mathbf{2},\mathbf{1},\mathbf{1})_{0, 0, -1, 0, -1, 0, 1, 0, 0, 0, 0}&1\times (\mathbf{2},\mathbf{2},\mathbf{1},\mathbf{1},\mathbf{1})_{0, 0, 0, 0, 0, 0, -2, -1, 0, 0, 0}&1\times (\mathbf{1},\mathbf{1},\mathbf{1},\mathbf{1},\mathbf{1})_{1, -1, 1, 2, -1, 2, -3, -2, 1, -1, -1}\\1\times (\mathbf{2},\mathbf{1},\mathbf{1},\mathbf{1},\mathbf{1})_{1, -1, 1, 2, -1, 1, 2, 0, 1, -1, 0}&1\times (\mathbf{2},\mathbf{1},\mathbf{1},\mathbf{1},\mathbf{1})_{-1, 0, 1, -1, 1, -3, 2, 2, 1, 0, 0}&1\times (\mathbf{1},\mathbf{1},\mathbf{1},\mathbf{1},\mathbf{1})_{0, -1, 0, 0, 0, 0, -1, 0, 1, -2, 0}\\1\times (\mathbf{2},\mathbf{1},\mathbf{1},\mathbf{1},\mathbf{1})_{0, 0, -1, -1, 0, 0, 0, 1, 0, -2, 0}&1\times (\mathbf{1},\mathbf{1},\mathbf{1},\mathbf{1},\mathbf{1})_{0, 0, 1, 0, 0, -2, 0, 1, 0, -1, 1}&1\times (\mathbf{1},\mathbf{1},\mathbf{1},\mathbf{1},\mathbf{1})_{0, 2, 1, -1, 1, -1, 0, 0, -1, 3, 1}\\1\times (\mathbf{1},\mathbf{1},\mathbf{1},\mathbf{1},\mathbf{1})_{1, -1, 1, 2, 0, 1, 0, -1, -1, 0, 1}&2\times (\mathbf{1},\mathbf{1},\mathbf{1},\mathbf{1},\mathbf{1})_{0, -1, 1, 0, 1, 0, 0, 0, 0, 0, 1}&2\times (\mathbf{1},\mathbf{1},\mathbf{2},\mathbf{1},\mathbf{1})_{0, 0, 0, 0, -1, -1, -1, 0, 1, -1, -1}\\2\times (\mathbf{1},\mathbf{1},\mathbf{1},\mathbf{1},\mathbf{1})_{0, 0, 0, -1, 1, 0, 2, 1, 0, 0, 0}&2\times (\mathbf{1},\mathbf{1},\mathbf{1},\mathbf{1},\mathbf{1})_{1, 1, 0, 0, 0, 2, 2, 0, 0, 1, 0}&1\times (\mathbf{1},\mathbf{1},\mathbf{1},\mathbf{2},\mathbf{1})_{-1, 0, 0, -1, 1, -1, 2, 2, 1, -1, 0}\\1\times (\mathbf{1},\mathbf{1},\mathbf{1},\mathbf{1},\mathbf{1})_{1, -1, 0, 1, -1, 1, 2, 1, 1, -3, 0}&1\times (\mathbf{1},\mathbf{1},\mathbf{1},\mathbf{1},\mathbf{2})_{-1, 1, 0, -1, 1, -2, 2, 2, 0, 0, 0}&1\times (\mathbf{1},\mathbf{1},\mathbf{1},\mathbf{1},\mathbf{1})_{0, -1, 0, 0, 1, 0, 2, 1, 0, -1, 0}\\1\times (\mathbf{1},\mathbf{1},\mathbf{1},\mathbf{1},\mathbf{1})_{0, 0, 0, 0, -1, -1, -1, -1, 0, 1, 0}&1\times (\mathbf{1},\mathbf{1},\mathbf{1},\mathbf{1},\mathbf{1})_{1, 1, 0, 1, -2, 1, -1, -2, 0, 2, 0}&1\times (\mathbf{1},\mathbf{1},\mathbf{1},\mathbf{1},\mathbf{1})_{0, -1, 0, 1, -1, -1, -1, -1, 0, 0, 0}\\1\times (\mathbf{1},\mathbf{1},\mathbf{1},\mathbf{1},\mathbf{1})_{1, -1, 1, 1, 0, 2, -1, -1, 1, -1, -1}&1\times (\mathbf{1},\mathbf{1},\mathbf{1},\mathbf{1},\mathbf{1})_{-2, -1, 1, 0, 2, -2, -1, 1, 1, -1, -1}&1\times (\mathbf{1},\mathbf{1},\mathbf{1},\mathbf{1},\mathbf{1})_{0, 0, 1, 0, 1, -1, -1, 0, 0, 0, -1}\\1\times (\mathbf{1},\mathbf{1},\mathbf{1},\mathbf{1},\mathbf{1})_{0, 1, 1, -1, 1, 0, -1, 0, 1, 1, -1}&1\times (\mathbf{1},\mathbf{1},\mathbf{2},\mathbf{1},\mathbf{1})_{0, 1, 0, -1, 0, -1, 2, 1, 0, 1, 1}&2\times (\mathbf{1},\mathbf{1},\mathbf{1},\mathbf{1},\mathbf{2})_{0, 0, 0, 1, -1, 0, 2, 0, 0, 1, 0}\\2\times (\mathbf{1},\mathbf{1},\mathbf{1},\mathbf{2},\mathbf{1})_{-1, 0, 0, 0, 0, -2, 2, 1, 0, 1, 0}&2\times (\mathbf{1},\mathbf{1},\mathbf{1},\mathbf{1},\mathbf{1})_{1, 0, 0, 1, -2, 1, 2, 0, 1, 0, 0}&3\times (\mathbf{1},\mathbf{1},\mathbf{1},\mathbf{1},\mathbf{1})_{0, -1, 0, 0, -1, -1, 2, 1, 1, -1, 0}\\3\times (\mathbf{1},\mathbf{1},\mathbf{1},\mathbf{1},\mathbf{1})_{1, -1, 0, 2, -2, 0, 2, 0, 0, -1, 0}&4\times (\mathbf{1},\mathbf{1},\mathbf{1},\mathbf{1},\mathbf{1})_{0, 2, 0, -1, -1, -2, 2, 1, 0, 2, 0}&4\times (\mathbf{1},\mathbf{1},\mathbf{1},\mathbf{2},\mathbf{1})_{0, 0, 0, 1, -1, -1, 1, 0, 0, 0, 0}\\4\times (\mathbf{1},\mathbf{1},\mathbf{1},\mathbf{1},\mathbf{2})_{0, -1, 0, 1, -1, 0, 1, 0, 1, -1, 0}&4\times (\mathbf{1},\mathbf{1},\mathbf{1},\mathbf{1},\mathbf{1})_{-2, 1, 0, -1, 1, -3, 1, 1, 0, 2, 0}&3\times (\mathbf{1},\mathbf{1},\mathbf{1},\mathbf{1},\mathbf{1})_{1, 1, 0, 0, -1, 1, 1, -1, 0, 2, 0}\\3\times (\mathbf{1},\mathbf{1},\mathbf{1},\mathbf{1},\mathbf{1})_{0, -1, 0, 0, 0, -1, 1, 0, 0, 0, 0}&3\times (\mathbf{1},\mathbf{1},\mathbf{1},\mathbf{1},\mathbf{1})_{0, 1, 0, -1, -1, -2, 1, 1, 1, 0, 0}&3\times (\mathbf{1},\mathbf{1},\mathbf{1},\mathbf{1},\mathbf{1})_{1, 0, 0, 1, -1, 1, 1, -1, 0, 1, 0}\\1\times (\mathbf{2},\mathbf{1},\mathbf{1},\mathbf{1},\mathbf{1})_{-1, 1, 0, 0, 0, -1, -1, 0, 0, 1, -1}&1\times (\mathbf{2},\mathbf{1},\mathbf{1},\mathbf{1},\mathbf{1})_{1, 0, 0, 0, -1, 2, -1, -1, 1, 0, -1}&1\times (\mathbf{1},\mathbf{2},\mathbf{1},\mathbf{1},\mathbf{1})_{1, 0, 0, 0, -1, 1, 1, 0, 0, 0, 1}\\1\times (\mathbf{1},\mathbf{1},\mathbf{1},\mathbf{1},\mathbf{1})_{1, 0, 0, 1, -3, 0, 1, -1, 1, 1, 0}&1\times (\mathbf{1},\mathbf{1},\mathbf{1},\mathbf{1},\mathbf{1})_{-1, 1, -2, -1, 0, 0, -1, 0, -1, 0, 0}&\\
\hline
    \end{array}$ }
\end{align*}
\caption{Charged matter for model 4.}
\label{tab:4d-model4}
\end{table}

\bibliographystyle{utphys} 
\bibliography{refs}

\end{document}